\documentclass[11pt,letterpaper]{article}
\pdfoutput=1


\usepackage[final=true,colorlinks=true,citecolor=blue]{hyperref}

\usepackage[rgb]{xcolor}

\usepackage{todonotes}

\usepackage{amsmath,amsthm,amssymb,amsfonts}
\usepackage[left=1in,right=1in,top=1in,bottom=1in]{geometry}
\usepackage{authblk}

\usepackage{cite}
\usepackage[outline]{contour}
\usepackage{amssymb,amsmath}
\usepackage{braket}
\usepackage{algorithm}
\usepackage{graphicx}
\usepackage[noend]{algpseudocode}
\usepackage{mathtools}

\usepackage{tikz}
\usetikzlibrary{patterns,calc,arrows,arrows.meta,positioning,fit,shapes}

\usepackage[font=small]{caption} 
\usepackage{subcaption}

\usepackage{thmtools}
\usepackage{thm-restate}

\makeatletter
\def\input@path{{figures/}}
\makeatother

\usepackage[capitalize]{cleveref}

\newtheorem{theorem}{Theorem}[section]
\newtheorem{proposition}{Proposition}
\newtheorem{observation}{Observation}[section]
\newtheorem{lemma}{Lemma}
\newtheorem{corollary}{Corollary}
\newtheorem{definition}{Definition}[section]

\newtheorem{question}{Question}[section]

\newcommand{\Oh}{\mathcal{O}}
\newcommand{\tOh}{\widetilde{\tOh}}

\newcommand{\D}{{\bf D}}

\newcommand{\floor}[1]{\left\lfloor#1\right\rfloor}
\newcommand{\ceil}[1]{\left\lceil#1\right\rceil}
\newcommand{\head}{\textsf{head}}
\newcommand{\tail}{\textsf{tail}}
\newcommand{\Tphrase}{\mathcal{T}_{\mathrm{phrase}}}
\newcommand{\Sparse}{\mathcal{S}_{\mathrm{parse}}}
\newcommand{\Tparse}{T_{\mathrm{parse}}}
\newcommand{\lce}{\textnormal{\textsc{lce}}}

\newcommand{\quotes}[1]{``#1''}
\newcommand{\absolute}[1]{\left\lvert#1\right\rvert}
\newcommand{\tuple}[1]{\left\langle#1\right\rangle}

\usepackage{enumitem}
\setlist[description]{leftmargin=\parindent,labelindent=\parindent}

\makeatletter
\newcommand\thankssymb[1]{\textsuperscript{\@fnsymbol{#1}}}
\makeatother

\makeatletter

\begin{document}

\title{Optimal Square Detection Over General Alphabets}

\author[1]{Jonas Ellert\thanks{Partially supported by the German Research Association (DFG) within the Collaborative Research Center SFB 876, project A6.}}
\author[2]{Paweł Gawrychowski\thanks{Partially supported by the grant ANR-20-CE48-0001 from the French National Research Agency (ANR).}}
\author[3]{Garance Gourdel\thankssymb{2}}

\affil[1]{Technical University of Dortmund, Germany}
\affil[2]{Institute of Computer Science, University of Wrocław, Poland}
\affil[3]{DI/ENS, PSL Research University, IRISA Inria Rennes, France}

\date{}

\maketitle

\begin{abstract}
Squares (fragments of the form $xx$, for some string $x$) are arguably the most natural type of repetition in strings.
The basic algorithmic question concerning squares is to check if a given string of length $n$ is square-free, that is, does not contain
a fragment of such form. Main and Lorentz [J. Algorithms 1984] designed an $\mathcal{O}(n\log n)$ time algorithm
for this problem, and proved a matching lower bound assuming the so-called general alphabet, meaning that the algorithm
is only allowed to check if two characters are equal. 
However, their lower bound also assumes that there are $\Omega(n)$ distinct symbols in the string.
As an open question, they asked if there is a faster algorithm if one restricts
the size of the alphabet. Crochemore [Theor. Comput. Sci. 1986] designed a linear-time algorithm for constant-size alphabets,
and combined with more recent results his approach in fact implies such an algorithm for linearly-sortable alphabets.
Very recently, Ellert and Fischer [ICALP 2021] significantly relaxed this assumption by designing a linear-time algorithm for
general ordered alphabets, that is, assuming a linear order on the characters that permits constant time order comparisons. However, the open question of Main and Lorentz
from 1984 remained unresolved for general (unordered) alphabets.  In this paper, we show that testing square-freeness of a length-$n$ string over
general alphabet of size $\sigma$ can be done with $\mathcal{O}(n\log \sigma)$ comparisons, and cannot be done with $o(n\log \sigma)$ comparisons.
We complement this result with an $\mathcal{O}(n\log \sigma)$ time algorithm in the Word RAM model.
Finally, we extend the algorithm to reporting all the runs (maximal repetitions) in the same complexity.
\end{abstract}

\section{Introduction}

The notion of repetition is a central concept in combinatorics on words and algorithms on strings. In this context,
a word or a string is simply a sequence of characters from some finite alphabet $\Sigma$. In the most basic version,
a repetition consists of two (or more) consecutive occurrences of the same fragment. Repetitions are interesting not
only from a purely theoretical point of view, but are also very relevant in bioinformatics~\cite{Kolpakov2003}.
A repetition could be a square, defined as two consecutive occurrences of the same fragment, a higher power (for example, a cube), or a run, which is a length-wise maximal periodic substring.
For example, both \texttt{anan} and \texttt{nana} are squares with two occurrences each in \texttt{banananas}, and they belong to the same run \texttt{ananana}.
In this paper, we start by focusing on squares, then generalize our results for runs.

The study of squares in strings goes back
to the work of Thue published in 1906~\cite{thue1906}, who considered the question of constructing an infinite word
with no squares. It is easy to see that any sufficiently long binary word must contain a square, and Thue proved that
there exists an infinite ternary word with no squares. His result has been rediscovered multiple times, and in 1979
Bean, Ehrenfeucht and McNulty~\cite{BeanEM1979} started a systematic study of the so-called avoidable repetitions,
see for example the survey by Currie~\cite{Currie05}.

\paragraph{Combinatorics on words.} The basic tool in the area of combinatorics on words is the so-called
periodicity lemma. A period of a string $T[1..n]$ is an integer $d$ such that $T[i]=T[i+d]$ for every $i\in [1,n-d]$,
and the periodicity lemma states that if $p$ and $q$ are both such periods and $p+q\leq n+\gcd(p,q)$ then 
$\gcd(p,q)$ is also a period~\cite{Fine1965}. 
This was generalised in a myriad of ways, for strings~\cite{Castelli1999,Justin2000,Tijdeman2003},
partial words (words with don't cares)~\cite{Berstel1999,Blanchet-Sadri2008,Blanchet-Sadri2002,Shur2004,Shur2001,Idiatulina2014,Kociumaka2022},
Abelian periods\cite{Constantinescu2006,Blanchet-Sadri2013}, parametrized periods~\cite{Apostolico2008},
order-preserving periods~\cite{Matsuoka2016,Gourdel2020}, approximate periods~\cite{Amir2010,Amir2012,Amir2015}.
Now, a square can be defined as a fragment of length twice its period. 
The string $\texttt{a}^{n}$ contains $\Omega(n^{2})$ such fragments,
thus from the combinatorial point of view it is natural to count only distinct squares.
Fraenkel and Simpson~\cite{Fraenkel1998} showed an upper bound of $2n$ and a lower bound of $n-\Theta(\sqrt{n})$ for the maximum number of distinct squares in a length-$n$ string.
After a sequence of improvements~\cite{Ilie2007,Deza2015,Thierry2020}, the upper bound was very recently improved to $n$~\cite{Brlek2022}.
The last result was already generalised to higher powers~\cite{Li2022}.
Another way to avoid the trivial examples such as $\texttt{a}^{n}$ is to count only maximal periodic fragments,
that is, fragments with period at most half of their length and that cannot be extended to the left or to the right without
breaking the period. Such fragments are usually called runs. Kolpakov and Kucherov~\cite{Kolpakov1999} showed
an upper bound of $\Oh(n)$ on their number, and this started a long line of work on determining the exact
constant~\cite{Rytter2006,Puglisi2008,Crochemore2008,Giraud2008,Giraud2009,Crochemore2011}, culminating
in the paper of Bannai et al.~\cite{Bannai2017} showing an upper bound of $n$, and followed by even better upper bounds
for binary strings~\cite{Fischer2015,Holub2017}. This was complemented by a sequence of 
lower bounds~\cite{Franek2008,Matsubara2008,Matsubara2009,Simpson2010}.

\paragraph{Algorithms on strings. }
In this paper, we are interested in the algorithmic aspects of detecting repetitions in strings. The most basic question
in this direction is checking if a given length-$n$ string contains at least one square,
while the most general version asks for computing all the runs.
Testing square-freeness was
first considered by Main and Lorentz~\cite{Main1984}, who designed an $\Oh(n\log n)$ time algorithm based on
a divide-and-conquer approach and a linear-time procedure for finding all new squares obtained when concatenating
two strings. In fact, their algorithm can be used to find (a compact representation of) all squares in a given string
within the same time complexity. They also proved that any algorithm based on comparisons of characters needs
$\Omega(n\log n)$ such operations to test square-freeness in the worst case. Here, comparisons of characters means
checking if characters at two positions of the input string are equal. However, to obtain the lower bound they
had to consider instances consisting of even up to $n$ distinct characters, that is, over alphabet of size $n$.
This is somewhat unsatisfactory, and motivates the following open question that was explicitly asked by Main and Lorentz~\cite{Main1984}:

\begin{question}
Is there a faster algorithm to determine if a string is square-free if we restrict the size of the alphabet?
\end{question}

Another $\Oh(n\log n)$ time algorithm for finding all repetitions was given by Crochemore~\cite{Crochemore1981},
who also showed that for constant-size alphabets testing square-freeness can be done in  $\Oh(n)$ time~\cite{Crochemore1986}.
In fact, the latter algorithm works in $\Oh(n\log \sigma)$ time for alphabets of size $\sigma$ with a linear order on the characters.
That is, it needs to test if the character at some position is smaller than the character at another position.
In the remaining part of the paper, we will refer to this model as general ordered alphabet, while the model
in which we can only test equality of characters will be called general (unordered) alphabet.
Later, Kosaraju~\cite{Kosaraju1994} showed that in fact, assuming constant-size alphabet, $\Oh(n)$ time is enough
to find the shortest square starting at each position of the input string.
Apostolico and Preparata~\cite{Apostolico1983} provide another $\Oh(n\log n)$ time algorithm assuming a general ordered alphabet,
based more on data structure considerations than combinatorial properties of words.
Finally, a number of alternative $\Oh(n\log n)$ and $\Oh(n\log \sigma)$ time algorithms (respectively, for general unordered
and general ordered alphabets) can be obtained from the work on online~\cite{Hong2008,Kosolobov2014,Kosolobov2015a}
and parallel~\cite{Apostolico1996} square detection (interestingly, this cannot be done efficiently in the related
streaming model~\cite{Merkurev2019,Merkurev2022}).

Faster algorithms for testing square-freeness of strings over general ordered alphabets  were obtained as a byproduct of
the more general results on finding all runs. Kolpakov and Kucherov~\cite{Kolpakov1999} not only proved that any length-$n$
string contains only $\Oh(n)$ runs, but also showed how to find them in the same time assuming
linearly-sortable alphabet. Every square is contained in a run, and every run contains at least one square, thus this
in particular implies a linear-time algorithm for testing square-freeness over such alphabets. For general ordered alphabets,
Kosolobov~\cite{Kosolobov2015} showed that the decision tree complexity of this problem is only $\Oh(n)$, and later complemented this with an efficient $\Oh(n(\log n)^{2/3})$ time algorithm~\cite{Kosolobov2016}
(still using only $\Oh(n)$ comparisons). The time complexity was then improved to $\Oh(n\log\log n)$ by providing a general
mechanism for answering longest common extension (LCE) queries for general ordered alphabets~\cite{Gawrychowski2016},
and next to $\Oh(n\alpha(n))$ by observing that the LCE queries have additional structure~\cite{CrochemoreIKKPR16}.
Finally, Ellert and Fischer provided an elegant $\Oh(n)$ time algorithm, thus fully resolving the complexity of square detection
for general ordered alphabets. However, for general (unordered) alphabets the question of Main and Lorentz remains
unresolved, with the best upper bound being $\Oh(n\log n)$, and only known to be asymptotically tight for alphabets of
size $\Theta(n)$.

\paragraph{General alphabets.} While in many applications one can without losing generality assume some
ordering on the characters of the alphabet, no such ordering is necessary for defining what a square is. Thus, it is natural
from the mathematical point of view to seek algorithms that do not require such an ordering to efficiently test square-freeness.
Similar considerations have lead to multiple beautiful results concerning the pattern matching problem, such as constant-space algorithms~\cite{Galil1983,Breslauer1992},
or the works on the exact number of required equality comparisons ~\cite{Cole1995,Cole1997} 
 More recent examples include the work of Duval, Lecroq, and Lefebvre~\cite{Duval2014}
on computing the unbordered conjugate/rotation, and Kosolobov~\cite{Kosolobov2016a} on finding the leftmost critical point.

\paragraph{Main results.}

We consider the complexity of checking if a given string $T[1..n]$ containing $\sigma$ distinct characters is square-free. 
The input string can be only accessed by issuing comparisons $T[i]\stackrel{?}{=} T[j]$, and the value of $\sigma$ is not
assumed to be known. We start by analysing the decision tree complexity of the problem. That is, we only
consider the required and necessary number of comparisons, without worrying about an efficient implementation.
We show that, even if the value of $\sigma$ is assumed to be known, $\Omega(n\log \sigma)$ comparisons are required. 

\begin{restatable}{theorem}{lowerbound}
\label{thm:lowerbound}
For any integers $n$ and $\sigma$ with $8 \leq \sigma \leq n$, there is no deterministic algorithm that performs at most $n \ln \sigma - 3.6n = \Oh(n \ln \sigma)$ comparisons in the worst case, and determines whether a length-$n$ string with at most $\sigma$ distinct symbols from a general unordered alphabet is square-free.
\end{restatable}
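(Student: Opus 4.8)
The plan is to use an adversary argument. An adversary answers the algorithm's equality queries $T[i] \stackrel{?}{=} T[j]$ while maintaining a partial equivalence relation on the positions $[1,n]$ (i.e., a partition into "color classes" of positions known to be equal, together with a set of "known inequalities"). As long as the adversary can keep both possibilities alive—extending the current partial knowledge to a square-free string with at most $\sigma$ symbols, and also to a string containing a square—the algorithm cannot have determined the answer. The core of the argument is to show that if the algorithm has asked fewer than $n\ln\sigma - 3.6n$ comparisons, the adversary still has such freedom.

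The key steps are as follows. First, I would fix a convenient "hard" family of strings: think of $T$ as a concatenation of $n/\sigma$ blocks (or a similar structure), where within each block the adversary is free to assign any of $\sigma$ symbols to each position, and the resulting string is square-free (over a ternary or small square-free backbone) for "most" assignments but acquires a square for a specific coincidence of symbols. A natural choice is to embed the instance in a long square-free word over a small fixed alphabet and then "blow up" $\log\sigma$ worth of free bits per position, so that a square is created exactly when two designated positions at distance $d$ end up equal. Second, I would run the standard information-theoretic / decision-tree counting: the adversary always answers "unequal" unless forced, so each "equal" answer is avoided; the algorithm must, in effect, isolate the unique dangerous coincidence among roughly $\sigma^{\Theta(n)}$ possibilities, and separating the square-containing leaf from all square-free leaves in the decision tree requires $\Omega(n\log\sigma)$ queries. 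Concretely, I would lower-bound the number of distinct leaves (equivalently, the number of distinct answer-vectors consistent with square-free inputs that the algorithm must distinguish from some square-containing input) and take the logarithm, tracking constants carefully to land at $n\ln\sigma - 3.6n$.

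The main obstacle, and where the real work lies, is the adversary's consistency check: after answering "unequal" to a query, the adversary must certify that there still exists a completion to a square-free string using at most $\sigma$ symbols \emph{and} a completion to a non-square-free string, both consistent with all answers given so far. The square-free side is the delicate one—one must argue that refuting a bounded number of equalities cannot force a square, which requires a combinatorial lemma saying that the "square-free blow-up" construction is robust: any assignment respecting a sparse set of inequality constraints can be perturbed into a globally square-free assignment. I would handle this by choosing the construction so that the free bits at each position are genuinely independent of the square-freeness of the backbone (the backbone alone guarantees square-freeness regardless of how the free bits are set), so the adversary never actually risks creating a square on the square-free side, and only needs to count how many queries are needed to rule out the single planted square. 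The remaining care is purely in the constant-chasing to get the $-3.6n$ additive term, e.g.\ via $\sum \ln(\sigma - k)$ type estimates or a clean potential function.

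Balancing this against the upper bound of $\Oh(n\log\sigma)$ comparisons claimed in the abstract, the theorem shows the comparison complexity of square-freeness testing over general unordered alphabets of size $\sigma$ is exactly $\Theta(n\log\sigma)$.
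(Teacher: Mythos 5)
Your high-level framing matches the paper's: an adversary argument using a conflict graph of "known unequal" pairs, a square-free separator backbone (a ternary square-free word over $\{0,1,2\}$) to combine $\Theta(n/\sigma)$ small instances, and the goal of keeping both a square-free and a non-square-free completion alive. But there are two concrete gaps in how you propose to turn this into a count.

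First, your "single planted square" mechanism --- "a square is created exactly when two designated positions at distance $d$ end up equal" --- is the wrong model, and your stated fix for the square-free side makes it self-defeating. You say the backbone alone should guarantee square-freeness regardless of how the free bits are set; but then \emph{every} completion is square-free, the answer is trivially yes, and the adversary has nothing to defend. Conversely, if the free bits can create a square, the adversary \emph{does} need a nontrivial argument that a square-free completion still exists after each "no" answer. The paper resolves this tension by having separator symbols only at block starts (so cross-block squares are impossible, but within-block squares are very much possible), and by coloring a node only once its degree reaches $\sigma/4$, avoiding both the colors of its neighbors in the conflict graph and the colors of nodes in the \emph{same block}; this guarantees a completion where every block is a rainbow, hence square-free, without killing the other side.

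Second, the information-theoretic step as you describe it does not yield the bound. "Isolate the unique dangerous coincidence among $\sigma^{\Theta(n)}$ possibilities" is not a valid lower-bound argument for a yes/no problem: a decision problem can have exponentially many consistent inputs and still be decidable in one query. The paper instead does a Main--Lorentz-style elimination count: in a maximal range $R$ of $m$ colorless positions, there are $m-2\ell+1$ candidate intervals of each even length $2\ell$, and a single edge $(y,y+\ell)$ can eliminate at most $\ell$ of them. Lemma~\ref{lem:MLrestated} shows that unless $\absolute{E\cap R^2}\geq\sum_{\ell}\frac{m-2\ell+1}{\ell}$, some non-eliminated interval can be forced into a square (there are $\geq\frac{\sigma}{2}+2$ usable colors even after avoiding all conflicts). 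Summing this per-range bound, using $\sum(m_i+1)=n$ and the log-sum inequality, and adding the $\frac{\sigma k - 5n}{8}$ comparisons charged to the $k$ colored nodes gives $n\ln\sigma - 3.6n$ after optimizing the free parameter $x=\sigma k/n$. Your proposal never produces this quantitative engine; it gestures at "$\sum\ln(\sigma-k)$ estimates" without the crucial observation that the ranges between colored nodes are where the Main--Lorentz $m\log m$ comes from, and that the coloring rule is what caps $m$ at roughly $\sigma$ so that $\log m$ becomes $\log\sigma$.
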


Next, we show that $\Oh(n\log \sigma)$ comparisons are sufficient. We stress that the value of $\sigma$ is not assumed to
be known. In fact, as a warm-up for the above theorem, we first prove that finding a sublinear multiplicative approximation
of this value requires $\Omega(n\sigma)$ comparisons. This does not contradict the claimed upper bound, as we are only saying
that the number of comparisons used on a particular input string is at most $\Oh(n\log \sigma)$, but might actually be smaller.
Thus, it is not possible to extract any meaningful approximation of the value of $\sigma$ from the number of used comparisons.

\begin{restatable}{theorem}{upperbound}
\label{thm:upperbound}
Testing square-freenes of a length-$n$ string that contains $\sigma$ distinct symbols from a general unordered alphabet can be done with $\Oh(n \log \sigma)$ comparisons.
\end{restatable}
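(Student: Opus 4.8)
The plan is to follow the route, going back to Main and Lorentz and refined in the work on runs, that reduces square detection to a Lempel--Ziv--style factorization plus short local explorations. Concretely, compute a factorization $T = f_1 f_2 \cdots f_z$ in which every factor is either a single fresh character or has an earlier occurrence, and use the standard anchoring property: every square (indeed every run) has an occurrence whose first period contains one of the boundaries $|f_1 \cdots f_{k-1}|$; hence it suffices, for each such boundary $b$, to explore a window of length $\Oh(|f_{k-1}| + |f_k|)$ on each side of $b$ by forward and backward longest common extension (LCE) queries. This part reports a square iff $T$ is not square-free, uses $\Oh(z) = \Oh(n)$ LCE queries and $\Oh(n)$ further characters of exploration, and needs no order on the alphabet, so the whole question reduces to performing the two subtasks --- building the factorization and answering the LCE queries --- with $\Oh(n \log \sigma)$ equality comparisons.

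Over a general unordered alphabet the usual order-based machinery (suffix trees, suffix arrays) is unavailable, and unaffordable anyway: even approximating $\sigma$ --- let alone renaming $T$ over $\{1, \dots, \sigma\}$ --- costs $\Omega(n \sigma)$ comparisons, so the algorithm must never learn such a renaming. The plan is therefore to maintain, while scanning $T$ from left to right, an incrementally built structure of the already-seen text (a compacted-trie / Ukkonen-style automaton) in which the branching at a node is \emph{not} resolved against a sorted set of children --- which would cost $\Theta(\sigma)$ per step with equality only --- but against a bucket refined lazily: a character is compared against bucket representatives only as far as needed, and each such comparison either merges it with an existing class or splits a bucket, i.e.\ makes genuine progress on the character partition witnessed so far. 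Parsing the next factor from position $i$ then amounts to walking this structure; the matching comparisons are globally charged to the total length $\Oh(n)$ of parsed factors (a Z/Ukkonen-style argument prevents rescanning), and the target is that the mismatching / bucket-splitting comparisons sum to $\Oh(n \log \sigma)$. Symmetrically, each LCE query near a boundary is answered by a guided scan: its matching comparisons lie inside runs and therefore total $\Oh(n)$ over all boundary windows, each query contributes $\Oh(1)$ mismatches, and $\Oh(n)$ queries are issued. As soon as any square is found the algorithm stops and answers ``not square-free''.

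The technical heart --- and the step I expect to be the main obstacle --- is the amortized bound of $\Oh(\log \sigma)$ comparisons per text position for the bucket-splitting work: naively, pinning down which previously-seen character equals the current one is a $\Theta(\sigma)$-comparison task, and one must show that across the whole execution this collapses to $\Oh(n \log \sigma)$. The intended argument is a potential function measuring how finely the algorithm has been forced to subdivide the set of distinct characters (a structure of ``depth'' at most $\log \sigma$ once fully refined), together with the observation that the factorization and the boundary windows touch each position only a bounded number of times with genuinely new context, so each position can trigger at most $\Oh(\log \sigma)$ splits before its character is effectively resolved. Making the lazily-refined structure interact correctly with the LCE explorations --- which may reach far-away positions whose characters are only partially resolved --- is the delicate point; the reduction to the factorization-plus-LCE framework and the anchoring argument are by now standard and contribute only $\Oh(n)$ comparisons.
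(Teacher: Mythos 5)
Your plan hinges on computing an exact LZ-style factorisation of $T$ (each factor either a single fresh character or with an earlier occurrence, i.e.\ the $f$-factorisation or LZ factorisation) and then anchoring squares at factor boundaries. This is precisely what the paper rules out: \cref{cor:f-facto} and \cref{cor:LZ} show that \emph{any} deterministic algorithm computing the $f$-factorisation or the LZ factorisation of a length-$n$ string with $\sigma$ distinct symbols over a general unordered alphabet must use $\Omega(n\sigma)$ equality comparisons. The lower bound is information-theoretic (it follows from the $\Omega(n\sigma)$ lower bound on approximating $\sigma$, \cref{thm:inapproxalph}), so it cannot be circumvented by the lazily-refined bucket structure or the potential-function amortization you sketch. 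Intuitively, the refinement argument fails because ``resolving'' a character at one position does not help resolve a later position holding a repeated character: the adversary can always keep the conflict graph sparse enough that a linear fraction of positions have unresolved degree, forcing $\Omega(\sigma)$ work on each. So the hard part you flag is not merely delicate --- it is provably impossible along that route.

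The paper's way out is to weaken the factorisation. It introduces the $\Delta$-approximate LZ factorisation, where each phrase consists of a head of length $<\Delta$ (about which nothing is promised) followed by a tail with an earlier occurrence, subject only to the constraint that the whole phrase is at least as long (minus one) as the true LZ phrase at that position. This object is not unique, is much cheaper to compute (via a sparse suffix tree built over a $\Delta$-difference cover, \cref{lem:sparse,lem:cover,lem:compute}), and it still ``captures'' every square of length $\geq 8\Delta$ (\cref{lem:longhelper,lem:long}). Squares shorter than $8\Delta$ are then found by running Main--Lorentz directly on overlapping blocks of length $8\Delta$. To reach $\Oh(n\log\sigma)$ without knowing $\sigma$, this is run in $\Oh(\log\log n)$ phases with geometrically shrinking $\Delta$ (so $\sigma_{t-1}=\sigma_t^2$), each phase either succeeding cheaply or detecting $\sigma>\tilde\sigma$ and falling back to Main--Lorentz on short blocks; a deactivation scheme amortizes the cost of \cref{lem:long} against text positions that provably cannot host a leftmost square. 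Your anchoring observation and the use of LCE-style exploration near boundaries are in the right spirit, but without relaxing the factorisation to something approximable under equality tests, the plan cannot reach the claimed bound.
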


The proof of the above result is not efficient in the sense that it only restricts the overall number of comparisons, and not the time
to actually figure out which comparisons should be used.  A direct implementation results in a quadratic time algorithm. We first
show how to improve this to $\Oh(n\log \sigma+n\log^{*}n)$ time (while still keeping the asymptotically optimal $\Oh(n\log \sigma)$ number
of comparisons), and finally to $\Oh(n\log \sigma)$. In this part of the paper, we assume the Word RAM model with word of length $\Omega(\log n)$.
We stress that the input string is still assumed to consist of characters that can be only tested for equality, that is, one should
think that we are given oracle access to a functions that, given $i$ and $j$, checks whether $T[i]=T[j]$.

\begin{restatable}{theorem}{upperbound2}
\label{thm:upperbound2}
Testing square-freeness of a length-$n$ string that contains $\sigma$ distinct symbols from a general unordered alphabet can be implemented in $\Oh(n\log \sigma)$ comparisons
and time.
\end{restatable}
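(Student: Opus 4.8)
The plan is to take the comparison-optimal but inefficient strategy underlying Theorem~\ref{thm:upperbound} and equip every sub-procedure it invokes with an efficient Word RAM implementation, so that the total running time matches the $\Oh(n\log\sigma)$ comparison bound. The overall structure is presumably a Main--Lorentz-style divide-and-conquer: split $T$ at the midpoint, recurse on the two halves, and then detect all squares that straddle the boundary. The classical cross-boundary step reduces to a constant number of \emph{longest common extension} (and longest common \emph{suffix}) queries between suffixes/prefixes of the two halves, combined with a periodicity/jumping argument (the Fine--Wilf-type reasoning mentioned in the combinatorics discussion). In the general ordered-alphabet world this is exactly how $\Oh(n)$ time is achieved; the obstacle here is that we have no order, so we cannot precompute a suffix array or an $\Oh(1)$-time LCE data structure in the usual way. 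Instead, the efficient version must (i) maintain a \emph{partition of the positions of $T$ into equivalence classes of equal characters}, refining it as comparisons are made, and (ii) answer the required LCE/LCS queries using only comparisons that can be charged against the $\Oh(n\log\sigma)$ budget.

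Concretely, I would proceed in three stages, mirroring the three bullet points in the paragraph preceding the theorem. \emph{Stage 1 (quadratic $\to$ $\Oh(n\log\sigma+n\log^{*}n)$).} Here the idea is to replace the naive "recompute everything from scratch" bookkeeping with a persistent data structure for the character partition: every time the comparison strategy of Theorem~\ref{thm:upperbound} would test $T[i]\stackrel{?}{=}T[j]$, we first consult the current partition (a union--find structure over positions, plus a \emph{separation} structure recording which classes are known-distinct); only if the answer is not already determined do we spend an actual oracle comparison, after which we either union two classes or record a separation. The $n\log^{*}n$ overhead comes from the inverse-Ackermann cost of union--find together with a single global pass that amortizes the cross-boundary periodicity computations. \emph{Stage 2 (remove the $\log^{*}$).} Replace generic union--find by a tailored structure exploiting that the classes we merge are always associated with \emph{runs/periodic fragments} produced by the recursion, so the union sequence has the "interval union" shape that admits $\Oh(1)$ amortized operations (Gabow--Tarjan-style, or the even simpler incremental-tree structure available when all unions are along a known static forest). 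This is the step I expect to be the main obstacle: one must show that the sequence of merges requested by the square-detection recursion really is confined to a statically known tree (or has bounded "width"), so that the linear-time special-case union--find applies, and simultaneously that the \emph{separation} queries needed to short-circuit comparisons can also be answered in $\Oh(1)$ amortized time — the latter is the delicate part, since "known-distinct" is not a union-closed relation and naively it could require $\Theta(\sigma)$ markers per class.

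For the LCE machinery I would use the same periodicity-driven approach that makes the ordered case linear, but observe that each LCE query issued during the boundary step can be answered by walking the two fragments character by character \emph{through the current partition}: a step of the walk is free if the two positions are already in the same class or already separated, and otherwise it costs one comparison that permanently refines the partition. A standard potential-function argument — the potential being $n\log\sigma$ minus (roughly) $\sum_{\text{classes }C}|C|\log|C|$, which only decreases as classes split and merge — shows the total number of charged comparisons over the whole recursion is $\Oh(n\log\sigma)$, which is exactly the budget from Theorem~\ref{thm:upperbound}; the time is then dominated by this count plus the $\Oh(n)$ arithmetic of the Main--Lorentz recursion skeleton and the $\Oh(1)$-per-operation partition structure from Stage~2. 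Finally, I would note that everything extends verbatim to reporting all runs rather than just deciding square-freeness, since the cross-boundary step already produces a compact description of all straddling squares/runs, matching the last sentence of the abstract.
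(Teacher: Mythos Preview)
Your proposal has a fundamental structural gap. You base the algorithm on Main--Lorentz divide-and-conquer: split at the midpoint, recurse on both halves, handle the boundary. But this recursion has $\Theta(\log n)$ levels, and the cross-boundary step of \cref{lem:conquer} performs $\Theta(|x|+|y|)$ \emph{non-comparison} work per call (iterating over all candidate periods, consulting the prefix/suffix tables, etc.). Summed over the recursion this is $\Theta(n\log n)$ time, regardless of how few oracle comparisons you manage to make. Your claim that ``the time is then dominated by this count plus the $\Oh(n)$ arithmetic of the Main--Lorentz recursion skeleton'' is simply false: the skeleton is $\Theta(n\log n)$, not $\Oh(n)$. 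No amount of clever union--find bookkeeping changes this, because the issue is the number of period candidates you must examine, not the cost of the character tests themselves. Your potential function is also not doing what you claim: it decreases only on \emph{positive} comparisons (unions), while negative comparisons leave the partition unchanged, so it cannot by itself bound the number of ``known-distinct'' tests, which is exactly the quantity you need to control.

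The paper takes an entirely different route that avoids full recursion. It introduces a \emph{$\Delta$-approximate LZ factorisation} (each phrase has a short head and a tail that occurs earlier) and shows such a factorisation captures all squares of length $\geq 8\Delta$ (\cref{lem:longhelper,lem:long}). The algorithm then proceeds in $\Oh(\log\log n)$ \emph{phases}: in phase $t$ it partitions $T$ into blocks of length $(\sigma_t)^2$ with $\sigma_t = 2^{2^{\lceil\log\log n\rceil-t}}$, computes a $(\sigma_t/8)$-approximate factorisation of each block pair via a sparse suffix tree over a difference cover, and uses \cref{lem:long} to find squares of length $\geq \sigma_t$. Crucially, long tails allow one to \emph{deactivate} block pairs in later phases, so the total work telescopes to $\Oh(n)$ rather than $\Oh(n\log\log n)$. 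The $\Oh(n\log\sigma+n\log^{*}n)$ intermediate bound comes not from union--find but from the $\textnormal{ShortLCE}$ data structure of~\cite{Gawrychowski2016}; the final improvement to $\Oh(n\log\sigma)$ (\cref{sec:finalimprov}) replaces that machinery entirely with a bespoke construction: remap ``dense'' length-$\tau$ fragments to an integer alphabet, build ordinary (not sparse) suffix trees over them, assemble an auxiliary string $T'$ of total length $\Oh(m/\sqrt{\Delta})$, and from its suffix tree derive the sparse suffix tree needed for the factorisation. The extra work this incurs is again amortised by deactivating blocks in phases $t{+}2$ and $t{+}3$. None of this resembles the union--find/Gabow--Tarjan picture you sketch.
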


Finally, we also generalize this result to the computation of runs.

\begin{restatable}{theorem}{upperbound2runs}
\label{thm:upperbound:runs}
Computing all runs in a length-$n$ string that contains $\sigma$ distinct symbols from a general unordered alphabet can be implemented in $\Oh(n\log \sigma)$ comparisons
and time.
\end{restatable}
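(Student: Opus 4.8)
The plan is to piggy-back on the recursive algorithm behind Theorem~\ref{thm:upperbound2}. That algorithm, at its core, locates squares by a Main--Lorentz-style divide and conquer~\cite{Main1984}: it splits $T$ at the midpoint into $u$ and $v$, recurses on each half, and then determines every square that crosses the split via (carefully amortised) forward and backward longest-common-extension computations anchored at the split point. For runs I keep exactly this recursion tree, but at each node I report all \emph{runs} crossing the node's midpoint instead of stopping at the first crossing square. This is enough to output every run once, because of the classical observation that a run occupying positions $[i,j]$ with $i<j$ is fully contained in, and crosses the midpoint of, exactly one node of the recursion tree: the unique node $[a,b]\supseteq[i,j]$ whose midpoint $m$ satisfies $i\le m<j$. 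In particular such a run has period $p\le (j-i+1)/2\le (b-a+1)/2$, so while processing a node of length $L$ we only consider periods up to $L/2$.

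At a node of length $L$ I group the relevant periods into the $\Oh(\log L)$ dyadic classes $[2^{k},2^{k+1})$. For a fixed class the runs crossing the midpoint with period in that class are strongly constrained by the periodicity lemma---in particular only $\Oh(1)$ of them---and, exactly as in the known constructions that derive runs from a Main--Lorentz decomposition, they are all recoverable from a constant number of longest-common-extension queries around the midpoint (to detect which periodicities are present), plus $\Oh(1)$ further queries per run actually found (to extend it to its maximal left- and right-maximal occurrence, and to check primitivity of the period, which by the periodicity lemma reduces to $\Oh(1)$ queries per prime divisor of the period; duplicates arising from non-primitive periods are then discarded). Summing $\Oh(\log L)$ over the recursion tree gives $\Oh(n)$ node/class pairs, and by the $\Oh(n)$ bound on the number of runs~\cite{Bannai2017,Kolpakov1999} the total number of runs reported is $\Oh(n)$; hence the procedure as a whole issues only $\Oh(n)$ longest-common-extension queries. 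These are precisely the queries that the machinery of Theorem~\ref{thm:upperbound2} is designed to answer within an $\Oh(n\log\sigma)$ budget of comparisons and time, so the total cost stays $\Oh(n\log\sigma)$.

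The main obstacle is not this bookkeeping but re-establishing the amortisation of Theorem~\ref{thm:upperbound2} in the \emph{reporting} regime. In the decision version one may abort at the first witness, which is exactly what lets highly periodic inputs such as $\mathtt{a}^{n}$ (where $\sigma$ is tiny) be dismissed with few comparisons; when every run must be emitted we must guarantee that the work spent at a node is proportional to the number of runs discovered there plus the very same amortised quantities already charged in the proof of Theorem~\ref{thm:upperbound2}, and never to (node length) $\times\log\sigma$. The delicate point is to turn the ``compact representation of all crossing squares'' produced at a node into its crossing runs using only the character information the algorithm of Theorem~\ref{thm:upperbound2} has already paid for---i.e.\ without issuing fresh comparisons beyond those attributable to newly separated characters or to newly reported runs---so that, for instance, an entire dyadic class of candidate periods is recognised to collapse to a single run on long (near-)unary stretches without a per-period scan. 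Once this accounting is in place, Theorem~\ref{thm:upperbound:runs} follows along the lines of Theorem~\ref{thm:upperbound2}, with the extra $\Oh(n)$ work absorbed into the $\Oh(n\log\sigma)$ bound.
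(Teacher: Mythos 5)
Your proposal mischaracterizes the algorithm behind Theorem~\ref{thm:upperbound2}. That algorithm is \emph{not} a Main--Lorentz recursion over the whole string with per-node LCE computations; it is built around the $\Delta$-approximate LZ factorisation, sparse suffix trees over difference-cover positions, and a phase schedule $\sigma_t = 2^{2^{\lceil\log\log n\rceil - t}}$ with a block-pair deactivation mechanism. Main--Lorentz (Lemma~\ref{lem:classical}) is invoked only as a fallback on short block pairs once the algorithm discovers $\sigma$ is large, and the cross-product lemma (Lemma~\ref{lem:conquer}) is applied only to fragments anchored at long tails. Consequently, the amortisation you would need to ``re-establish in the reporting regime'' does not live inside a split-at-the-midpoint recursion tree, and the plan of grouping periods by dyadic classes at each recursion node has no counterpart in the actual algorithm to piggy-back on.

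Independently of that, the central quantitative claim is wrong, and in a way that cannot be repaired: you assert that at each node a dyadic class of candidate periods can be resolved by $\Oh(1)$ LCE queries, giving $\Oh(n)$ LCE queries in total. By Lemma~\ref{lem:LCE_undordered_original}, $\Oh(n)$ LCE queries would cost only $\Oh(n)$ comparisons, which contradicts the $\Omega(n\log\sigma)$ lower bound of Theorem~\ref{thm:lowerbound}. The fact that only $\Oh(1)$ runs with period in a fixed dyadic class cross a given point does not let you \emph{find} them with $\Oh(1)$ probes: Main--Lorentz must examine every candidate period $p\le L/2$ at a length-$L$ node (via prefix/suffix extension tables), and over general unordered alphabets there is no oracle that identifies which period in a class is realised without probing. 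The periodicity lemma constrains how the realised periods interact; it does not locate them. You yourself flag this as ``the delicate point'' at the end and then declare it handled; it is exactly the step that cannot be handled this way.

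The paper's actual proof (Section~\ref{sec:runs}) has a different shape. It proves run-analogues of the tail lemmas (Lemmas~\ref{lem:longhelper:run} and~\ref{lem:long:run}), extends Main--Lorentz to report crossing runs rather than just detect a crossing square (Lemmas~\ref{lem:conquerruns} and~\ref{lem:divideruns}), and re-runs the phase/deactivation machinery. The genuinely new ingredient---which your proposal has no analogue of---is Section~\ref{sec:copyruns}: runs that are \emph{properly contained in a tail} of the approximate factorisation cannot be afforded by any direct search, so they are obtained by \emph{copying} runs from the tail's earlier occurrence, processing positions left to right with per-position annotation lists; the copying is charged to the runs themselves, which are $\Oh(n)$ in number. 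Without this copying step your account has no mechanism for discovering runs buried deep inside repeated material while keeping the comparison count at $\Oh(n\log\sigma)$.
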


Altogether, our results fully resolve the open question of Main and Lorentz for the case of general unordered alphabets and deterministic algorithms.
We leave extending our lowerbound to randomised algorithms as an open question.

\paragraph{Overview of the methods.}

As mentioned before, Main and Lorentz~\cite{Main1984} designed an $\Oh(n\log n)$ time algorithm for testing square-freeness of
length-$n$ strings over general alphabets. The high-level idea of their algorithm goes as follows. They first designed a procedure for checking,
given two strings $x$ and $y$, if their concatenation contains a square that is not fully contained in $x$ nor $y$ in $\Oh(\absolute{x}+\absolute{y})$ time.
Then, a divide-and-conquer approach can be used to detect a square in the whole input string in $\Oh(n\log n)$ total time.
For general alphabets of unbounded size this cannot be improved, but Crochemore~\cite{Crochemore1986} showed that, for general
ordered alphabets of size $\sigma$, a faster $\Oh(n\log \sigma)$ time algorithm exists. The gist of his approach is to first obtain the so-called
$f$-factorisation of the input string (related to the well-known Lempel-Ziv factorisation), that in a certain sense ``discovers'' repetitive
fragments. Then, this factorisation can be used to apply the procedure of Main and Lorentz on appropriately selected fragments of the input
strings in such a way that the leftmost occurrence of every distinct square is detected, and the total length of the strings on which we apply the
procedure is only $\Oh(n)$. The factorisation can be found in $\Oh(n\log \sigma)$ time for general ordered alphabets of size $\sigma$ by,
roughly speaking, constructing some kind of suffix structure (suffix array, suffix tree or suffix automaton).

For general (unordered)
alphabets, computing the $f$-factorisation (or anything similar) seems problematic, and in fact we show (as a corollary of our lower
bound on approximating the alphabet size) that computing the $f$-factorisation or Lempel-Ziv-factorisation (LZ-factorisation) of a given length-$n$
string containing $\sigma$ distinct characters requires $\Omega(n\sigma)$ equality tests. Thus, we need another approach.
Additionally, the $\Oh(n)$ time algorithm of Ellert and Fischer~\cite{Ellert2021} hinges on the notion of Lyndon words, which is simply
not defined for strings over general alphabets. Thus, at first glance it might seem that $\Theta(n\sigma)$ is the right time complexity
for testing square-freeness over length-$n$ strings over general alphabets of size $\sigma$. However, due to the $\Omega(n\log n)$
lower bound of Main and Lorentz for testing square-freeness of length-$n$ string consisting of up to $n$ distinct characters,
one might hope for an $\Oh(n\log \sigma)$ time algorithm when there are only $\sigma$ distinct characters.

We begin our paper with a lower bound of $\Theta(n\log \sigma)$ for such strings. Intuitively, we show that testing square-freeness
has the direct sum property: $\frac{n}{\sigma}$ instances over length-$\sigma$ strings can be combined into a single instance over length-$n$ string.
As in the proof of Main and Lorentz, we use the adversarial method. While the underlying calculation is essentially the same,
we need to appropriately combine the smaller instances, which is done using the infinite square-free Prouhet-Thue-Morse sequence, and use significantly
more complex rules for resolving the subsequent equality tests. As a warm-up for the adversarial method, we prove that computing
any meaningful approximation of the number of distinct characters requires $\Omega(n\sigma)$ such tests, and that this implies
the same lower bound on computing the $f$-factorisation and the Lempel-Ziv factorisation (if the size of the alphabet is unknown in advance).

We then move to designing an approach that uses $\Oh(n\log \sigma)$ equality comparisons to test square-freeness.
As discussed earlier, one way of detecting squares uses the $f$-factorisation of the string, which is similar to its LZ factorisation.
However, as we prove in Corollary~\ref{cor:f-facto} and \ref{cor:LZ}, we cannot compute either of these factorisations over a general unordered alphabet in $o(n\sigma)$ comparisons.
Therefore, we will instead use a novel type of factorisation, $\Delta$-approximate LZ factorisation, that can be seen as an approximate
version of the LZ factorisation.
Intuitively, its goal is to ``capture'' all sufficiently long squares, while the original LZ factorisation (or $f$-factorisation)
captures all squares. Each phrase in a $\Delta$-approximate LZ factorisation consists of a head of length at most $\Delta$ and a tail
(possibly empty) that must occur at least once before, such that the whole phrase is at least as long as the classical LZ phrase starting
at the same position. Contrary to the classical LZ factorisation, this factorisation is not unique. 
The advantage of our modification is that there are fewer phrases (and there is more flexibility as to what they should be), and
hence one can hope to compute such factorisation more efficiently.

To design an efficient construction method for $\Delta$-approximate LZ factorisation, we first show how to compute a sparse
suffix tree while trying to use only a few symbol comparisons. This is then applied on a set of positions from a so-called
difference cover with some convenient synchronizing properties.
Then, a $\Delta$-approximate LZ factorisation allows us to detect squares of length $\geq 8\Delta$.

The first warm-up algorithm fixes $\Delta$ depending on $n$ and $\sigma$ (assuming that $\sigma$ is known), and uses
the approximate LZ factorisation to find all squares of length at least $8\Delta$. It then finds all the shorter squares by dividing
the string in blocks of length $8\Delta$, and applying the original algorithm by Main and Lorentz on each
block pair. Our choice of $\Delta$ leads to $\Oh(n (\lg \sigma + \lg \lg n))$ comparisons.

The improved algorithm does not need to know $\sigma$, and instead starts with a large $\Delta = \Omega(n)$, and then
progressively decreases $\Delta$ in at most $\Oh(\lg \lg n)$ phases, where later phases detect shorter squares.
As soon as we notice that there are many distinct characters in the alphabet, by carefully adjusting the parameters
we can afford switching to the approach of Main and Lorentz on sufficiently short fragments of the input string. 
Since we cannot afford $\Omega(n)$ comparisons per phase, we use a deactivation technique, where whenever we perform a large
number of comparisons in a phase, we will discard a large part of the string in all following phases.
More precisely, during a given phase, we avoid looking for squares in a fragment fully contained in a tail from an earlier phase.
This leads to optimal $\Oh(n \lg \sigma)$ comparisons.

The above approach uses an asymptotically optimal number of equality tests in the worst case, but does not result in an efficient
algorithm. The main bottleneck is constructing the sparse suffix trees. However, it is not hard to provide an efficient implementation
using the general mechanism for answering LCE queries for strings over general alphabets~\cite{Gawrychowski2016}.
Unfortunately, the best known approach for answering such queries incurs an additional $\Oh(n\log^{*}n)$ in the time complexity,
even if the size of the alphabet is constant. We overcome this technical hurdle by carefully deactivating fragments
of the text to account for the performed work.

Many of our techniques can easily be modified to compute all runs rather than detecting squares. We exploit that the approximate
factorisation reveals long substrings with an earlier occurrence. Hence we compute runs only for the first occurrence of such substrings,
while for later occurrences we simply copy the already computed runs.
By carefully arranging the order of the computation, we ensure that the total time for copying is bounded by the number
of runs, which is known to be $\Oh(n)$.
This way, we achieve $\Oh(n \lg \sigma)$ time and comparisons to compute all runs.

\section{Preliminaries}
\label{sec:prelim}

\paragraph{Strings.}
A string of length $n$ is a sequence $T[1] \dots T[n]$ of characters from a finite alphabet $\Sigma$ of size $\sigma$. The substring $T[i..j]$ is the string $T[i] \cdots T[j]$, whereas the fragment $T[i..j]$ refers to the specific occurrence of $T[i..j]$ starting at position $i$ in $T$. If $i > j$, then $T[i..j]$ is the empty string.
A suffix of $T$ has the form $T[i..n]$.
We say that a fragment $T[i'..j']$ is properly contained in another fragment $T[i..j]$ if $i < i' \leq j' < j$.
A substring is properly contained in $T[i..j]$, if it equals a fragment that is properly contained in $T[i..j]$.
We write $T[i..j)$ a shortcut for $T[i..j-1]$. Similarly, we write $[i, j] = [i, j + 1)$ as a shortcut for the integer interval $\{i, \dots, j\}$. Given two positions $i \leq j$, their longest common extension (LCE) is the length of the longest common prefix between suffixes $T[i..n]$ and $T[j..n]$, formally defined as $\lce(i,j) = \lce(j,i) = \max\{\ell \in \{0, \dots, n - j + 1\} \mid T[i..i+\ell) = T[j..j+\ell)\}.$

\begin{definition}
A positive integer $p$ is a period of a string $T[1 .. n]$ if $T[i] = T[i+p]$ for every $i\in \{1, \dots n-p \}$. The smallest such $p$ is called the period of $T[1..n]$, and we call a string periodic if its period $p$ is at most $\frac{n}{2}$.
\end{definition}

\paragraph{Computational model.}
For a general unordered alphabet $\Sigma$, the only allowed operation on the characters is comparing for equality. In particular, there is no
linear order on the alphabet. Unless explicitly stated otherwise, we will only use such comparisons. A general ordered alphabet has a total order, such that comparisons of the type less-equals are possible. 

In the algorithmic part of the paper, we assume the standard unit-cost Word RAM model with words of length $\Omega(\log n)$,
but the algorithm is only allowed to access the input string $T[1..n]$ by comparisons $T[i]\stackrel{?}{=}T[j]$, which are assumed to take constant time.
We say that a string of length $n$ is over a linearly-sortable alphabet, if we can sort the $n$ symbols of the string in $\Oh(n)$ time. Note that whether or not an alphabet is linearly-sortable depends not only on the alphabet, but also on the string. For example, the alphabet $\Sigma=\{1,\dots,m^{\Oh(1)}\}$ is linearly-sortable for strings of length $n = \Omega(m)$ (e.g., using radix sort), but it is unknown whether it is linearly-sortable for all strings of length $n = o(m)$~\cite{HanT02}.
Our algorithm will internally use strings over linearly-sortable alphabets. We stress that in such strings the characters are not the
characters from the input string, but simply integers calculated by the algorithm.
Note that every linearly-sortable alphabet is also a general ordered alphabet.

\paragraph{Squares and runs.}
A square is a length-$2\ell$ fragment of period $\ell$. 
The following theorem is a classical result by Main and Lorentz~\cite{Main1984}.

\begin{theorem}
\label{lem:classical}
Testing square-freenes of $T[1..n]$ over a general alphabet can be implemented in $\Oh(n\log n)$ time and comparisons.
\end{theorem}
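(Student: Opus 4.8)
The plan is to follow Main and Lorentz's divide-and-conquer strategy. The core ingredient is a subroutine that, given two fragments $u$ and $v$, decides in $\Oh(|u|+|v|)$ comparisons whether the concatenation $uv$ contains a square that is not fully contained in $u$ nor in $v$ (a \emph{straddling} square). Given such a subroutine, testing square-freeness of $T[1..n]$ proceeds by splitting $T$ into halves $T_L = T[1..\lfloor n/2\rfloor]$ and $T_R = T[\lfloor n/2\rfloor{+}1..n]$, recursing on $T_L$ and on $T_R$, and running the subroutine on $(T_L, T_R)$; every square of $T$ is fully inside $T_L$, fully inside $T_R$, or straddles the split point, so this is correct. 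The recurrence $f(n) = 2f(n/2) + \Oh(n)$ solves to $\Oh(n\log n)$, and every step uses only equality comparisons, giving the claimed bound.

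To build the subroutine, classify a straddling square $T[i..i+2\ell)$ of $uv$ (boundary between $u$ and $v$, $|u|=n_1$) by whether its center line lies in $v$ (the second copy $T[i+\ell..i+2\ell)$ is entirely inside $v$) or in $u$; reversing both strings makes the two cases symmetric, so I only treat the first. Parametrising the start as $i = n_1+1-k$, the square condition $T[i..i+\ell) = T[i+\ell..i+2\ell)$ splits into two independent constraints: the first $k$ positions of this comparison match iff the longest common suffix of $u$ and $v[1..\ell]$ has length at least $k$, and the remaining $\ell-k$ positions match iff the longest common prefix of $v$ and $v[\ell{+}1..]$ has length at least $\ell-k$. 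Denoting these quantities $\beta(\ell)$ and $\gamma(\ell)$, a suitable $k$ exists essentially when $\beta(\ell)+\gamma(\ell)\ge \ell$ (one also checks a trivial side condition like $\beta(\ell)\ge 1$, and that the ranges $1\le k\le\min(\ell,n_1)$ and ``square fits in $uv$'' are never binding). Hence $uv$ has a straddling square with center in $v$ iff $\beta(\ell)+\gamma(\ell)\ge\ell$ for some $\ell\in[1,|v|]$.

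It remains to compute all $\gamma(\ell)$ and all $\beta(\ell)$ within $\Oh(|u|+|v|)$ comparisons. The values $\gamma(\ell)$ are read off from the $Z$-array of $v$ (for each $i$, the longest common prefix of $v$ with $v[i..]$), computable in linear time by the standard $Z$-algorithm, which performs only character equality tests. After reversing, the values $\beta(\ell)$ are longest common prefixes of $u^{R}$ with the various suffixes of $v^{R}$; these are obtained by the two-string version of the $Z$-algorithm (first build the $Z$-array of $u^R$, then ``scan'' $v^R$ against it), again in linear time and using equality tests only — importantly, no sentinel symbol and no order on $\Sigma$ is needed. A final $\Oh(|v|)$ scan tests $\beta(\ell)+\gamma(\ell)\ge\ell$, and repeating everything on $(v^{R},u^{R})$ catches the straddling squares whose center lies in $u$.

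The main obstacle I expect is the combinatorial bookkeeping of the second paragraph: proving that the two-part split of the square condition is exactly right, that every straddling square is captured in exactly one of the two center cases, and that the range restrictions on $k$ never exclude a valid witness, so that the single inequality $\beta(\ell)+\gamma(\ell)\ge\ell$ faithfully detects a square. The only model-specific point to watch is that each linear-time primitive ($Z$-array, two-string $Z$-matching, string reversal) is realised with equality comparisons alone, since a general unordered alphabet forbids both ordering and the use of a fresh separator character.
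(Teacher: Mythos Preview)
Your proposal is correct and follows exactly the approach the paper attributes to Main and Lorentz: divide-and-conquer combined with the linear-time straddling-square subroutine (the paper states this as \cref{lem:conquer} without proof, and later spells out essentially the same $\textsf{pref}/\textsf{suf}$ argument you give when proving the runs analogue, \cref{lem:conquerruns}). Your careful handling of the side conditions and the sentinel-free two-string $Z$-computation is appropriate for the general unordered model and matches the intended proof.
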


\noindent The proof of the above theorem is based on running a divide-and-conquer procedure
using the following lemma.

\begin{lemma}
\label{lem:conquer}
Given two strings $x$ and $y$ over a general alphabet, we can test if there is a square in $xy$ that is not fully contained in $x$ nor $y$ in $\Oh(|x|+|y|)$
time and comparisons.
\end{lemma}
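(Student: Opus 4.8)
The plan is to follow the classical approach of Main and Lorentz~\cite{Main1984}. First I would reduce to a one-sided problem by symmetry. Any square $ww$ in $xy$ that is contained neither in $x$ nor in $y$ must contain a character of $x$ and a character of $y$, hence it straddles the boundary; writing $c$ for its center (so that the left copy of $w$ ends at position $c-1$ and the right copy starts at position $c$), either $c\le|x|+1$, in which case the left copy of $w$ lies entirely inside $x$, or $c\ge|x|+1$, in which case the right copy of $w$ lies entirely inside $y$. Since reversing a string preserves squares and carries the second case for $xy$ to the first case for $y^{R}x^{R}$ (the new left copy being the reversal of the old right copy, which lies in $y$), it suffices to detect with $\Oh(|x|+|y|)$ comparisons a crossing square whose left copy lies in $x$, and then run that procedure on both $xy$ and $y^{R}x^{R}$.

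For the one-sided problem I would parametrize the candidate squares by the center $c\in\{1,\dots,|x|+1\}$. A crossing square of half-length $\ell$ with center $c$ exists precisely when $x[c-\ell..c-1]=(xy)[c..c+\ell-1]$, subject to the constraints $\ell\le c-1$ (the left copy fits inside $x$) and $\ell\ge|x|-c+2$ (the square reaches into $y$). Splitting the right copy into its part inside $x$, namely $x[c..|x|]$, and its part inside $y$, namely a prefix of $y$, turns this equality into the conjunction of a condition internal to $x$ (that the suffix $x[c..|x|]$ reoccurs $\ell$ positions earlier in $x$) and a condition relating a prefix of $y$ to the fragment of $x$ of length $\ell-|x|+c-1$ ending at position $c-1$.

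To resolve these conditions for all centers within the claimed budget, I would precompute a constant number of matching arrays: longest common extensions inside $x$, and longest matches between suffixes of $x$ and prefixes of $y$. Crucially, these are exactly what the Morris--Pratt failure function, equivalently the $Z$-algorithm, computes for the appropriate strings assembled from $x$, $y$ and their reversals, in $\Oh(|x|+|y|)$ time using nothing but equality tests between characters; in particular no order on $\Sigma$ is used, which is essential here. A periodicity argument then shows that, for a fixed center $c$, the feasible half-lengths $\ell$ in the admissible range $\{|x|-c+2,\dots,c-1\}$ can be read off from these precomputed values with $\Oh(1)$ additional work, so summing over all $c$ gives $\Oh(|x|+|y|)$ comparisons and time overall.

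The hard part is the combinatorial core of the last two steps: carving the defining equality cleanly into its \quotes{inside $x$} and \quotes{crossing into $y$} parts, and then invoking the periodicity lemma to turn \quotes{does a feasible half-length lie in this interval?} into an $\Oh(1)$ test on the precomputed extensions --- this is where Main and Lorentz's analysis of the structure of the relevant set of periods does the real work. A secondary but unavoidable subtlety is that every auxiliary computation must use equality comparisons only: since $\Sigma$ is unordered, suffix arrays and suffix trees are not available, which is exactly why the Morris--Pratt / $Z$-algorithm machinery --- needing only equality tests --- is the right tool.
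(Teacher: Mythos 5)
The paper does not actually prove \cref{lem:conquer} --- it is the classical Main--Lorentz subroutine, cited as a black box --- but the paper's own proof of the runs analogue, \cref{lem:conquerruns}, exhibits the intended argument, and there the iteration is over the \emph{period} $p$, not the center. Your high-level plan (reduction to a one-sided case by reversal, splitting the defining equality at the $x/y$ boundary, computing the needed extensions with Morris--Pratt or the $Z$-algorithm using equality tests only) is exactly right, but your choice to parametrize by the center $c$ and then, for each fixed $c$, decide in $\Oh(1)$ whether an admissible half-length $\ell$ exists does not go through.

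Concretely, your ``inside $x$'' condition asks that the suffix $x[c..|x|]$ reoccur $\ell$ positions earlier, i.e.\ that $\lce(c-\ell,c)\ge |x|-c+1$ within $x$. With $c$ fixed and $\ell$ varying, the left index $c-\ell$ ranges over a window of size $\Theta(c)$, so this is a query about a one-parameter family of position pairs; a precomputed one-dimensional extension array pins down only one of the two indices, and the set of admissible $\ell$ (the distances from $c$ to earlier occurrences of $x[c..|x|]$) is in general neither an interval nor a single arithmetic progression once the admissible window exceeds $|x|-c+1$. Main and Lorentz instead iterate over $\ell$ and let the start position vary: then both halves of the split condition become forward and backward extensions from the anchors $|x|-\ell+1$ and $|x|+1$, which are fixed once $\ell$ is, so both halves cut out \emph{interval} constraints on the start position, and the $\Oh(1)$ step per $\ell$ is just interval intersection --- no periodicity lemma is required merely to decide existence. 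Swapping your iteration variable from $c$ to $\ell$ turns the rest of your plan into a correct reconstruction of the Main--Lorentz argument.
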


A repetition is a length-$\ell$ fragment of period at most $\frac \ell 2$. 
A run is a maximal repetition. 
Formally, a repetition in $T[1..n]$ is a triple $\tuple{s, e, p}$ with $s,e \in [1, n]$ and $p \in [1, \frac{e - s + 1}{2}]$ such that $p$ is the smallest period of $T[s..e]$. 
A run is a repetition $\tuple{s, e, p}$ that cannot be extended to the left nor to the right with the same period, in other words
$s = 1$ or $T[s-1]\neq T[s-1+p]$ and $e = n$ or $T[e+1] \neq T[e+1-p]$.
The celebrated runs conjecture, proven by Bannai et al.~\cite{Bannai2017}, states
that the number of runs is any length-$n$ string is less than $n$.
Ellert and Fischer~\cite{Ellert2021} showed that all runs in a string over a general ordered alphabet can be computed in $\Oh(n)$ time.
As mentioned earlier, each run contains a square, and each square is contained in a run.
Thus, the string contains a square if and only if it contains a run, and it follows:
\begin{theorem}
\label{lem:fasterclassical}
Computing all runs (and thus testing square-freeness) of $T[1..n]$ over a general ordered alphabet can be implemented in $\Oh(n)$ time.
\end{theorem}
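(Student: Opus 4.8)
The plan is to obtain this statement as an immediate consequence of the linear-time runs algorithm of Ellert and Fischer~\cite{Ellert2021} together with the classical correspondence between squares and runs. First I would invoke that algorithm as a black box: over a general ordered alphabet it outputs the list of all runs of $T[1..n]$ in $\Oh(n)$ time, and by the runs theorem of Bannai et al.~\cite{Bannai2017} this list has fewer than $n$ entries, so the running time is genuinely $\Oh(n)$ and is not hidden behind the output size. This already proves the ``computing all runs'' part of the statement.

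For the ``and thus testing square-freeness'' part, I would verify that $T$ is square-free if and only if it has no run, so that it suffices to report square-freeness precisely when the returned list of runs is empty. One direction is immediate: if $\tuple{s,e,p}$ is a run then $p\le\frac{e-s+1}{2}$, hence $T[s..s+2p)$ is a fragment of length $2p$ and period $p$, i.e.\ a square. For the other direction, suppose $T[i..i+2\ell)$ is a square; let $p$ be its \emph{smallest} period, which divides $\ell$ by the periodicity lemma and is in particular a period of $T[i..i+2\ell)$. Extend this occurrence maximally to the left and to the right while preserving period $p$, obtaining a fragment $T[s..e]$ with $s\le i$, $e\ge i+2\ell-1$, period $p$, and length at least $2\ell\ge 2p$. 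A second application of the periodicity lemma shows that $p$ is in fact the \emph{smallest} period of $T[s..e]$: any strictly smaller period $p^*$ would satisfy $p^*+p\le 2p\le e-s+1$, so $\gcd(p^*,p)$ would be a period of $T[s..e]$ and hence of the original square, contradicting the minimality of $p$. Since $T[s..e]$ cannot be extended on either side with period $p$, the triple $\tuple{s,e,p}$ is a run, and it contains the original square. Thus the two notions of non-triviality coincide, and testing square-freeness reduces to checking whether the run list is empty.

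The only substantial ingredient here is the Ellert--Fischer linear-time runs algorithm, which I use entirely as a black box; the square$\leftrightarrow$run correspondence is essentially folklore and is already invoked informally in the discussion preceding the statement. The one point that deserves care is the preservation of period-minimality under maximal extension, which is precisely where the periodicity lemma quoted in the introduction is needed; everything else is routine bookkeeping. I therefore expect no real obstacle beyond correctly citing and restating the known results.
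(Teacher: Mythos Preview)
Your proposal is correct and follows exactly the paper's approach: the paper does not give a formal proof of this theorem at all, but simply states it as a direct consequence of the Ellert--Fischer result~\cite{Ellert2021} together with the observation (made in the surrounding text) that each run contains a square and each square is contained in a run. Your write-up just makes the square$\leftrightarrow$run correspondence explicit via the periodicity lemma, which is more detail than the paper provides but entirely in the same spirit.
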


\paragraph{Lempel-Ziv factorisation.}
The unique LZ phrase starting at position $s$ of $T[1..n]$ is a fragment $T[s..e]$ such that $T[s..(e-1)]$ occurs at least twice in $T[1..(e-1)]$ and either $e = n$ or $T[s..e]$ occurs only once in $T[1..e]$.
The Lempel-Ziv factorisation of $T$ consists of $z$ phrases $f_{1},\ldots,f_{z}$ such that the concatenation $f_{1}\ldots f_{z}$ is equal to $T[1..n]$ and each $f_{i}$ is the unique LZ phrase starting at position $1 + \sum_{j=1}^{i - 1}\absolute{f_{j}}$.

\paragraph{Tries.}
Given a collection $\mathcal S = \{T_1, \dots, T_k\}$ of strings over some alphabet $\Sigma$, its trie is a rooted tree with edge labels from $\Sigma$. 
For any node $v$, the concatenation of the edge labels from the root to the node \emph{spells} a string. 
The string-depth of a node is the length of the string that it spells. 
No two nodes spell the same string, i.e., for any node, the labels of the edges to its children are pairwise distinct. Each leaf spells one of the $T_i$, and each $T_i$ is spelled by either an internal node or a leaf.

The compacted trie of $\mathcal S$ can be obtained from its (non-compacted) trie by contracting each path between a leaf or a branching node and its closest branching ancestor into a single edge (i.e., by contraction we eliminate all non-branching internal nodes). The label of the new edge is the concatenation of the edge labels of the contracted path in root to leaf direction. Since there are at most $k$ leaves and all internal nodes are branching, there are $O(k)$ nodes in the compacted trie. Each edge label is some substring $T_i[s..e]$ of the string collection, and we can avoid explicitly storing the label by instead storing the reference $(i,s,e)$.  Thus $O(k)$ words are sufficient for storing the compacted trie. Consider a string $T'$ that is spelled by a node of the non-compacted trie. We say that $T'$ is explicit, if and only if it is spelled by a node of the compacted trie. 
Otherwise $T'$ is implicit.

The suffix tree of a string $T[1..n]$ is the compacted trie containing exactly its suffixes, i.e., a trie over the string collection $\{T[i..n] \mid i \in \{1, \dots, n\}\}$. It is one of the most fundamental data structures in string algorithmics, and is widely used, e.g., for compression and indexing~\cite{Gusfield1997}. The suffix tree can be stored in $\Oh(n)$ words of memory, and for linearly-sortable alphabets it can be computed in $\Oh(n)$ time~\cite{Farach1997}.
The sparse suffix tree of $T$ for some set $B \subseteq \{1, \dots, n\}$ of sample positions is the compacted trie containing exactly the suffixes $\{T[i..n] \mid i \in B\}$. It can be stored in $\Oh(\absolute{B})$ words of memory.

We assume that $T$ is terminated by some special symbol $T[n] = \texttt\textdollar$ that occurs nowhere else in $T$. This ensures that each suffix is spelled by a leaf, and we label the leaves with the respective starting positions of the suffixes. Note that for any two leaves $i\neq j$, their lowest common ancestor (i.e., the deepest node that is an ancestor of both $i$ and $j$) spells a string of length $\lce(i, j)$.
\section{Lower Bounds}
\label{sec:lowerbounds}

In this section, we show lower bounds on the number of symbol comparisons required to compute a meaningful approximation of the alphabet size (\cref{sec:loweralphaapprox}) and to test square-freeness (\cref{sec:lower}).
For both bounds we use an adversarial method, which we briefly outline now.

The present model of computation may be interpreted as follows. 
An algorithm working on a string over a general unordered alphabet has no access to the actual string. 
Instead, it can only ask an oracle whether or not there are identical symbols at two positions. 
The number of questions asked is exactly the number of performed comparisons.
In order to show a lower bound on the number of comparisons required to solve some problem, we describe an adversary that takes over the role of the oracle, forcing the algorithm to perform as many symbol comparisons as possible.

\newcommand{\yesedges}{E_\textnormal{yes}}
\newcommand{\noedges}{E_\textnormal{no}}
\newcommand{\nodecol}[1]{\gamma(#1)}

We use a conflict graph $G = (V, E)$ with $V = \{1, \dots, n\}$ and $E \subseteq V^2$ to keep track of the answers given by the adversary.
The nodes directly correspond to the positions of the string. Initially, we have $E = \emptyset$ and all nodes are colorless, which formally means that they have color $\nodecol{i} = \bot$.
During the algorithm execution, the adversary may assign colors from the set $\Sigma = \{0,\dots, n - 1\}$ to the nodes, which can be seen as permanently fixing the alphabet symbol at the corresponding position (i.e., each node gets colored at most once).
The rule used for coloring nodes depends on the lower bound that we want to show (we describe this in detail in the respective sections).
Apart from this coloring rule, the general behaviour of the adversary is as follows.
Whenever the algorithm asks whether $T[i]=T[j]$ holds, the adversary answers \quotes{yes} if and only if $\nodecol{i} = \nodecol{j} \neq \bot$. 
Otherwise, it answers \quotes{no} and inserts an edge $(i, j)$ into $E$.
Whenever the adversary assigns the color of a node, it has to choose a color that is not used by any of the adjacent nodes in the conflict graph. 
This ensures that the coloring does not contradict the answers given in the past.

Let us define a set $\mathcal T \subseteq \Sigma^n$ of strings that is consistent with the answers given by the adversary.
A string $T \in \Sigma^n$ is a member of $\mathcal T$ if
%
$$\forall i \in V : \nodecol{i} \in \{\bot, T[i]\} \quad \land \quad \forall i,j \in V: (T[i] = T[j]) \implies  (i, j) \notin E.$$

Note that $\mathcal T$ changes over time. Initially (before the algorithm starts), we have $\mathcal T = \Sigma^n$. 
With every question asked, the algorithm might eliminate some strings from $\mathcal T$.
However, there is always at least on string in $\mathcal T$, which can be obtained by coloring each colorless node in a previously entirely unused color. 

%

\def\sigmaapprox{\tilde{\sigma}}

\subsection{Approximating the Alphabet Size}
\label{sec:loweralphaapprox}

Given a string $T[1..n]$ of unknown alphabet size $\sigma \geq 2$, 
assume that we want to compute an approximation of $\sigma$.
We show that if an algorithm takes at most $\frac{n\sigma} 8$ comparisons in the worst-case, then it cannot distinguish strings with at most $\sigma$ distinct symbols from strings with at least $\frac n 2$ distinct symbols.
Thus, any meaningful approximation of $\sigma$ requires $\Omega(n\sigma)$ comparisons.

For the sake of the proof, consider an algorithm that performs at most $\frac{n\sigma} 8$ comparisons when given a length-$n$ string with at most $\sigma \geq 2$ distinct symbols. We use an adversary as described at the beginning of \cref{sec:lowerbounds}, and ensure that the set $\mathcal T$ of strings consistent with the adversary's answers always contains a string with at most $\sigma$ distinct symbols. Thus, the algorithm terminates after at most $\frac{n\sigma} 8$ comparisons. At the same time, we ensure that $\mathcal T$ also contains a string with at least $\frac n 2$ distinct symbols, which yields the desired result. The adversary is equipped with the following coloring rule.
All colors are from $\{1, \dots, \sigma\}$. Whenever the degree of a node in the conflict graph becomes $\sigma - 1$, we assign its color. 
We avoid the colors of the $\sigma - 1$ adjacent nodes in the conflict graph.
At any moment in time, we could hypothetically complete the coloring by assigning one of the colors $\{1, \dots, \sigma\}$ to each colorless node, avoiding the colors of adjacent nodes.
This way, each node gets assigned one of the $\sigma$ colors, which means that $\mathcal T$ contains a string with at most $\sigma$ distinct symbols.
It follows that the algorithm terminates after at most $\frac{n\sigma} 8$ comparisons.
Each comparison may increase the degree of two nodes by one. Thus, after $\frac{n\sigma} 8$ comparisons, there are at most $\frac{n\sigma} 8 \cdot \frac 2{\sigma - 1} \leq \frac{n}{2}$ nodes with degree at least $\sigma - 1$. Therefore, at least $\frac{n}{2}$ nodes are colorless. We could hypothetically color them in $\frac{n}{2}$ distinct colors, which means that $\mathcal T$ contains a string with at least $\frac{n}{2}$ distinct symbols. This leads to the following result.

\begin{theorem}
\label{thm:inapproxalph}
For any integers $n$ and $\sigma$ with $2 \leq \sigma < \frac{n}2$, there is no deterministic algorithm that performs at most $\frac{n \sigma}8$ equality-comparisons in the worst case, and is able to distinguish length-$n$ strings with at most $\sigma$ distinct symbols from length-$n$ strings with at least $\frac{n}2$ distinct symbols. 
\end{theorem}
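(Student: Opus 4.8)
The plan is to run the adversary described at the start of \cref{sec:lowerbounds} with the simple coloring rule just stated, and track two invariants on the set $\mathcal T$ of strings consistent with the answers so far: that $\mathcal T$ always contains a string with at most $\sigma$ distinct symbols, and that after the algorithm halts, $\mathcal T$ also contains a string with at least $\frac n2$ distinct symbols. If both invariants hold when the algorithm terminates, then the algorithm's output — which must be a single number — cannot be simultaneously correct for both a string of alphabet size $\le \sigma$ and one of alphabet size $\ge \frac n2$, contradicting the claim that it distinguishes these cases.

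First I would establish the "at most $\sigma$ colors" invariant. The key observation is that the conflict graph $G$ never has a colorless node of degree $\ge \sigma$: whenever a node reaches degree exactly $\sigma - 1$ it gets colored, and coloring (or edge insertion incident to an already-colored node) never raises a colorless node's degree past $\sigma - 1$ without triggering its coloring. Hence at every moment the subgraph induced on colorless nodes has maximum degree $\le \sigma - 1$; moreover every colored node had, at the time of coloring, at most $\sigma - 1$ colored neighbors using colors from $\{1,\dots,\sigma\}$, so a free color always existed, and we can greedily $\sigma$-color the colorless nodes avoiding neighbors' colors. This produces a valid string in $\Sigma^n$ with all characters from $\{1,\dots,\sigma\}$, so $\mathcal T$ always contains such a string, and consequently the algorithm, being correct on inputs of alphabet size $\le\sigma$, must halt after at most $\frac{n\sigma}{8}$ comparisons.

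Next I would establish the "at least $\frac n2$ colors" invariant at termination. Each comparison inserts at most one edge, raising the total degree sum by at most $2$, so after at most $\frac{n\sigma}{8}$ comparisons the sum of degrees is at most $\frac{n\sigma}{4}$; therefore at most $\frac{n\sigma}{4} \cdot \frac{1}{\sigma-1} \le \frac n2$ nodes have degree $\ge \sigma - 1$. Since only nodes that reached degree $\sigma - 1$ ever got colored, at most $\frac n2$ nodes are colored, leaving at least $\frac n2$ colorless nodes. We may then color these $\ge \frac n2$ colorless nodes with $\frac n2$ pairwise distinct, previously entirely unused colors from $\Sigma = \{0,\dots,n-1\}$ (legal because each fresh color is used nowhere, hence by no neighbor), and color the remaining nodes arbitrarily consistently; this yields a string in $\mathcal T$ with at least $\frac n2$ distinct symbols. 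Combining the two invariants contradicts the existence of a distinguishing algorithm, proving \cref{thm:inapproxalph}.

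The only mildly delicate point — the "main obstacle", such as it is — is verifying that the coloring rule keeps the greedy coloring feasible throughout, i.e., that we never paint ourselves into a corner where a colorless node of degree $\sigma - 1$ already sees all $\sigma$ colors among its neighbors. This is handled by the ordering of events: a node is colored the instant its degree hits $\sigma - 1$, so at that moment it has exactly $\sigma - 1$ neighbors and at least one of the $\sigma$ colors is free; and no later event can make it colorless again. One should also double-check the arithmetic with the stated bound $\frac{n\sigma}{8}$ and the constraint $\sigma < \frac n2$, which gives the clean $\frac{n\sigma}{4(\sigma-1)} \le \frac n2$ using $\sigma \ge 2$; I would not belabor these routine inequalities.
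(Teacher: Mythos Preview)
Your proposal is correct and follows essentially the same argument as the paper: the same adversary with the same coloring rule, the same two invariants, and the same degree-counting to bound the number of colored nodes by $\frac{n\sigma}{4(\sigma-1)} \le \frac n2$. Your extra paragraph verifying that the greedy coloring never gets stuck is a point the paper leaves implicit, but otherwise the arguments coincide.
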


The theorem implies lower bounds on the number of comparisons needed to compute the LZ factorisation (as defined in \cref{sec:prelim}) and the $f$-factorisation.
In the unique $f$-factorisation $T = f_1f_2\dots f_z$, each factor $f_i$ is either a single symbol that does not occur in $f_1\dots f_{i - 1}$, or it is the fragment of maximal length such that $f_{i}$ occurs twice in $f_1\dots f_{i}$. 

\begin{corollary}
\label{cor:f-facto}
For any integers $n$ and $\sigma$ with $2 \leq \sigma < \frac{n}4$, there is no deterministic algorithm that performs at most %
$\frac{(n - 1)\sigma}{16}$ %
equality-comparisons in the worst case, and computes the $f$-factorisation of a length-$n$ string with at most $\sigma$ distinct symbols.
\end{corollary}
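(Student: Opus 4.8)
The plan is to reduce the alphabet-size distinguishing problem of \cref{thm:inapproxalph} to computing the $f$-factorisation, so that a too-efficient $f$-factorisation algorithm would contradict that theorem. First I would observe how the $f$-factorisation encodes alphabet information: every time a genuinely new symbol appears in $T$, it must be the start of a new factor (either as a single-symbol factor, or as the first position of a longer factor that occurs twice), while conversely the number of factors is at least the number of distinct symbols. More to the point, a length-$n$ string with many distinct symbols is forced to have many \emph{single-symbol} factors, whereas a string with few distinct symbols has few of them. So from the $f$-factorisation one can read off (a good estimate of) $\sigma$.

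The cleanest route is to pad. Given an instance $T[1..n]$ from \cref{thm:inapproxalph}, I would build a string $T'$ of length roughly $2n$ by interleaving or appending a ``copy phase'': e.g., take $T' = T\cdot T$, or more carefully $T'[i] = T[i]$ for $i \le n$ and $T'[n+i] = T[i]$, so that the second half is entirely captured by a single long $f$-factor referencing the first half. Then every position in the first half that carries a symbol not seen earlier contributes a factor boundary in the first half, while the entire second half collapses into one factor. Counting factors of $T'$ that lie in the first half therefore sandwiches $\sigma$ between the count and (say) $2\sigma$ — enough to distinguish $\sigma$ from $n/2$ for the parameter ranges claimed ($2 \le \sigma < n/4$). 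Running the hypothetical $f$-factorisation algorithm on $T'$ of length $n' = \Theta(n)$ costs at most $\tfrac{(n'-1)\sigma}{16}$ comparisons, and each comparison on $T'$ can be simulated by one comparison on $T$ (possibly on a shifted index) or answered for free when both queried positions are symbols the reduction itself introduced in a controlled way; choosing constants so that this is $\le \tfrac{n\sigma}{8}$ gives the contradiction with \cref{thm:inapproxalph}.

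The main obstacle I anticipate is getting the constants to line up and making the simulation of comparisons genuinely free where needed: $T' = T\cdot T$ introduces no new symbols, so every comparison $T'[i]\stackrel{?}{=}T'[j]$ is literally a comparison $T[((i-1)\bmod n)+1]\stackrel{?}{=}T[((j-1)\bmod n)+1]$ on the original string, which keeps the bookkeeping trivial; the delicate point is then that with $n' = 2n$ the bound $\tfrac{(n'-1)\sigma}{16} = \tfrac{(2n-1)\sigma}{16} < \tfrac{n\sigma}{8}$, so we stay under budget, and we must check that the factor count of $T\cdot T$ restricted to the first half really does separate ``$\le \sigma$ distinct symbols'' from ``$\ge n/2$ distinct symbols'' (it does, since with $\ge n/2$ distinct symbols at least $n/2$ positions start a new factor, whereas with $\le \sigma < n/4$ distinct symbols there are $< n/4$ single-symbol factors and every longer first-half factor has length $\ge 2$, bounding the total first-half factor count by $< 3n/4 < \ldots$ — the exact inequality is a short calculation). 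I would finish by stating that a distinguisher built this way contradicts \cref{thm:inapproxalph}, hence no such $f$-factorisation algorithm exists.

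Alternatively, if the clean $T\cdot T$ reduction does not quite yield the stated constant, I would fall back on a direct adversary argument mirroring \cref{sec:loweralphaapprox}: run the adversary with the same coloring rule, observe that as long as fewer than $\tfrac{(n-1)\sigma}{16}$ comparisons are made the conflict graph leaves many colorless nodes, and then exhibit two colorings consistent with the adversary — one with $\le \sigma$ symbols and one with $\ge n/2$ symbols — whose $f$-factorisations differ (one has $O(\sigma)$ factors, the other $\Omega(n)$), so the algorithm cannot have output the correct factorisation for both. I expect the padding reduction to be the shorter write-up, so that is the one I would present, with the adversary version as a remark.
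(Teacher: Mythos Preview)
Your primary reduction $T' = T \cdot T$ has a real gap in the distinguishing step. You claim that when $\sigma(T) \le \sigma < n/4$ the first-half factor count is bounded (you write ``$< 3n/4$''), arguing there are $<n/4$ single-symbol factors and the remaining factors have length $\ge 2$. But a \emph{repeated} $f$-factor can perfectly well have length $1$ (e.g.\ in $abac$ the third factor is the single repeated symbol $a$). In fact, for $\sigma \gtrsim \sqrt{n}$ one can arrange that almost every position starts a length-$1$ factor: use all $\sigma$ symbols once, then continue so that every bigram $T[i]T[i{+}1]$ is a first occurrence --- possible as long as $\sigma^2 \ge n$. Such a string has $\sim n$ factors despite $\sigma \le n/4$, so the factor count does \emph{not} separate it from the case $\sigma(T) \ge n/2$. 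Your reduction therefore only covers the regime $\sigma = o(\sqrt{n})$, well short of the stated range $\sigma < n/4$. This is not a constants issue that the fallback can absorb; the fallback sketch (``one coloring has $O(\sigma)$ factors'') inherits exactly the same problem, since the greedy $\sigma$-coloring need not have few $f$-factors.

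The paper's reduction sidesteps this by doubling each \emph{character} rather than the whole string: $T' = T[1]T[1]\,T[2]T[2]\cdots T[n/2]T[n/2]$. The point is structural. If $T[i]$ already occurred at some $j<i$, then $T'[2i{-}1..2i] = T[i]T[i]$ already occurred at $T'[2j{-}1..2j]$, so any factor starting at the odd position $2i{-}1$ has length at least $2$; conversely a fresh $T[i]$ forces a length-$1$ new-symbol factor at $2i{-}1$. Hence the number of length-$1$ factors at odd positions equals $\sigma(T)$ \emph{exactly}, and this is readable from the factor boundaries alone --- no need to tell ``new symbol'' from ``repeated length-$1$'' factors. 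That is the missing idea in your attempt.
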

\begin{proof}
For some string $T = T[1]T[2]\dots T[\frac n 2]$ with $\sigma$ distinct symbols, consider the length-$n$ string $T'=T[1] T[1] T[2] T[2] \dots T[\frac n 2] T[\frac n 2]$ with $\sigma$ distinct symbols constructed by doubling each character of $T$. The alphabet size of $T$ is exactly the number of length-one phrases in the $f$-factorisation of $T'$ starting at odd positions in $T'$. Thus, by \cref{thm:inapproxalph}, we need $\frac{n \sigma}{16} = \frac{|T|\sigma}{8}$ comparisons to find the $f$-factorisation of $T'$. 
We assumed that $n$ is even, and account for odd $n$ by adjusting the bound to $\frac{(n - 1) \sigma}{16}%
$.
\end{proof}

\begin{corollary}
\label{cor:LZ}
For any integers $n$ and $\sigma$ with $3 \leq \sigma < \frac{n}6 + 1$, there is no deterministic algorithm that performs at most %
$\frac{(n - 2)(\sigma-1)}{24}$ %
equality-comparisons in the worst case, and computes the Lempel-Ziv factorisation of a length-$n$ string with at most $\sigma$ distinct symbols.
\end{corollary}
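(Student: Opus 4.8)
The plan is to mimic the reduction used for the $f$-factorisation in \cref{cor:f-facto}, adapting the padding gadget so that the single-symbol LZ phrases line up with the distinct symbols of the underlying string. The obstacle is that, unlike the $f$-factorisation, the LZ factorisation as defined in \cref{sec:prelim} creates a length-one phrase at position $s$ only when $T[s..e-1]$ already occurred before and $T[s..e]$ does not, so a fresh symbol $c$ followed immediately by something that has occurred before would \emph{not} necessarily give a length-one phrase (the phrase could greedily extend). To force each new symbol into its own phrase we interleave a separator that has \emph{already} occurred. Concretely, given $T=T[1]\dots T[m]$ with $m=\lfloor n/2\rfloor$ (roughly) and $\sigma-1$ distinct symbols, pick one fixed symbol $\#$ not among them and form
$$T' \;=\; \#\,\#\; T[1]\,\#\; T[2]\,\#\; \dots\; T[m]\,\#.$$
The leading $\#\#$ makes $\#$ a symbol that has occurred at least twice, so from then on, at every position holding some $T[i]$, the greedy LZ phrase cannot extend past that single character: if $T[i]$ is a new symbol, $T[i]$ alone is the phrase; if $T[i]$ has occurred before, then $T[i]\#$ has also occurred before (since every previous $T[j]$ was followed by $\#$, and $\#\#$ appeared), so again the phrase has length one or the phrase is $T[i]\#\dots$ of length $\geq 2$ — in all cases a length-one phrase starting at the position of $T[i]$ occurs \emph{iff} $T[i]$ is a symbol not seen among $T[1..i-1]$. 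Hence the number of length-one phrases of $T'$ at the positions holding the $T[i]$'s equals the number of distinct symbols among $T[1..m]$, which is exactly the alphabet size of $T$.

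With this gadget in place, the argument is the same shape as \cref{cor:f-facto}: computing the LZ factorisation of $T'$ determines the alphabet size of $T$, so by \cref{thm:inapproxalph} any deterministic algorithm computing the LZ factorisation of a length-$n'$ string with at most $\sigma'$ distinct symbols must use more than $\tfrac{|T|\,\sigma}{8}$ comparisons, where $|T|=m$ and $\sigma = \sigma'-1$ (the factorisation algorithm additionally never needs to compare any position against a $\#$, since those are determined, but even granting it those for free the count only goes down). The string $T'$ has length $n' = 2m+2$, so $m = (n'-2)/2$, and it has $\sigma' = \sigma+1$ distinct symbols; substituting gives the bound $\tfrac{(n-2)(\sigma-1)}{24}$ after the same rounding fudge for odd $n$ that appears in \cref{cor:f-facto} — the extra factor of $3$ versus the $16$ in the $f$-factorisation case comes from the fact that each $T[i]$ now costs three positions ($T[i]$ and the surrounding copies of $\#$, amortised) rather than two, i.e.\ $|T'| \approx 2|T|$ but the usable-position density is lower; one checks $\tfrac{|T|\sigma}{8} = \tfrac{(n'-2)/2 \cdot (\sigma'-1)}{8} = \tfrac{(n'-2)(\sigma'-1)}{16}$, and the stated $24$ absorbs the constraint $\sigma' < n'/6+1$ translating $\sigma < n/2$ together with the parity correction. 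I would simply verify the arithmetic at the end rather than carry it inline.

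The main obstacle is the greedy-extension subtlety above: one must argue carefully that the LZ phrase boundaries fall exactly at the $\#$-separators in a way that lets us \emph{read off} the first-occurrence pattern, because a naive padding (doubling, or inserting a \emph{new} separator) would let phrases swallow a $T[i]$ together with its neighbour and destroy the correspondence. Everything else — consistency of the adversary, the counting of degrees, the reduction bookkeeping — is inherited verbatim from \cref{thm:inapproxalph} and \cref{cor:f-facto}. A secondary point to check is the exact range of $\sigma$ for which the construction is valid (we need $\sigma-1 \geq 2$ distinct symbols in $T$, matching the hypothesis $\sigma \geq 3$, and $\sigma-1 < |T|/2 = (n-2)/4$, matching $\sigma < n/6 + 1$ after accounting for $|T'| \approx 2|T|$), and this is exactly what the stated hypothesis $3 \leq \sigma < \tfrac n6 + 1$ encodes.
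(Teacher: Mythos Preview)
Your gadget $T' = \#\#\,T[1]\,\#\,T[2]\,\#\,\dots\,T[m]\,\#$ does not behave as you claim. The LZ phrases do \emph{not} start at the positions holding $T[i]$; they start at the preceding $\#$. Worse, once $T[i]$ has occurred before, the phrase beginning at the $\#$ just before $T[i]$ can swallow the following separator \emph{and} $T[i+1]$, even if $T[i+1]$ is a fresh symbol. Concretely, take $T = aab$, so $T' = \#\#a\#a\#b\#$: the LZ factorisation is $\#\mid\#a\mid\#a\#b\mid\#$, and the new symbol $b$ never gets its own phrase nor creates any recognisable length-two phrase. Hence one cannot read off the alphabet size of $T$ from the phrase lengths of $T'$, and the reduction collapses.

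There is also an arithmetic mismatch: your construction has $|T'| = 2|T|+2$, which would yield a bound with denominator $16$, not $24$; the hand-wave that ``the stated $24$ absorbs the constraint'' is not a derivation. The paper's gadget instead triples each symbol via $T[i]T[i]\#$, producing $T' = T[1]T[1]\#T[2]T[2]\#\dots$ with $|T'|=3|T|$ (whence the $24$). The doubling $T[i]T[i]$ is what forces a phrase boundary between the two copies of a fresh symbol $x$, after which the next phrase is precisely $x\#$; this length-two phrase at residue $2 \pmod 3$ is the marker for a first occurrence, and repeated symbols cannot produce it. Your single-copy-plus-separator scheme lacks this anchoring mechanism.
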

\begin{proof}
For some string $T = T[1]T[2]\dots T[\frac n 3]$ with $\sigma - 1$ distinct symbols, let $T'$ be the length-$n$ string with $\sigma$ distinct symbols constructed by doubling every character of $T$ with a separator in between, i.e., $T'=T[1]T[1] \# T[2]T[2] \# \dots \# T[\frac n 3]T[\frac n 3]\#$. The first occurrence of character $x$ in $T$ corresponds to the first occurrence of $xx\#$ in $T'$, thus the preceding phrase (possibly of length one) ends at the
first $x$ in the first occurrence of $xx\#$, and the subsequent phrase must be $x\#$. Then, for the later occurrences of $xx\#$ we cannot have that $x\#$ is a phrase.
Consequently, the alphabet size of $T$ is exactly the number of length-two phrases in the Lempel-Ziv factorisation of $T'$ starting at positions $i \equiv 2 \pmod 3$ in $T'$.
Thus, by \cref{thm:inapproxalph}, we need $\frac{n(\sigma - 1)}{24} = \frac{|T|(\sigma-1)}{8}$ comparisons to find the LZ factorisation of $T'$.
We assumed that $n$ is divisible by 3, and account for this by adjusting the bound to $\frac{(n - 2) (\sigma - 1)}{24}%
$.
\end{proof}

\subsection{Testing Square-Freeness}
\label{sec:lower}

\begin{figure}
\small
\centering
\newcommand{\cgedge}[3][]{%
\path ($.5*(#2.south) + .5*(#3.south)$) ++(0, -.5) node (v) {};
\draw[#1] (#2.south) to[bend right=90] (#3.south);
}
\newlength{\nodesep}
\setlength{\nodesep}{1.8pt}
\begin{tikzpicture}
\node[inner sep=0] (prev) {};
\foreach[count=\i from 1] \x in {2,3,-1,-1,1,4,6,5,0,-1,-1,-1,2,5,3,4,0,-1,-1,-1} {
\ifnum\x<0%
\node[draw=black, right=0 of prev, inner xsep=\nodesep] (\i) {\small\phantom{I}\phantom{I}};
\else\ifnum\x<3%
\node[fill=black!20!white, draw=black, right=0 of prev, inner xsep=\nodesep] (\i) {\small\phantom{I}\clap{\boldmath{$\x$}}\phantom{I}};
\else\ifnum\x>2%
\pgfmathtruncatemacro{\col}{(\x-3)*48}
\definecolor{cccol}{HSB}{\col,100,240}
\node[draw=black, pattern=north east lines, pattern color=cccol, right=0 of prev, inner xsep=\nodesep] (\i) {\small\phantom{I}%
\clap{\contourlength{2pt}\contour{white}{$\x$}}%
\phantom{I}};
\fi\fi\fi
\node[above=0 of \i] {$\scriptstyle\i$};
\node[left=0 of \i.east] (prev) {};
}


\cgedge{2}{3}
\cgedge{2}{6}
\cgedge{2}{7}
\cgedge{2}{8}
\cgedge{2}{10}
\cgedge{4}{6}
\cgedge{5}{6}
\cgedge{6}{7}
\cgedge{7}{8}
\cgedge{7}{10}
\cgedge{8}{9}
\cgedge{8}{10}

\cgedge{11}{14}
\cgedge{11}{15}
\cgedge{11}{16}
\cgedge{12}{14}
\cgedge{13}{14}
\cgedge{14}{15}

\cgedge{15}{16}
\cgedge{15}{20}
\cgedge{16}{17}
\cgedge{16}{18}
\cgedge{16}{19}

\path (11.south east) ++(-0.02, -.1) node (tr) {};
\fill[red] (10.south west) ++(0.02, 0) rectangle (tr.center);

\path (prev) ++(1,0) node (prev) {};
\foreach \i in {1,...,13} {
\node[draw=black, right=0 of prev, inner xsep=\nodesep] (\i) {\small\phantom{I}\phantom{I}};
\node[above=0 of \i] {\color{white}$\scriptstyle\i$};
\node[left=0 of \i.east] (prev) {};
}

\node[fill=black!20!white, draw=black, inner xsep=\nodesep] (12) at (12) {\small\phantom{I}\phantom{I}};
\definecolor{cccol}{HSB}{70,140,220}
\node[draw=black, pattern=north east lines, pattern color=cccol, inner xsep=\nodesep] (2) at (2) {\small\phantom{I}\phantom{I}};

\cgedge[very thick]{6}{9}

\foreach[count=\i from 1] \x in {4,5,6} {
\pgfmathtruncatemacro{\y}{\x + 3}
\pgfmathtruncatemacro{\z}{\y + 3}
\pgfmathsetmacro{\v}{\i*.15}

\path (\x.north) ++(0, \v) node (v) {};
\draw[{|[width=4pt]}-, thick] (\x.west |- v) to (\y.west |- v);
\draw[{|[width=2pt]}-{|[width=4pt]}, thick] (\y.west |- v) to (\z.west |- v);

}

\end{tikzpicture}
\caption{Example conflict graph of the adversary described in \cref{sec:lower}. 
The alphabet $\{0, \dots, 15\}$ is of size $\sigma = 16$. 
The blocks are of length $\frac \sigma 4 = 4$. 
The gray nodes are exactly the starting positions of the blocks and contain the symbols of the ternary Thue-Morse sequence $v = 2,1,0,2,0,1,2,\dots$, which is square-free. 
We assume that the colored nodes were colored in the following order: $2,6,8,7,15,16,14$. 
At the time of coloring node $8$, we had to avoid colors 
$0,1,2$ (because they are reserved for the separator positions), 
$3$ (because the adjacent node $2$ already has color $3$), and 
$4$ (because node $6$ is in the same block and already has color $4$). 
The algorithm has not eliminated all squares yet. 
For example, nodes 10 and 11 with absent edge $(10, 11) \notin E$ are adjacent to nodes of colors $\{3,6,5\} \cup \{5,3,4\}$. 
Thus, any of the colors $\{0,1,2\} \cup \{7, \dots, 15\}$ can be assigned to both nodes, enforcing the square $T[10..11]$. 
As visualized on the right, an edge of length $\ell$ eliminates at most $\ell$ squares.} 
\label{fig:lbsquarefree}
\end{figure}

In this section, we prove that testing square-freeness requires at least $n \ln \sigma - 3.6n$ comparisons (even if $\sigma$ is known). 
The proof combines the idea behind the original $\Omega(n \lg n)$ lower bound by Main and Lorentz \cite{Main1984} with the adversary described at the beginning of \cref{sec:lowerbounds}.
This time, we ensure that $\mathcal T$ always contains a square-free string with at most $\sigma$ distinct symbols.
At the same time, we try to ensure that $\mathcal T$ also contains a string with at least one square.
We will show that we can maintain this state until at least $n \ln \sigma -3n$ comparisons have been performed.

The string (or rather family of strings) constructed by the adversary is organized in $\ceil{\frac {4n} \sigma}$ non-overlapping blocks of length $\frac \sigma 4$ (we assume $\frac \sigma 4 \in \mathbb N$ and $8 \leq \sigma \leq n$).
Each block begins with a special separator symbol. 
More precisely, the first symbol of the $k$-th block is the $k$-th symbol of a ternary square-free word over the alphabet $\{0,1,2\}$ (e.g., the distance between the $k^{\text{th}}$ and $(k+1)^{\text{th}}$ occurrence of 0 in the Prouhet-Thue-Morse sequence, also known as the ternary Thue-Morse-Sequence, see~\cite[Corollary~1]{Allouche1998}).
Initially, the adversary colors the nodes that correspond to the separator positions in their respective colors from $\{0,1,2\}$.
All remaining nodes will later get colors other than $\{0,1,2\}$. Any fragment crossing a block boundary can be projected on the colors $\{0,1,2\}$, and by construction the string cannot contain a square.
Thus, the separator symbols ensure that there is no square crossed by a block boundary, which implies that the string is square-free if and only if each of its blocks is square-free. 

During the algorithm execution, we use the following coloring rule. The available colors are $\{3, \dots, \sigma - 1\}$. Whenever the degree of a node becomes $\frac \sigma 4$, we assign its color. We avoid not only the at most $\frac \sigma 4$ colors of already colored neighbors in the conflict graph, but also the less than $\frac \sigma 4$ colors of nodes within the same block (due to $\sigma \geq 8$, there are at least $\sigma - 3 - \frac \sigma 2 \geq 1$ colors available). An example of the conflict graph is provided in \cref{fig:lbsquarefree}. At any moment in time, we could hypothetically complete the coloring by assigning one of the colors $\{3, \dots, \sigma - 1\}$ to each colorless node, avoiding colors of adjacent nodes and colors of nodes in the same block. Afterwards, each node holds one of the $\sigma$ colors, but no two nodes within the same block have the same color. Thus, each block is square-free, and therefore $\mathcal T$ always contains a square-free string with at most $\sigma$ distinct symbols. 

Now we consider the state of the conflict graph \emph{after the algorithm has terminated}. 
We are particularly concerned with consecutive ranges of colorless nodes.
The following lemma states that for each such range, the algorithm either performed many comparisons, or we can enforce a square within the range.

\begin{lemma}\label{lem:MLrestated}
Let $R = \{i,\dots,j\} \subset V$ be a consecutive range of $m = j - i + 1$ colorless nodes in the conflict graph. Then either $\absolute{E \cap R^2} \geq \sum_{\ell = 1}^{\floor{m / 2}} \frac{m - 2\ell + 1}{\ell}$, or there is a string $T \in \mathcal T$ with at most $\sigma$ distinct symbols such that $T[i..j]$ contains a square.
\end{lemma}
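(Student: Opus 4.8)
The statement is essentially the Main--Lorentz counting argument transplanted into the adversarial-coloring setting, so the plan is to reproduce their square-counting strategy while checking that the adversary's coloring rule never gets in the way. Fix a consecutive range $R=\{i,\dots,j\}$ of $m$ colorless nodes after termination. I would try to show that if $\absolute{E\cap R^2}<\sum_{\ell=1}^{\floor{m/2}}\frac{m-2\ell+1}{\ell}$, then there is a period length $\ell\in[1,\floor{m/2}]$ and a starting offset so that the length-$2\ell$ fragment $T[s..s+2\ell-1]\subseteq T[i..j]$ has \emph{no} internal ``no''-edge forcing its two halves apart --- i.e., for every $t\in\{0,\dots,\ell-1\}$ the pair $(s+t,\,s+\ell+t)$ is not in $E$. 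The key observation, already illustrated in \cref{fig:lbsquarefree}, is that a single edge of length $\ell'$ (connecting positions at distance $\ell'$) can ``block'' a candidate square of half-period $\ell$ only if $\ell$ divides... more precisely, an edge $(a,b)$ with $b-a=d$ obstructs the square of half-period $\ell$ starting at $s$ exactly when $\ell=d$ and $a\in[s,s+\ell-1]$; so an edge of length $d\le\floor{m/2}$ obstructs at most $d$ of the $m-2d+1$ candidate squares of half-period $d$ that fit inside $R$, and edges of length $>\floor{m/2}$ or length not a valid half-period obstruct nothing. Summing, the number of (half-period, start) pairs obstructed is at most $\sum_{(a,b)\in E\cap R^2}(b-a)$ when $b-a\le\floor{m/2}$, which is at most $\absolute{E\cap R^2}\cdot\floor{m/2}$; but I actually want the sharper bookkeeping that matches the right-hand side, so I would instead count, for each $\ell$, that at most $\absolute{\{(a,b)\in E\cap R^2: b-a=\ell\}}\cdot 1$...

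Let me restate the counting more carefully, since that is the crux. For each half-period $\ell\in[1,\floor{m/2}]$ there are exactly $m-2\ell+1$ positions $s$ with $[s,s+2\ell-1]\subseteq R$. The square at $(s,\ell)$ is \emph{unobstructed} if $(s+t,s+\ell+t)\notin E$ for all $t\in\{0,\dots,\ell-1\}$. A fixed edge $(a,b)\in E\cap R^2$ with $b-a=d$ can obstruct $(s,\ell)$ only if $\ell=d$ and $s\le a\le s+\ell-1$, i.e.\ for at most $\ell$ values of $s$ (in fact at most $\min(\ell,m-2\ell+1)$, but $\ell$ suffices). Hence the total number of obstructed pairs over all $\ell$ is at most $\sum_{(a,b)\in E\cap R^2}(b-a)$ restricted to edges with $b-a\le\floor{m/2}$. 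On the other hand, if for \emph{every} $\ell$ all $m-2\ell+1$ squares were obstructed, then summing the per-edge contribution divided appropriately would force $\absolute{E\cap R^2}\ge \sum_{\ell}\frac{m-2\ell+1}{\ell}$: indeed an edge of length $\ell$ ``pays for'' at most $1$ obstructed square \emph{per unit of} $\frac{1}{\ell}$-weight, so weighting each obstructed square of half-period $\ell$ by $\frac1\ell$, the total weight is $\sum_\ell \frac{m-2\ell+1}{\ell}$ and each edge contributes weight at most $1$. This is exactly the Main--Lorentz accounting; I would lift their inequality verbatim. So if $\absolute{E\cap R^2}$ is below that threshold, some pair $(s,\ell)$ is unobstructed.

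Given an unobstructed $(s,\ell)$, I would construct the desired $T\in\mathcal T$. Start from the ``canonical'' completion described in \cref{sec:lower} (color every colorless node avoiding neighbor-colors and same-block colors, which exists since $\sigma\ge8$ leaves $\ge1$ choice), which already lies in $\mathcal T$ and is square-free. Now I want to additionally force $T[s+t]=T[s+\ell+t]$ for $t=0,\dots,\ell-1$. The obstruction-freeness guarantees none of these equalities is directly forbidden by an edge, but I must also respect (a) the existing colors of any already-colored node in these positions, and (b) consistency among the forced equalities themselves and with all other edges. Here I'd argue that all the positions $s,\dots,s+2\ell-1$ lie within a single block (since $2\ell\le m$ and $R$ is a range of colorless nodes, hence contains no separator, hence lies inside one block --- this uses that separators are pre-colored, so colorless $\Rightarrow$ non-separator, and a maximal colorless range sits inside one block; for a general subrange the same holds), and within a block the only pre-colored nodes are those with degree $\ge\sigma/4$. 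The cleanest route: build an auxiliary graph on $R$'s positions by adding the $\ell$ matching-edges $(s+t,s+\ell+t)$ as ``equality constraints'' and show this quotient is still properly colorable with $\{3,\dots,\sigma-1\}$ avoiding same-block clashes, because identifying the two halves of the square merges pairs of nodes whose combined neighborhood (in $E$ and in the same block) still has size $<\sigma-3$. This last size bound is where I'd spend the most care: each merged pair has at most $\frac\sigma4+\frac\sigma4$ external-edge neighbors plus at most $\frac\sigma2$ same-block constraints, and I need the union (after identification, and noting the two merged nodes are in the same block) to stay below $\sigma-3$; the slack comes from $\sigma\ge8$ exactly as in the paragraph preceding the lemma, though it may need $\sigma$ a bit larger or a slightly more careful count --- \textbf{this is the step I expect to be the main obstacle}, since it is the only place the block structure and the coloring rule interact nontrivially with the forced square. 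Once the quotient is properly colored, lifting the coloring back to $V$ yields a string $T\in\mathcal T$ with at most $\sigma$ distinct symbols in which $T[s..s+2\ell-1]$ is a square, hence $T[i..j]$ contains a square, completing the proof.
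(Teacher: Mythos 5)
Your counting argument (how many edges are needed to eliminate all candidate squares of each half-period $\ell$ inside $R$) matches the paper's and is correct; the quotient-graph framing of the coloring step is also essentially the paper's greedy coloring, just rephrased. The genuine gap is exactly the step you flag: you insist on respecting the same-block distinctness rule when coloring the merged pairs, and with your count ($\frac\sigma4+\frac\sigma4$ edge-neighbors plus $\frac\sigma2$ same-block constraints, totalling $\sigma$) there is \emph{no} slack, regardless of how large $\sigma$ is. This is not a constant that can be tuned; it is a contradiction. You yourself observe that the whole candidate square $[s,s+2\ell-1]$ lies within one block (since a colorless range contains no separator), so the two endpoints of every merged pair $(y,y+\ell)$ are in the same block. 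Forcing them to share a color is therefore \emph{incompatible} with same-block distinctness, and no amount of sharper bookkeeping will reconcile the two.

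The resolution is that the same-block rule is not a constraint on membership in $\mathcal T$: the definition of $\mathcal T$ requires only that colors agree with $\nodecol$ and that equal-colored positions have no edge in $E$. The block-distinctness rule is merely a discipline the adversary imposes on itself \emph{during execution} to preserve the invariant that $\mathcal T$ always contains a square-free string. When the goal flips to constructing a string in $\mathcal T$ that \emph{does} contain a square, that invariant is irrelevant and the block rule (and the reservation of separator colors $\{0,1,2\}$) must be dropped. Doing so leaves only the $\leq 2(\frac\sigma4-1)$ colors of $E$-neighbors to avoid per merged pair, so $\frac\sigma2+2$ colors remain available and the coloring always completes. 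The paper makes this point explicitly in its proof: since we are trying to enforce a square, we need not worry about accidentally creating one.
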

\begin{proof}
We say that an integer interval $[x,x+2\ell-1]$ with $i \leq x < (x + 2\ell - 1) \leq j$ has been \emph{eliminated}, if for some $y$ with $x \leq y < x + \ell$ there is an edge $(y, y + \ell)$ in the conflict graph.
If such an edge exists, then (by the definition of $\mathcal T$) all strings $T \in \mathcal T$ satisfy $T[y] \neq T[y + \ell]$. Thus $T[x..x+2\ell-1]$ is not a square for any of them.

Now we show that if $[x, x+2\ell-1]$ has not been eliminated, then there exists a string $T \in \mathcal T$ such that $T[x..x+2\ell-1]$ is a square.
For this purpose, consider any position $y$ with $x \leq y < x + \ell$, i.e., a position in the first half of the potential square.
Since $[x, x+2\ell-1]$ has not been eliminated, $(y, y + \ell)$ is not an edge in the conflict graph. It follows that we could assign the same color to $y$ and $y + \ell$. 
We only have to avoid the at most $2 \cdot (\frac \sigma 4 - 1)$ colors of adjacent nodes of both $y$ and $y + \ell$ in the conflict graph. 
Thus there are $\frac \sigma 2 + 2$ appropriate colors that can be assigned to both nodes. Unlike during the algorithm execution, we do not need to avoid the special separator colors or the colors in the same block; since we are trying to enforce a square, we do not have to worry about accidentally creating one.
By applying this coloring scheme for all possible choices of $y$, we enforce that all strings $T \in \mathcal T$ have a square $T[x..x+2\ell-1]$. 
Note that by coloring additional nodes after the algorithm terminated, we only remove elements from $\mathcal T$. Thus, the strings with square $T[x..x+2\ell-1]$ were already in $\mathcal T$ when the algorithm terminated.
It follows that, if the algorithm actually guarantees square-freeness, then it must have eliminated all possible intervals $[x,x+2\ell-1]$ with $i \leq x < (x + 2\ell - 1) \leq j$.

While each interval needs at least one edge to be eliminated, a single edge eliminates multiple intervals. 
However, all the intervals eliminated by an edge must be of the same length.
Now we give a lower bound on the number of edges needed to eliminate all intervals of length $2\ell$.
Any edge $(y, y + \ell)$ eliminates $\ell$ intervals, namely the intervals $[x,x+2\ell-1]$ that satisfy $x \leq y < x + \ell$. Within $R$, we have to eliminate $m - 2\ell + 1$ intervals of length $2\ell$, namely the intervals $[x,x+2\ell-1]$ that satisfy $i \leq x \leq j - 2\ell + 1$ (see right side of \cref{fig:lbsquarefree}). Thus we need at least $\frac{m - 2\ell + 1}{\ell}$ edges to eliminate all squares of length $2\ell$.
Finally, by summing over all possible values of $\ell$, we need at least $\sum_{\ell = 1}^{\floor{m / 2}} \frac{m - 2\ell + 1}{\ell}$ edges to eliminate all intervals in $R$.
Note that the edges used for elimination have both endpoints in $R$, and are thus contained in $E \cap R^2$.
Consequently, if $\absolute{E \cap R^2} < \sum_{\ell = 1}^{\floor{m / 2}} \frac{m - 2\ell + 1}{\ell}$, then not all intervals have been eliminated, and there is a string in $\mathcal T$ that contains a square.
\end{proof}

Finally, we show that the algorithm either performed at least $\Omega(n \lg \sigma)$ comparisons, or there is a string $T \in \mathcal T$ that contains a square.
Let $c_1, c_2, \dots, c_k$ be exactly the colored nodes. Initially (before the algorithm execution), the adversary colored $\ceil{\frac {4n} \sigma}$ nodes. Thus $k \geq \ceil{\frac {4n} \sigma}$, and there are $k - \ceil{\frac {4n} \sigma}$ nodes that have been colored after their degree reached $\frac \sigma 4$. Therefore, the sum of degrees of all colored nodes is at least $(k - \ceil{\frac {4n} \sigma}) \cdot \frac \sigma 4 \geq \frac{\sigma k - 4n - \sigma}{4} \geq \frac{\sigma k - 5n}{4}$. Each comparison may increase the degree of two nodes by one. Thus, the colored nodes account for at least $\frac{\sigma k - 5n}{8}$ comparisons.
There are $k$ non-overlapping maximal colorless ranges of nodes, namely $\{c_i + 1, \dots, c_{i + 1} - 1\}$ for $1 \leq i \leq k$ with auxiliary value $c_{k + 1} = n + 1$. According to \cref{lem:MLrestated}, each respective range accounts for $e_i = \sum_{\ell = 1}^{\floor{m_i / 2}} \frac{m_i - 2\ell + 1}{\ell}$ edges, where $m_i = c_{i + 1} - c_i - 1$. (No edge gets counted more than once because the ranges are non-overlapping, and both endpoints of the respective edges are within the range.)
Thus, in order to verify square-freeness, the algorithm must have performed at least $\sum_{i = 1}^k e_i + \frac{\sigma k - 5n}{8}$ comparisons. 
The remainder of the proof consists of simple algebra. First, we provide a convenient lower bound for $e_i$ (explained below):%

\begin{alignat*}{1}\label{eqn:MLlower}
e_i = \sum_{\ell = 1}^{\floor{m_i / 2}} \frac{m_i - 2\ell + 1} \ell = \sum_{\ell = 1}^{\ceil{m_i / 2}} \frac{m_i - 2\ell + 1} \ell \geq\enskip &%
(m_i + 1) \left(\left(\sum_{\ell = 1}^{\ceil{m_i / 2}} \frac 1 \ell \right) - 1\right)\\
>\enskip &%
(m_i + 1) \cdot(\ln \frac {m_i} 2 - \frac 1 2)\\
=\enskip &%
(m_i + 1) \cdot\ln \frac {m_i}{2\sqrt{e}}\\
\geq\enskip &%
(m_i + 1) \cdot\ln \frac {m_i + 1}{2.5\sqrt{e}}%
\end{alignat*}

We can replace $\floor{m_i / 2}$ with $\ceil{m_i / 2}$ because if $m_i$ is odd the additional summand equals zero.
The first inequality uses simple arithmetic operations. 
The second inequality uses the classical lower bound $(\ln x + \frac 1 2) < H_x$ of harmonic numbers. 
The last inequality holds for $m_i \geq 4$. 
For $m_i < 4$ the result becomes negative and is thus still a correct lower bound for the number of comparisons.
We obtain:

\def\numsubstr{k}%
\begin{alignat*}{1} 
\smash{%
\underbrace{\sum_{i=1}^\numsubstr (m_i + 1) \cdot \ln\frac {m_i + 1}{2.5\sqrt{e}}}%
_{\text{\begin{tabular}{c}comparisons within colorless ranges\end{tabular}}} + %
\underbrace{\vphantom{\sum_{i=1}^\numsubstr}\frac{\sigma k - 5n} 8}%
_{\text{\begin{tabular}{c}\vphantom{lbthi}%
comparisons for\\[-.5em]%
colored nodes\end{tabular}}}%
}\quad \geq\quad & n \cdot \ln \frac{n}{2.5\sqrt{e}\numsubstr} + \frac{\sigma k - 5n} 8\\
=\quad &n \cdot \ln \frac \sigma {2.5\sqrt{e}x} + \frac{xn - 5n} 8\\
=\quad &n \cdot \ln \sigma + n \cdot \left(\frac{x - 5} 8 - \ln 2.5\sqrt{e}x\right)\\
>\quad &n \cdot \ln \sigma - 3.12074n
\end{alignat*}%

The first step follows from $\sum_{i=1}^\numsubstr (m_i + 1) = n$ and the log sum inequality (see \cite[Theorem 2.7.1]{Cover2006}). In the second step we replace $\numsubstr$ by using $x = \frac{\sigma \numsubstr} n$. The third step uses simple arithmetic operations. The last step is reached by substituting $x = 8$, which minimizes the equation.
Finally, we assumed that $\sigma$ is divisible by 4. We account for this by adjusting the lower bound to $n \ln (\sigma - 3) - 3.12074n$, which is larger than $n \ln \sigma - 3.6n$ for $\sigma \geq 8$.


\lowerbound*

\label{sec:upper}

In this section, we consider the problem of testing square-freeness of a given string.
We introduce an algorithm that decides whether or not a string is square-free using only $\Oh(n \lg \sigma)$ comparisons, matching the lower bound from \cref{sec:lower}.
Note that this algorithm is not yet time efficient because, apart from the performed symbol comparisons, it uses other operations that are expensive in the Word RAM model.
A time efficient implementation of the algorithm will be presented in \cref{sec:alg}, where we first achieve $\Oh(n \lg \sigma + n \log^* n)$ time, and then improve this to $\Oh(n \lg \sigma)$ time.
In \cref{sec:runs}, we generalize the result to compute all runs in the same time complexity.

\subsection{Sparse Suffix Trees and Difference Covers}

\begin{lemma}
\label{lem:sparse}
The sparse suffix tree containing any $b$ suffixes $T[i_{1}..n], \ldots, T[i_{b}..n]$
of $T[1..n]$ can be constructed using $\Oh(b\sigma\log b)$ comparisons plus
$\Oh(n)$ comparisons shared by all invocations of the lemma.
\end{lemma}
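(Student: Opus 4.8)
The plan is to build the sparse suffix tree incrementally, inserting the $b$ suffixes one at a time while maintaining the compacted trie of the already-inserted suffixes. The key observation is that to insert a new suffix $T[i_k..n]$ we only need to locate the deepest node (explicit or implicit) whose spelled string is a prefix of $T[i_k..n]$, and this can be done by walking down from the root, at each explicit node deciding which child-edge to follow and how far along that edge the match extends. The crucial point for the comparison budget is that, during the downward walk, whenever we are at an explicit node with several children we must decide which child edge (if any) starts with the character $T[i_k + d]$, where $d$ is the current string-depth; since the alphabet is unordered we cannot binary-search among the children, so in the worst case we spend $\Oh(\sigma)$ comparisons per explicit node visited (comparing $T[i_k+d]$ against the first character of each child edge until we find a match or exhaust them). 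A naive bound on the depth of the walk is $\Oh(b)$ explicit nodes, giving $\Oh(b^2\sigma)$ total; the log-factor in the statement says we should do better, so the real work is in bounding the number of explicit nodes visited per insertion by $\Oh(\log b)$ on average (amortized), or equivalently in arguing that the total work across all $b$ insertions is $\Oh(b\sigma\log b)$.

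First I would set up the incremental construction and charge comparisons carefully: each comparison made during the insertion of $T[i_k..n]$ is either (a) a \emph{branching comparison}, made at an explicit node to pick the outgoing edge, or (b) a \emph{matching comparison}, made to extend the match along a single edge. For type (b), the total over one insertion telescopes to $\Oh(\ell_k)$ where $\ell_k$ is the length of the matched prefix, but we cannot afford $\sum_k \ell_k$ directly — instead we use the $\Oh(n)$ shared comparisons: we precompute (or reuse across all invocations of the lemma) a data structure supporting LCE-style queries on $T$ using $\Oh(n)$ comparisons total, so that matching along an edge whose label is $T[s..e]$ against $T[i_k+d..n]$ costs only $\Oh(1)$ amortized comparisons via the shared structure rather than one comparison per matched character. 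This is exactly the clause "plus $\Oh(n)$ comparisons shared by all invocations." For type (a), I would bound the number of explicit nodes on any root-to-insertion-point path. A clean way is to choose the insertion order to be a \emph{balanced} order — e.g. insert the suffixes in an order that keeps the trie's structure balanced, or simply observe that after sorting the $b$ suffixes into the trie the number of explicit nodes is $\Oh(b)$ and the walk visits at most $\Oh(\log b)$ of them if we insert in the order given by a balanced binary search over the already-built structure (inserting the "median" suffix of each pending range). With such an order, each insertion touches $\Oh(\log b)$ explicit nodes, each costing $\Oh(\sigma)$ branching comparisons, for $\Oh(b\sigma\log b)$ total.

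The main obstacle I anticipate is precisely controlling the number of explicit nodes visited per insertion without an ordering on the alphabet: we cannot sort the suffixes lexicographically (that would need $\Omega(b\log b)$ order comparisons on characters, which don't exist here), so the "balanced insertion order" has to be defined purely in terms of the trie shape built so far, not in terms of the suffixes' lexicographic ranks. I would handle this by maintaining, for each current edge of the compacted trie, the set of not-yet-inserted suffixes whose insertion point lies strictly below that edge, and at each step inserting a suffix that splits one of the largest such sets roughly in half — a classical weight-balanced / "heavy path" argument then gives that any single suffix is "split off" after $\Oh(\log b)$ steps, so its eventual downward walk passes through only $\Oh(\log b)$ explicit nodes that existed at the time of its insertion. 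The remaining bookkeeping — updating these pending sets when an edge is subdivided, and verifying that computing which set to split next uses no symbol comparisons (only word-RAM operations on indices) — is routine. Combining the $\Oh(b\sigma\log b)$ branching comparisons, the $\Oh(b\log b)$ amortized matching comparisons (absorbed into the first term), and the $\Oh(n)$ shared comparisons for the LCE structure yields the claimed bound.
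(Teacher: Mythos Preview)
Your proposal has a genuine gap in the handling of branching comparisons. You propose to control the number of explicit nodes visited per insertion by choosing a clever insertion order: maintain, for each edge, the set of not-yet-inserted suffixes whose insertion point lies below that edge, and always insert one that splits a largest such set in half. But this is circular. To know which edge a pending suffix's insertion point lies below, you must have already compared that suffix against the trie; and when an insertion splits an edge, re-classifying the pending suffixes in that set across the split again requires character comparisons. You cannot maintain these sets ``using no symbol comparisons (only word-RAM operations on indices)'' as you assert; the classification \emph{is} the comparison work you are trying to bound.

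The paper's argument avoids this entirely with a much simpler local trick that needs no special insertion order. At each (explicit or implicit) node $v$ during the descent, order the children so that $v_1$ is the child whose subtree has the most leaves, and test the edge to $v_1$ \emph{first}. If the character matches, proceed to $v_1$ at no negative-comparison cost. If it does not match, you pay up to $\sigma$ negative comparisons scanning $v_2,\ldots,v_d$, but you are then guaranteed to enter a subtree with at most half as many leaves as before (since $v_1$ was heaviest). Hence the ``pay $\sigma$'' event occurs at most $1+\log b$ times per insertion, giving $\Oh(b\sigma\log b)$ negative comparisons overall. For the positive-answer comparisons (your type~(b)), the paper does not use an LCE oracle at all: it simply keeps a union-find structure on positions of $T$, shared across all invocations, and never re-asks a comparison whose positive answer is already implied by transitivity; this bounds positive comparisons by $n$ globally. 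Your appeal to an $\Oh(1)$-amortized LCE structure built from $\Oh(n)$ comparisons is vague and not obviously realizable in this model, whereas the union-find argument is immediate.
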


\begin{proof}
We maintain a union-find structure over the positions of $T[1..n]$. Initially, each position is in a separate
component. Before issuing a query $T[x]\stackrel{?}{=}T[y]$, we check if $x$ and $y$ are in the same
component of the union-find structure, and if so immediately return that $T[x]=T[y]$
without performing any comparisons. 
Otherwise,
we issue the query and if it returns that $T[x]=T[y]$ we merge the components of $x$ and $y$.
Thus, the total number of issued queries with positive answer, over all invocations of the lemma,
is less than $n$, and it remains to bound the number of issued queries with negative answer.

We insert the suffixes $T[i_{j}..n]$ one-by-one into an initially empty sparse suffix tree.
To insert the next suffix, we descend from the root of the tree to identify the node $u$ that corresponds to
the longest common prefix between $T[i_{j}..n]$ and any of the already inserted suffixes. We then make $u$ explicit unless it is explicit already,
and add an edge from $u$ to a new leaf corresponding to the whole $T[i_{j}..n]$.
We say that the insertion procedure terminates at $u$.
Node $u$ can be identified with only $\Oh(\sigma\log b)$ comparisons with negative answers as follows.
Let $v$ be the current node (initially, the root of the tree), and let $v_{1},\ldots,v_{d}$ be its children,
where $d\leq \sigma$. Here, $v$ can be either explicit or implicit, in the latter case $d=1$.
We arrange the children of $v$ so that the number of leaves in the subtree
rooted at $v_{1}$ is at least as large as the number of leaves in the subtree rooted at any other
child of $v$. Then, we compare the character on the edge leading to $v_{1}$ with
the corresponding character of the current suffix. If they are equal we continue with $v_{1}$,
otherwise we compare the characters on the edges leading to $v_{2},\ldots,v_{d}$ with the
corresponding character of the current suffix one-by-one. Then, we either continue
with some $v_{j}$, $j\geq 2$, or terminate at $v$. To bound the number of comparisons with negative
answer, observe that such comparisons only occur when we either terminate at $v$ or continue
with $v_{j}$, $j\geq 2$. Whenever we continue with $v_{j}$, $j\geq 2$, the number of leaves
in the current subtree rooted at $v_j$ decreases at least by a factor of 2 compared to subtree rooted at $v$ (as the subtree rooted at $v_{1}$ had the largest
number of leaves). Thus, during the whole descent from the root performed during an insertion
this can happen only at most $1+\log b$ times. Every time we do not continue in the subtree $v_{1}$
we might have up to $d\leq \sigma$ comparisons with negative answer, thus the total number of such comparisons is as claimed%
\footnote{In the descent, if all children are sorted according to their subtree size, the number of comparisons decreases to $\Oh(b(\sigma/\log \sigma)\log b)$, but this appears irrelevant for our final algorithm.}.
\end{proof}

Now we describe the sample positions that we will later use to compute the approximate LZ factorisation. 
A set ${\bf S}\subseteq \mathbb N$ is called a $t$-cover of $\{1,\ldots,n\}$ if there is a constant-time computable function $h$
such that, for any $1\leq i,j\leq n-t+1$, we have $0\leq h(i,j)<t$ and $i+h(i,j),j+h(i,j)\in {\bf S}$.
A possible construction of $t$-covers is given by the lemma below, and visualized in \cref{fig:diff-cover}.

\begin{lemma}
\label{lem:cover}
For any $n$ and $t\leq n$, there exists a $t$-cover $\D(t)$ of $\{1,\ldots,n\}$ with size $\Oh(n/\sqrt{t})$.
Furthermore, its elements can be enumerated in time proportional to their number.
\end{lemma}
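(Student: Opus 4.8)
The goal is to construct, for any $n$ and $t \le n$, a $t$-cover $\D(t)$ of $\{1,\dots,n\}$ of size $\Oh(n/\sqrt{t})$ whose elements can be enumerated in time proportional to their number. The plan is to use the classical difference-cover construction: first build a small set $D \subseteq \{0,1,\dots,t-1\}$ of size $\Oh(\sqrt{t})$ such that $D$ is a \emph{difference cover modulo $t$}, meaning that for every residue $r \in \{0,\dots,t-1\}$ there exist $a,b \in D$ with $a - b \equiv r \pmod t$. Then set $\D(t) = \{ i \in \{1,\dots,n\} : (i \bmod t) \in D \}$, i.e.\ periodically tile the residues in $D$ across the whole range. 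Since $D$ hits an $\Oh(\sqrt{t}/t) = \Oh(1/\sqrt{t})$ fraction of residues, the resulting set has size $\Oh(n/\sqrt{t})$, and it can clearly be enumerated in time proportional to its size by walking through $\{1,\dots,n\}$ in blocks of length $t$ and emitting the shifted copies of $D$ in each block.

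The key step is to exhibit the modular difference cover $D$ of size $\Oh(\sqrt{t})$ together with the constant-time function $h$. One clean way is the Colbourn–Ling / Maekawa-style construction: for $t$ a perfect square, $D = \{0,1,\dots,\sqrt{t}-1\} \cup \{\sqrt{t}, 2\sqrt{t}, \dots, (\sqrt{t}-1)\sqrt{t}\}$ works, since any $r$ can be written as $q\sqrt{t} + s$ with $0 \le q,s < \sqrt{t}$, and then $r \equiv (q\sqrt{t}) - (\sqrt{t} - s) \cdot 1 \cdot (-1) \dots$ — more carefully, $r = q\sqrt{t} + s = (q\sqrt t) - (-s)$ and both $q\sqrt t$ and $s$ lie in $D$ (taking differences the other way when needed, or using that $D$ contains a contiguous run of $\sqrt t$ small values and $\sqrt t$ evenly spaced large values, which together realize every residue as a difference). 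For general $t$ one rounds up to the next perfect square $t' = \lceil \sqrt{t}\rceil^2 \le 4t$ and uses a difference cover of $\integer_{t'}$ restricted appropriately, or simply notes that it suffices to produce a $t$-cover (a cover for a possibly larger synchronization window only helps), so we may assume $t$ is a perfect square at the cost of constant factors. The function $h(i,j)$ is then computed as follows: given the residues $a = i \bmod t$ and $b = j \bmod t$, we want a shift $0 \le \delta < t$ with $a + \delta \equiv d_1 \pmod t$ and $b + \delta \equiv d_2 \pmod t$ for some $d_1, d_2 \in D$; equivalently $d_1 - d_2 \equiv a - b \pmod t$. Precompute (in $\Oh(t)$ time, or via the explicit formula above in $\Oh(1)$) for each residue $r$ a witnessing pair $(d_1(r), d_2(r))$; then $h(i,j) = (d_2(a-b \bmod t) - b) \bmod t$, which is constant-time given the (string-independent, precomputed) lookup, and one checks $i + h(i,j) \equiv d_1 \pmod t \in D$ and $j + h(i,j) \equiv d_2 \pmod t \in D$, hence both land in $\D(t)$, and $0 \le h < t$ by construction. (One must double-check that $i + h(i,j) \le n$, i.e.\ that the shift does not run off the end — this is exactly why the hypothesis is $1 \le i,j \le n-t+1$, ensuring $i + h(i,j) \le i + t - 1 \le n$.)

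The main obstacle, such as it is, is purely bookkeeping: making the function $h$ genuinely constant-time and string-independent (so that the $\Oh(t) \le \Oh(n)$ preprocessing cost is absorbed into the "$\Oh(n)$ comparisons shared by all invocations" budget of the surrounding development, or better, into the Word-RAM preprocessing since no \emph{symbol comparisons} are needed here at all), and handling the divisibility/rounding issues when $t$ is not a perfect square without blowing up the $\Oh(n/\sqrt t)$ bound. Both are routine. I would therefore present the proof as: (1) state and verify the explicit perfect-square difference cover $D$ and its size $\Oh(\sqrt t)$; (2) reduce general $t$ to the perfect-square case; (3) define $\D(t)$ by periodic tiling and bound its size by $|D| \cdot \lceil n/t \rceil = \Oh(n/\sqrt t)$; (4) define $h$ via the precomputed witness table and verify the three required properties ($0 \le h < t$, $i + h \in \D(t)$, $j + h \in \D(t)$), using $i,j \le n-t+1$ for the range check; (5) note enumeration in $\Oh(|\D(t)|)$ time is immediate from the tiling description.
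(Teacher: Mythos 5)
Your proposal is correct and follows essentially the same route as the paper: both construct the set by tiling a periodic difference-cover pattern of $\Oh(\sqrt{t})$ residues (a contiguous run of small values together with multiples of $\lfloor\sqrt{t}\rfloor$) across $\{1,\dots,n\}$, and define $h(i,j)$ by shifting $i$ onto a multiple of $\lfloor\sqrt{t}\rfloor$ and $j$ onto one of the small residues. The paper gives a closed-form $h$ (no witness table needed) and uses period $\lfloor\sqrt{t}\rfloor^2\le t$ directly, which sidesteps the perfect-square rounding you discuss, but otherwise the constructions coincide.
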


\begin{proof}
We use the well-known combinatorial construction known as difference covers, see e.g.~\cite{Maekawa1985}.
Let $r=\lfloor \sqrt{t}\rfloor$ and define $\D(t)=\{ i\in\{1,\ldots,n\} : i\bmod r = 0 \text { or } i\bmod r^{2} \in \{ 0,\ldots,r-1 \} \}$.
By definition, $|\D(t)| \leq \lfloor n/r \rfloor + \lfloor n/r^{2} \rfloor r = \Oh(n/r) = \Oh(n/\sqrt{t})$.
The function $h(i,j)$ is defined as $a+b\cdot r$, where $a=(r-i)\bmod r$ and $b=(r-\lfloor (j+a)/r \rfloor )\bmod r$.
Note that $i+h(i,j) \leq n$ and $j+h(i,j) \leq n$.
Then, $i+(a+b\cdot r) = 0 \pmod r$, while $\lfloor (j+(a+b\cdot r))/r \rfloor = \lfloor (j+a)/r + b\rfloor = 0 \bmod r$
implies $j+h(i,j) \bmod r^{2} \in \{ 0,\ldots,r-1 \} \}$,
thus $i+h(i,j),j+h(i,j)\in \D(t)$ as required.
\end{proof}

\newcommand{\deltacover}[2]{
\node (p) at #1 {};
\draw (p) rectangle ++(#2,-0.5); 
\foreach \i in {1,...,#2} {
	\draw ($(p)+(\i,0)$) -- ($(p)+(\i,-0.5)$);
}
\foreach \i in {1,...,8} {
	\draw ($(p)+(\i*0.125,0)$) -- ($(p)+(\i*0.125,-0.5)$);
}
}

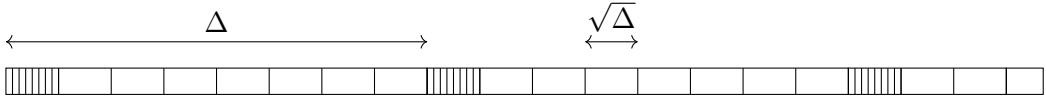
\begin{figure}
\centering
\begin{tikzpicture}[scale=0.7, every node/.style={scale=1}]
\deltacover{(0,0)}{8}
\draw[<->] (0,0.5) -- (8,0.5) node[midway,above] (delta) {$\Delta$};
\draw[<->] (11,0.5) -- (12,0.5) node[midway,above] (sqrt_delta) {$\sqrt{\Delta}$};
\deltacover{(8,0)}{8}
\deltacover{(16,0)}{3}
\draw (19,0) rectangle ++(0.7,-0.5); 
\end{tikzpicture}
\caption{Positions in a $\Delta$-difference cover.}
\label{fig:diff-cover}
\end{figure}

\subsection[Detecting Squares with a Delta-Approximate LZ Factorisation]{Detecting Squares with a \boldmath$\Delta$\unboldmath-Approximate LZ Factorisation}

A crucial notion in our algorithm is the following variation on the standard Lempel-Ziv factorisation:

\begin{definition}[$\Delta$-approximate LZ factorisation]
For a positive integer parameter $\Delta$, the fragment $T[s..e]$ is a $\Delta$-approximate
LZ phrase if it can be split into a head and a tail
$T[s..e]=\head(T[s..e])\tail(T[s..e])$ such that $\absolute{\head(T[s..e])} < \Delta$ and additionally
\begin{itemize}
\item $\tail(T[s..e])$ is either empty or occurs at least twice in $T[1..e]$, and
\item the unique (standard) LZ phrase $T[s..e']$ starting at position $s$ satisfies $e'-1\leq e$.
\end{itemize}
In a $\Delta$-approximate LZ factorisation $T = b_1b_2\dots b_z$, each factor $b_i$ is a $\Delta$-approximate phrase $T[s..e]$ with $s = 1+\sum_{j = 1}^{i - 1} \absolute{b_j}$ and $e = \sum_{j = 1}^{i} \absolute{b_j}$.
\end{definition}
Note that a standard LZ phrase is not a $\Delta$-approximate phrase. Also, while the LZ phrase starting at each position (and thus also the LZ factorisation) is uniquely defined, there may be multiple different $\Delta$-approximate phrases starting at each position. This also means that a single string can have multiple different $\Delta$-approximate factorisations. 
The definitions of both standard and $\Delta$-approximate LZ phrases are illustrated in \cref{fig:def_phrases}.
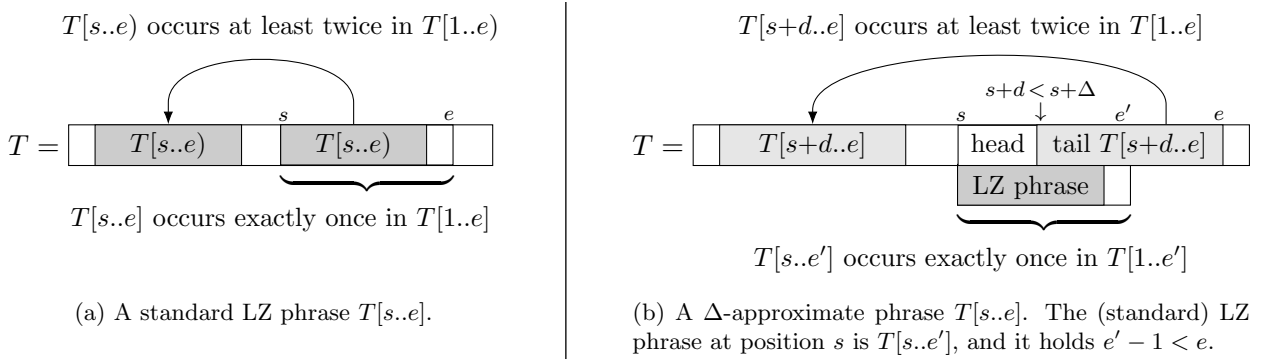
\begin{figure}
\subcaptionbox{\label{fig:LZ-phrase}A standard LZ phrase $T[s..e]$.}%
{\begin{tikzpicture}[x=10, y=15, every node/.style={inner sep=0, minimum width=0, minimum height=0}]
\node (l) at (2,0) {};
\node (r) at (18,1) {};
\node (s) at (10,0) {};
\node (e) at (16.5,1) {};
\node (shift) at (7,0) {};

\node (eminus) at ($(e)-(1,0)$) {};
\node (mid) at ($.5*(l)+.5*(r)$) {};

\node[draw, fit=(l)(r)] (text) {};
\node[left=0 of text] {$T = {}$};
\node[above right=.1 and -.1em of s |- e] (toplabel) {$\scriptstyle s$};
\node[above left=.1 and -.1em of e] {$\scriptstyle e$};

\node[draw, fit=(s |- l)(e)] (phrase) {};
\node[draw, fill=gray!40, fit=(s |- l)(eminus)] (phraseminus) {};
\node at (phraseminus.center) {\small $T[s..e)$};

\node[below=.2em of phrase.center |- text.south] (bottomtext) {$\underbrace{\hspace{2.3cm}}$};

\draw ($(s)-(shift)$) node (s2) {};
\draw ($(eminus)-(shift)$) node (eminus2) {};
\node[draw, fill=gray!40, fit=(s2)(eminus2)] (phraseminus2) {};
\node at (phraseminus2.center) {\small $T[s..e)$};

\draw[-Latex] (phraseminus.north) to ++(0,0.75em) to[out=90, in=90, looseness=.8] node[midway] (toptext) {} ($(phraseminus2.north)+(0,0.75em)$) to (phraseminus2.north);

\path (text) ++(0,2.25) node (ttv) {};
\node (toptext) at (toptext |- ttv) {};

\node[below=.5em of mid |- bottomtext.south] {\small $T[s..e]$ occurs exactly once in $T[1..e]$};
\node[above=.5em of mid |- toptext.north] {\small $T[s..e)$ occurs at least twice in $T[1..e)$};

\node at (0,-3) {};
\end{tikzpicture}}%
~\hfill\rule[-1cm]{.5pt}{4.75cm}\hfill~%
\subcaptionbox{\label{fig:Delta-approx}A $\Delta$-approximate phrase $T[s..e]$. The (standard) LZ phrase at position $s$ is $T[s..e']$, and it holds $e' - 1 < e$.}%
{\begin{tikzpicture}[x=10, y=15, every node/.style={inner sep=0, minimum width=0, minimum height=0}]
\node (l) at (-0,0) {};
\node (r) at (21,1) {};
\node (s) at (10,0) {};
\node (sd) at (13,0) {};
\node (ep) at (16.5,-1) {};
\node (e) at (20,1) {};
\node (shift) at (12,0) {};

\node (epminus) at ($(ep)-(1,0)$) {};
\node (mid) at ($.5*(l)+.5*(r)$) {};
\node (lzmid) at ($.5*(s)+.5*(ep)$) {};

\node[draw, fit=(l)(r)] (text) {};
\node[left=0 of text] {$T = {}$};
\node[above right=.1 and -.1em of s |- e] (toplabel) {$\scriptstyle s$};
\node[above right=.1 and -.1em of sd |- e] {$\overset{\mathclap{s{+}d\,<\,s{+}\Delta}}{\scriptstyle\downarrow}$};
\node[above left=.1 and -.1em of e] {$\scriptstyle e$};
\node[above left=.1 and -.1em of ep |- e] {$\scriptstyle e'$};

\node[draw, fit=(s)(ep)] (phrase) {};
\node[draw, fill=gray!40, fit=(s)(epminus)] (phraseminus) {};
\node at (phraseminus.center) {\small LZ \smash{phrase}};

\node[draw, fill=white, fit=(s |- r)(sd)] (head) {};
\node[draw, fill=gray!20, fit=(sd)(e)] (tail) {};

\node at (head.center) {\small head\vphantom{$[]$}};
\node at (tail.center) {\small tail $T[s{+}d..e]$};

\node[below=.2em of lzmid |- phrase.south] (bottomtext) {$\underbrace{\hspace{2.3cm}}$};

\draw ($(sd)-(shift)$) node (sd2) {};
\draw ($(e)-(shift)$) node (e2) {};
\node[draw, fill=gray!20, fit=(sd2)(e2)] (tail2) {};
\node at (tail2.center) {\small $T[s{+}d..e]$};

\draw[-Latex] (tail.30) to ++(0,0.5) to[out=90, in=90, looseness=.5] node[midway] (toptext) {} ($(tail2.north)+(0,0.5em)$) to (tail2.north);

\path (text) ++(0,2.25) node (ttv) {};
\node (toptext) at (toptext |- ttv) {};

\node[below=.5em of mid |- bottomtext.south] {\small $T[s..e']$ occurs exactly once in $T[1..e']$};
\node[above=.5em of mid |- toptext.north] {\small $T[s{+}d..e]$ occurs at least twice in $T[1..e]$};

\node at (0,-3) {};
\end{tikzpicture}}

\caption{Illustration of the definition of a LZ-phrase and a $\Delta$-approximate phrase.}\label{fig:def_phrases}
\end{figure}

The intuition behind the above definition is that constructing the $\Delta$-approximate LZ factorisation becomes easier for larger
values of $\Delta$. In particular, for $\Delta=n$ one phrase is enough. We formalise this in the following lemma,
which is made more general for the purpose of obtaining the final result in this section.

\begin{lemma}
\label{lem:compute}
For any parameter $\Delta \in [1, m]$, a $\Delta$-approximate LZ factorisation of any fragment $T[x..y]$ of length $m$ can be
computed with $\Oh(m\sigma\log m/\sqrt{\Delta})$ comparisons plus $\Oh(n)$ comparisons shared by all invocations of the lemma.
\end{lemma}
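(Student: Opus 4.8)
The plan is to construct the $\Delta$-approximate LZ factorisation of $T[x..y]$ greedily from left to right, using a sparse suffix tree on a $\Delta$-cover of the relevant range as the engine for locating earlier occurrences of tails. First I would invoke \cref{lem:cover} with parameter $t=\Delta$ to obtain the difference cover $\D(\Delta)$, restricted to positions in $[x,y]$; this set has $\Oh(m/\sqrt{\Delta})$ elements. I would then build the sparse suffix tree of $T$ for the sample set $B=\D(\Delta)\cap[x,y]$ via \cref{lem:sparse}. With $b=|B|=\Oh(m/\sqrt{\Delta})$, that costs $\Oh(b\sigma\log b)=\Oh((m/\sqrt\Delta)\,\sigma\log m)$ comparisons, matching the claimed bound, plus the $\Oh(n)$ comparisons shared by all invocations (the union-find amortisation from \cref{lem:sparse} carries over verbatim, since all invocations share the same union-find structure over $T$'s positions).

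Next I would describe how to produce one $\Delta$-approximate phrase starting at a current position $s$. Write $s'$ for the first cover position that is $\geq s$; by the $t$-cover property applied to $s$ and $s$ (or directly by the structure of $\D(\Delta)$), we have $s'-s<\Delta$, so $T[s..s')$ is a legal head of length $<\Delta$. The tail must start at $s'\in B$, so its earlier occurrence — if one exists — can be detected using the sparse suffix tree: descend from the root along the path spelling $T[s'..y]$, and at each branching node check whether any already-inserted leaf corresponds to a suffix starting before the current candidate position; equivalently, precompute for each node the minimum leaf label in its subtree. The tail length is then the largest $\ell$ such that $T[s'..s'+\ell)$ occurs in $T[1..s'+\ell-1]$ with the occurrence witnessed at a cover position — and here the synchronising property of the difference cover is what guarantees that if $T[s'..s'+\ell)$ occurs earlier at all, it occurs earlier at some cover position shifted by $h(\cdot,\cdot)$, as long as $\ell\geq\Delta$ (so that the occurrence window contains a cover position). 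I would argue that this choice of tail, together with the head $T[s..s')$, satisfies both bullet points of the definition: the tail either is empty (when no long earlier occurrence exists) or occurs at least twice in $T[1..e]$; and since the standard LZ phrase $T[s..e']$ has $T[s..e'-1]$ occurring twice in $T[1..e'-1]$, and cover positions are dense enough that the earlier occurrence is witnessed within our sampled suffixes once we are past the head, the produced phrase reaches at least to $e'-1$, i.e.\ $e'-1\leq e$.

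Crucially, the comparisons are not charged per phrase but globally. Each descent in the sparse suffix tree issues comparisons, but I would route every comparison through the same union-find structure as in \cref{lem:sparse}: a comparison with a positive answer merges two components and so happens fewer than $n$ times total (absorbed into the shared $\Oh(n)$ term), and comparisons with negative answers during the tree construction are already counted in the $\Oh(b\sigma\log b)$ bound. Reading off the tail of an already-built phrase requires, in the worst case, re-walking the tree, but since each phrase's tail starts at a cover position and the tree already encodes all cover-suffix relationships, the tail length can be extracted from subtree-minimum labels without any further comparisons. The only subtle point — and the part I expect to be the main obstacle — is verifying the second bullet, namely that the greedily produced phrase is always at least as long as the standard LZ phrase minus one; this requires showing that whenever the standard LZ phrase at $s$ has length $\geq\Delta$ (so its prefix of length $\geq\Delta$ has an earlier occurrence), that earlier occurrence is necessarily captured by some sampled (cover) suffix via the shift function $h$, and that the sparse suffix tree descent therefore cannot terminate prematurely. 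This is exactly where the difference-cover synchronising property is essential, and where one must be careful that the earlier occurrence lies in a window long enough to contain a cover position. Summing the per-invocation cost $\Oh((m/\sqrt\Delta)\sigma\log m)$ and the shared $\Oh(n)$ term gives the statement.
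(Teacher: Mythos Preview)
Your overall architecture is right---difference cover plus sparse suffix tree from \cref{lem:sparse}, then greedy left-to-right phrase construction---and the comparison accounting via the shared union-find is fine. But the correctness argument for the second bullet of the definition has a real gap, and it stems from fixing the tail to start at the \emph{first} cover position $s'\geq s$.

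The synchronising property of a $\Delta$-cover says: for any two positions $i,j$, there is a common shift $h(i,j)<\Delta$ such that \emph{both} $i+h(i,j)$ and $j+h(i,j)$ are cover positions. It does \emph{not} say that if $T[s'..s'+\ell)$ has an earlier occurrence at some $p$, then $p$ (or any nearby position matching $T[s'..{}]$) is a cover position. Concretely: let the standard LZ phrase at $s$ have its earlier occurrence at $a<s$. Then $T[s'..{}]$ occurs earlier at $a+(s'-s)$, and there is no reason for $a+(s'-s)$ to lie in $\D(\Delta)$. Applying $h$ to the pair $(a+(s'-s),\,s')$ would shift \emph{both} coordinates, moving the tail start away from $s'$; so with $s'$ pinned, the sparse suffix tree (which only contains cover-position suffixes) need not witness any long match, and your phrase can fall short of $e'-1$.

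The paper avoids this by iterating over \emph{all} cover positions $x''\in[s,s+\Delta)\cap S$ as candidate tail starts (there are $\Oh(\sqrt{\Delta})$ of them), and for each, over all earlier cover positions $a'$ as candidate sources. The correctness witness is then $x''=s+h(a,s)$ and $a'=a+h(a,s)$, both guaranteed to be cover positions simultaneously. Once the sparse suffix tree is built, these LCP lookups cost no further comparisons, so the extra iteration is free in the comparison model. Your argument becomes correct if you replace ``first cover position'' by ``best cover position in $[s,s+\Delta)$,'' with ``best'' meaning the one maximising $x''+\lce$ against some earlier cover position; this is exactly the flexibility the difference cover is designed to provide.
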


\begin{proof}
By \cref{lem:cover}, there exists a $\Delta$-cover $\D(\Delta)$ of $\{1,\ldots,n\}$ with size $\Oh(n/\sqrt{\Delta})$.
Let $S=\D(\Delta)\cap\{x,x+1,\ldots,y\}$. Let $S=\{i_{1},i_{2},\ldots,i_{b}\}$.
It is straightforward to verify that the construction additionally guarantees
$b = \Oh(m/\sqrt{\Delta})$. We apply \cref{lem:sparse} on the suffixes $T[i_{1}..n],\ldots,T[i_{b}..n]$ to obtain
their sparse suffix tree $T$ with $\Oh(b\sigma\log b)$ comparisons plus $\Oh(n)$ comparisons shared by all invocations of the lemma.
$T$ allows us to obtain the longest common prefix of any two fragments $T[i..y]$ and $T[j..y]$, for
$i,j\in S$, with no additional comparisons.
By the properties of $\D(\Delta)$, for any $i,j\in \{x,x+1,\ldots,y-\Delta+1\}$ we have $0\leq h(i,j)<\Delta$
and $i+h(i,j), j+h(i,j)\in S$.

We compute the $\Delta$-approximate LZ factorisation of $T[x..y]$ phrase-by-phrase. Denoting the remaining suffix
of the whole $T[x..y]$ by $T[x'..y]$, we need to find $x' \leq y'\leq y$ such that $T[x'..y']$ is a 
$\Delta$-approximate phrase. This is done as follows.
We iterate over every $x'\leq x'' < x'+\Delta$ such that $x''\in S$.
For every such $x''$, we consider every $x\leq a' < x'$ such that $a'\in S$,
and compute the length $\ell$ of the longest common prefix of $T[x''..y]$ and $T[a'..y]$.
Among all such $x'',a'$ we choose the pair that results in the largest value of $x''-x'+\ell-1$
and choose the next phrase to be $T[x'..(x''+\ell-1)]$, with the head being $T[x'..(x''-1)]$ and the tail $T[x''..(x''+\ell)-1]$.
Finally, if there is no such pair, or the value of $x''-x'+\ell-1$ corresponding to the found pair is less than $\Delta-2$,
we take the next phrase to be $T[x'..\min\{x'+\Delta-1,y\}]$ (with empty tail).
Selecting such a pair requires no extra comparisons, as for every $x'',a'\in S$ we can
use the sparse suffix tree to compute $\ell$.
While it is clear that the generated $\Delta$-approximate phrase has the required form,
we need to establish that it is sufficiently long. 

Let $T[x'..y'']$ be the (unique) standard LZ phrase of $T[x..y]$ that is prefix of $T[x'..y]$. 
If $y''<x'+\Delta-1$ then we only need to ensure that the generated $\Delta$-approximate phrase
is of length at least $\min\{\Delta-1, y-x'+1\}$, which is indeed the case.
Therefore, it remains to consider the situation when $y''\geq x'+\Delta-1$.
Let $T[a..b]$ be the previous occurrence of $T[x'..(y''-1)]$ in $T[x..y]$
(because $T[x'..y'']$ is a phrase this is well defined).
Thus, $T[a..b]=T[x'..(y''-1)]$ and $a<x'$.
Because $y'' \geq x'+\Delta-1$ and $y''\leq y$, as explained above $0\leq h(a,x')<\Delta$
and $a+h(a,x'), x'+h(a,x')\in S$.
We will consider $x''=x'+h(a,x')$ and $a'=a+h(a,x')$ in the above procedure.
Next, $T[a'..b]=T[x''..(y''-1)]$,
so when considering this pair we will obtain $\ell \geq |T[x''..(y''-1)]|$.
Thus, for the found pair we will have $x''+\ell-1 \geq y''-1$ as required
in the definition of a $\Delta$-approximate phrase.
\end{proof}

Next, we show that even though the $\Delta$-approximate LZ factorisation does not capture all distinct squares, as it is the case for the standard LZ factorisation, it is still helpful in detecting all sufficiently long squares.
A crucial component is the following property of the $\Delta$-approximate LZ factorisation.

\begin{lemma}
\label{lem:longhelper}
Let $b_{1}b_{2}\ldots b_{z}$ be a $\Delta$-approximate LZ factorisation of a string $T$. For every square $T[s..s+2\ell - 1]$ of length $2\ell \geq 8\Delta$, there is at least one phrase $b_i$ with $\absolute{\tail(b_i)} \geq \frac \ell 4 \geq \Delta$ such that $\tail(b_i)$ and the right-hand side $T[s + \ell..s+2\ell - 1]$ of the square intersect.
\end{lemma}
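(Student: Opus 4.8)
The plan is to track the right half of the square, $T[s+\ell..s+2\ell-1]$, and argue that it cannot be "covered" entirely by heads of phrases, because heads are short (length $<\Delta$) while the right half has length $\ell \geq 4\Delta$. First I would set up notation: let $b_{i_1}, b_{i_2}, \ldots, b_{i_k}$ be exactly the phrases whose fragments intersect the right half $R = T[s+\ell..s+2\ell-1]$, listed left to right. Each such phrase $b_{i_j} = \head(b_{i_j})\tail(b_{i_j})$ contributes at most its tail to the "useful" portion of $R$; the heads contribute total length at most $(k)\cdot\Delta$ minus whatever spills outside $R$, but more carefully, the union of all head-fragments intersected with $R$ has total length less than $\Delta$ for all but possibly the two boundary phrases — no wait, each head individually has length $<\Delta$, so I need a counting bound on $k$ first, or a different decomposition.

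The cleaner route: I would instead locate a single position. Consider the position $p = s+\ell$ (the start of the right half) together with the phrase boundary structure. Since the right half has length $\ell \geq 4\Delta$, and each phrase head is shorter than $\Delta$, I claim there must be some phrase $b_i$ that starts within $R$ (or starts before $R$ but extends into it) whose \emph{tail} overlaps $R$ in a long stretch. Formally, suppose for contradiction that every phrase intersecting $R$ does so only within its head, or has a tail shorter than $\ell/4$ overlapping $R$. Walk across $R$ from left to right through consecutive phrases $b_{i_1},\ldots,b_{i_k}$: the portion of $R$ inside any one phrase is at most $\absolute{\head(b_{i_j})} + (\text{tail overlap}) < \Delta + \ell/4$. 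But also I can bound $k$: every phrase except possibly the first contributes at least $1$ character, and more usefully, if a phrase's tail-overlap with $R$ is short, its total overlap with $R$ is $< \Delta + \ell/4$; if we had, say, at most $3$ phrases touching $R$, then $R$ is covered by $3$ pieces each of length $<\Delta+\ell/4 < \Delta + \ell/4$, giving $\ell < 3\Delta + 3\ell/4$, i.e. $\ell/4 < 3\Delta$, i.e. $\ell < 12\Delta$ — that is too weak. So I need the constant $8$ in $2\ell\geq 8\Delta$ to do real work, meaning $\ell\geq 4\Delta$, and I should aim the bound so that the phrase whose tail is long exists.

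The right argument, I expect, uses the second bullet of the $\Delta$-approximate phrase definition — that the phrase is at least as long as the standard LZ phrase at that position — to force tails to be long in repetitive regions, which $R$ is, since $R = T[s+\ell..s+2\ell-1]$ is a copy of the left half $T[s..s+\ell-1]$ and hence occurs earlier. Concretely: take the phrase $b_i$ containing position $s+\ell$. If its head alone does not reach past $s+\ell + \ell/4$, then consider the standard LZ phrase starting at the start of $b_i$; because the text from that point matches something occurring earlier (it lies in the right copy of an already-seen block), the LZ phrase — and hence $b_i$ by the second bullet — extends at least $\ell/4$ into $R$, and since the head is $<\Delta \leq \ell/4$, the tail itself overlaps $R$ by at least $\ell/4 - \Delta \geq$ ... hmm, that is not quite $\ell/4$ either. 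I would therefore phrase the conclusion as: the tail overlaps $R$ and has length $\geq \ell/4$ (the tail length, not its overlap), chaining: the phrase containing the breakpoint either already has a long tail overlapping $R$, or we move to the next phrase, and a potential-function / extremal argument (pick the phrase maximizing right endpoint among those starting at or before $s+5\ell/4$) shows some tail is long. The main obstacle will be making the case analysis tight enough that the constant $8$ suffices and that it is genuinely the \emph{tail} (not the whole phrase) that is long and that it is the overlap with $R$ specifically — I anticipate needing to carefully use that $\tail(b_i)$ occurs at least twice in $T[1..e]$ and invoke the LZ-length lower bound (second bullet) applied to the suffix starting at the phrase boundary, whose match comes from the left half of the square sitting entirely before position $s+\ell$.
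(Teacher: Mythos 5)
Your intuition is on the right track: the argument must combine the second bullet of the $\Delta$-approximate-phrase definition with the fact that a position $x$ strictly inside the right half $R = T[s+\ell..s+2\ell-1]$ has an earlier occurrence of $T[x..s+2\ell-1]$ at $x-\ell$. But the phrase you anchor on — the one \emph{containing} $s+\ell$ — is the wrong one. Its start may lie to the left of $R$ (or even to the left of $s$), where the square gives no earlier occurrence, so the LZ-length lower bound gives you nothing at that start position; and if it starts exactly at $s+\ell$, you end up with the $\ell/4 - \Delta$ slack you correctly flag. The direction of your counting is also off: you try to bound the number of intersecting phrases from \emph{above} and check whether that covers $R$, which leads to the $\ell < 12\Delta$ dead end.

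The paper's proof runs the implication the other way and zeroes in on a phrase \emph{properly contained} in $R$. Argue by contradiction: assume every tail that intersects $R$ has $\absolute{\tail(b_i)} < \ell/4$. Since each head has length $<\Delta \leq \ell/4$, every phrase intersecting $R$ has length $< \ell/2$, and since $\absolute{R}=\ell$, at least three consecutive phrases must intersect $R$. The middle one, $b_i = T[x..y]$, then satisfies $s+\ell < x \leq y < s+2\ell-1$. Because $x$ lies strictly inside $R$, the square gives $T[x..s+2\ell-1] = T[x-\ell..s+\ell-1]$ with $x-\ell < x$, so $T[x..s+2\ell-1]$ occurs twice in $T[1..s+2\ell-1]$ and the standard LZ phrase $T[x..e']$ at $x$ has $e' \geq s+2\ell$. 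The second bullet of the definition then forces $y \geq e'-1 \geq s+2\ell-1$, contradicting $y < s+2\ell-1$. The contradiction is with the hypothesis that all intersecting tails are short, so some tail intersecting $R$ has length $\geq \ell/4$, exactly as stated. Notice also that the lemma's conclusion is phrased in terms of $\absolute{\tail(b_i)}$, not the length of the \emph{overlap} of the tail with $R$, which is what makes the contradiction hypothesis cleanly refutable; your attempt to carry overlap lengths through the casework is the source of the slack you could not close.
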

\begin{proof}
Assume that all tails that intersect $T[s + \ell..s+2\ell - 1]$ are of length less than $\frac \ell 4$, then the respective phrases of these tails are of length at most $\frac \ell 4 + \Delta - 1$ (because each head is of length less than $\Delta$). 
This means that $T[s + \ell..s+2\ell - 1]$ intersects at least $\ceil{\ell / (\frac \ell 4 + \Delta - 1)} \geq \ceil{\ell / (\frac \ell 2 - 1)} = 3$ phrases.
Thus there is some phrase $b_i = T[x..y]$ properly contained in $T[s+\ell..s+2\ell-1]$, formally $s+\ell < x \leq y < s + 2\ell - 1$.
However, this contradicts the definition of the $\Delta$-approximate LZ factorisation because $T[x..s+2\ell]$ is the prefix of a standard LZ phrase (due to $T[x..s+2\ell - 1] = T[x - \ell..s+\ell - 1]$), and the $\Delta$-approximate phrase $b_i = T[x..y]$ must satisfy $y \geq s + 2\ell - 1$.
The contradiction implies that $T[s+\ell..s+2\ell - 1]$ intersects a tail of length at least $\frac \ell 4 \geq \Delta$.
\end{proof}

\begin{lemma}
\label{lem:long}
Given a $\Delta$-approximate LZ factorisation $T = b_{1}b_{2}\ldots b_{z}$,
we can detect a square of size $\geq8\Delta$ in $\Oh\left(\sum_{\absolute{\tail\left(b_{i}\right)} \geq \Delta}\absolute{\tail\left(b_{i}\right)} + z\right)$ time and $\Oh\left(\sum_{\absolute{\tail\left(b_{i}\right)} \geq \Delta}\absolute{\tail\left(b_{i}\right)}\right)$ comparisons.
\end{lemma}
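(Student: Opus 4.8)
The plan is to use Lemma~\ref{lem:longhelper} as the structural backbone: any square of length $2\ell \geq 8\Delta$ must have its right half $T[s+\ell..s+2\ell-1]$ intersect the tail of some phrase $b_i$ with $|\tail(b_i)| \geq \ell/4 \geq \Delta$. So it suffices to search, for each ``long'' phrase $b_i = T[x..y]$ with $|\tail(b_i)| \geq \Delta$, for squares whose right half overlaps the tail fragment $T[x'..y]$ (where $x' = x + |\head(b_i)|$ is the start of the tail). Since the tail has an earlier occurrence in $T[1..y]$, there is a source fragment $T[x''..y'']$ with $T[x''..y''] = T[x'..y]$ and $x'' < x'$; in particular, the tail together with everything to its left is ``known'' in the sense that LCE queries anchored in the tail can be answered by redirecting to the source. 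I would exploit this to run, around each long tail, a bounded-window version of the Main--Lorentz machinery (Lemma~\ref{lem:conquer}).

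\textbf{Key steps.} First, for each long phrase $b_i$ with tail $T[x'..y]$, I consider the fragment $T[\max(1,x'-|\tail(b_i)|-\text{slack})..y]$ and look for squares whose period $\ell$ satisfies $\Delta \leq \ell \leq 4|\tail(b_i)|$ (by Lemma~\ref{lem:longhelper} we only need periods in this range to be caught by \emph{this} tail, and every long square is caught by \emph{some} tail). Concretely, I would partition the string into blocks of geometrically growing sizes aligned so that for each relevant period scale, a constant number of block pairs of the appropriate size covers all square occurrences whose right half hits the tail; then apply Lemma~\ref{lem:conquer} to each such block pair. The crucial accounting point: a square of period $\ell$ caught via tail $b_i$ contributes work proportional to $\ell = \Oh(|\tail(b_i)|)$, and each scale contributes $\Oh(|\tail(b_i)|)$, and there are $\Oh(\log|\tail(b_i)|)$ scales --- but to get the clean bound $\Oh\big(\sum_{|\tail(b_i)|\geq\Delta} |\tail(b_i)|\big)$ without the log, I would instead, for each long tail, directly run a single call to the Main--Lorentz \emph{concatenation} procedure on $xy$ where $y$ is (a suitable constant-factor neighborhood of) the tail and $x$ is everything before it --- but that costs $\Oh(|x|)$, which is too much. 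The right move is: use the earlier occurrence to reduce the left context. Since $T[x'..y]$ occurs earlier, squares of period $\ell \le |\tail(b_i)|/2$ whose right half lies in $T[x'+\ell..y]$ correspond to squares already present in (a bounded neighborhood of) the earlier occurrence; only squares whose right half starts in $T[x'..x'+\ell)$, i.e. straddling the head/tail boundary region of width $\Oh(|\tail(b_i)|)$, are genuinely new and need to be found by running Lemma~\ref{lem:conquer} on $x \mathbin{\cdot} y$ with $|y| = \Oh(|\tail(b_i)|)$ and $x$ of length $\Oh(\ell) = \Oh(|\tail(b_i)|)$. Summing over long phrases gives $\Oh\big(\sum_{|\tail(b_i)|\geq\Delta}|\tail(b_i)|\big)$ comparisons, plus an $\Oh(z)$ additive term to iterate over all phrases (most of which are short and skipped).

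\textbf{Main obstacle.} The delicate part is making the window around each long tail exactly large enough to catch \emph{every} square that Lemma~\ref{lem:longhelper} assigns to it, while keeping the total window length $\Oh(|\tail(b_i)|)$ so the comparison budget closes. Lemma~\ref{lem:longhelper} only guarantees that the right half \emph{intersects} a long tail; the square itself has length $2\ell \leq 8|\tail(b_i)|$ (since $|\tail(b_i)| \geq \ell/4$), so the whole square lives within distance $8|\tail(b_i)|$ of the tail --- this is the key quantitative fact that bounds the window. I would therefore run Lemma~\ref{lem:conquer} (or a direct square search) on the fragment $T[y - c|\tail(b_i)|\, ..\, y]$ for a suitable constant $c$ (say $c = 10$), guaranteeing that every square assigned to $b_i$ is fully contained in this fragment and hence detected; the length of this fragment is $\Oh(|\tail(b_i)|)$, so the cost is $\Oh(|\tail(b_i)|)$ comparisons and time. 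The only remaining subtlety is that Lemma~\ref{lem:conquer} as stated finds squares in $xy$ crossing the $x|y$ boundary, not all squares in a single fragment; to get \emph{all} squares in $T[y-c|\tail(b_i)|..y]$ one recurses (the $\Oh(n\log n)$ Main--Lorentz scheme), which would reintroduce a log factor on the window. This is avoided by noting we do not need all squares in the window --- we need only squares whose right half hits the tail, and those all \emph{cross} a single boundary, namely the left end of the tail; so one application of Lemma~\ref{lem:conquer} with $x$ = the bounded left context and $y$ = the tail suffices, at cost $\Oh(|x| + |y|) = \Oh(|\tail(b_i)|)$. I would finish by verifying correctness: every square of length $\geq 8\Delta$ is caught by Lemma~\ref{lem:longhelper} at some long tail, and our per-tail search over that tail's assigned periods finds it; conversely we only report genuine squares, since Lemma~\ref{lem:conquer} is exact.
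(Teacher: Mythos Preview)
Your overall strategy is the paper's: for each phrase $b_i$ with $|\tail(b_i)|\geq\Delta$, run a bounded-window Main--Lorentz search of size $\Oh(|\tail(b_i)|)$ and appeal to Lemma~\ref{lem:longhelper} for correctness. The accounting is right. However, the concrete window and boundary you settle on in the last paragraph do not work.

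You assert that every square whose \emph{right half} intersects the tail must cross the \emph{left} endpoint of the tail. Writing $\tail(b_i)=T[a_2..a_3]$, $k=a_3-a_2+1$, and the square as $T[s..s+2\ell-1]$, the intersection only gives $a_2\leq s+2\ell-1$ and $s+\ell\leq a_3$; nothing forces $s<a_2$. Concretely, if $a_2\leq s$ and $s+2\ell-1>a_3$ (compatible with $k\geq\ell/4$ whenever $k<2\ell$), the square starts inside the tail and extends past its right end: it crosses $a_3$ but not $a_2$. Your single call to Lemma~\ref{lem:conquer} at the left end of the tail misses this case, and the earlier-occurrence argument does not apply because the square is not fully contained in the tail. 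Relatedly, your window $T[y-c|\tail(b_i)|..y]$ stops at $a_3$, but since $\ell\leq 4k$ and $s+\ell\leq a_3$ the square can end as far right as $a_3+4k-1$.

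The paper's fix is small: take the window $T[a_2-8k..a_3+4k-1]$ and apply Lemma~\ref{lem:conquer} \emph{twice}, once with the split at $a_2$ and once at $a_3$. Then either the square crosses $a_2$ (first call finds it), or it crosses $a_3$ (second call finds it), or it lies strictly inside $T[a_2..a_3]$, in which case the earlier occurrence of the tail yields a strictly earlier square of the same length and an induction on the leftmost square of length $\geq 8\Delta$ finishes. The cost is still $\Oh(k)=\Oh(|\tail(b_i)|)$ per long phrase, so your budget closes once these two corrections are made.
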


\begin{proof}
\newcommand{\targettime}{\Sigma_{\textsf{target}}}
We consider each phrase $b_i = T[a_1..a_3]$ with $\head(b_i) = T[a_1..a_2 - 1]$ and $\tail(b_i) = T[a_2..a_3]$ separately. Let $k = \absolute{\tail(b_i)}$. 
If $k \geq \Delta$, we apply \cref{lem:conquer} to $x_1 = T[a_2 - 8k..a_2 - 1]$ and $y_1 = T[a_2..a_3 + 4k - 1]$, as well as $x_2 = T[a_2 - 8k..a_3 - 1]$ and $y_2 = T[a_3..a_3 + 4k - 1]$ trimmed to $T[1..n]$. This takes $\Oh(\absolute{\tail(b_i)})$ time and comparisons, or $\Oh\left(\sum_{\absolute{\tail\left(b_{i}\right)} \geq \Delta}\absolute{\tail\left(b_{i}\right)}\right)$ time and comparisons for all phrases. We need additional $\Oh(z)$ time to check if $k \geq \Delta$ for each phrase.

Now we show that the described strategy detects a square of size at least $8\Delta$. Let $T[s..s+2\ell - 1]$ be any such square.
Due to \cref{lem:longhelper}, the right-hand side $T[s+\ell..s+2\ell - 1]$ of this square intersects some tail $\tail(b_i) = T[a_2..a_3]$ of length $k = \absolute{\tail(b_i)} \geq \frac \ell 4 \geq \Delta$.
Due to the intersection, we have $a_2 \leq s +2\ell - 1$ and $a_3 \geq s + \ell$.
Thus, when processing $b_i$ and applying \cref{lem:conquer}, the starting position of $x_1$ and $x_2$ satisfies $a_2 - 8k \leq s +2\ell - 1 - 8\frac \ell 4 = s - 1$, while the end position of $y_1$ and $y_2$ satisfies $a_3 + 4k - 1 \geq s + \ell + 4\frac \ell 4 - 1 = s + 2\ell - 1$. 
Therefore, the square is entirely contained in the respective fragments corresponding to $x_1y_1$ and $x_2y_2$.
If $s < a_2 \leq s + 2\ell - 1$, we find the square with our choice of $x_1$ and $y_1$.
If $s < a_3 \leq s + 2\ell - 1$, we find the square with our choice of $x_2$ and $y_2$.
Otherwise, $T[s..s + 2\ell - 1]$ is entirely contained in tail $T[a_2..a_3]$, and we find another occurrence of the square further to the left.
\end{proof}

\subsection{Simple Algorithm for Detecting Squares}

Now we have all the tools to introduce our simple method for testing square-freeness of $T[1..n]$ using $\Oh(n (\log \sigma + \log \log n))$ comparisons, assuming that $\sigma$ is known in advance.
Let $\Delta = (\sigma\log n)^2$. We partition $T[1..n]$ into blocks of length $8\Delta$, and denote the $k^{\text{th}}$ block by $B_{k}$.
A square of length at most $8\Delta$ can be found by invoking \cref{lem:classical} on $B_{1}B_{2}$, $B_{2}B_{3}$, and so on. 
This takes $\Oh(\Delta \log \Delta) = \Oh(\Delta(\log \sigma + \log\log n))$ comparisons for each pair of adjacent blocks, or $\Oh(n(\log \sigma + \log\log n))$ comparisons in total. 
It remains to test for squares of length exceeding $8\Delta$.
This is done by
first invoking \cref{lem:compute} to compute a $\Delta$-approximate LZ factorisation of $T[1..n]$ with $\Oh(n\sigma\log n/\sqrt{\Delta}) = \Oh(n)$ comparisons, and then using \cref{lem:long}, which adds another $\Oh(n)$ comparisons. 
The total number of comparisons is dominated by the $\Oh(n(\log \sigma + \log \log n))$ comparisons needed to apply \cref{lem:classical} to the block pairs.

\subsection{Improved Algorithm for Detecting Squares}
\label{sec:improved}

We are now ready to describe the algorithm that uses only $\Oh(n\log \sigma)$ comparisons without knowing the 
value of $\sigma$. Intuitively, we will proceed in phases, trying to ``guess'' the value of $\sigma$. We first observe that
\cref{lem:compute} can be extended to obtain the following.

\begin{lemma}
\label{lem:compute2}
There is an algorithm that, given any parameter $\Delta \in [1,m]$, estimate $\tilde{\sigma}$ and fragment $T[x..y]$ of length $m$, uses
$\Oh(m\tilde{\sigma}\log m/\sqrt{\Delta})$ comparisons plus $\Oh(n)$ comparisons shared by all invocations of the lemma, and either computes a $\Delta$-approximate LZ factorisation of $T[x..y]$ or
determines that $\sigma>\tilde{\sigma}$.
\end{lemma}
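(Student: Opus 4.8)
The plan is to essentially rerun the proof of \cref{lem:compute}, but with a built-in ``budget check'' that aborts as soon as the actual cost exceeds what it would be if $\sigma \leq \tilde\sigma$. Recall that in \cref{lem:compute} the only place where $\sigma$ enters the comparison count is through the descent in \cref{lem:sparse}, where at each of the $\Oh(\log b)$ ``bad'' steps we may perform up to $d \leq \sigma$ comparisons with negative answer at a branching node. So the first step is to modify the sparse suffix tree construction so that, whenever we process a branching node $v$ during an insertion and are about to examine its children $v_2, \ldots, v_d$ one by one, we simply stop and declare $\sigma > \tilde\sigma$ the moment we have examined more than $\tilde\sigma$ children at a single node (equivalently, the moment a node is discovered to have more than $\tilde\sigma$ children, or the moment the running total of negative-answer comparisons in the current insertion exceeds, say, $\tilde\sigma(1 + \log b)$). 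If no abort ever happens, then every branching node encountered had at most $\tilde\sigma$ children, the analysis of \cref{lem:sparse} goes through verbatim with $\sigma$ replaced by $\tilde\sigma$, and the sparse suffix tree is correctly built using $\Oh(b\tilde\sigma\log b)$ comparisons plus the shared $\Oh(n)$ union-find comparisons.

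Next I would feed this modified routine into the proof of \cref{lem:compute} unchanged: choose $S = \D(\Delta) \cap \{x, \ldots, y\}$ of size $b = \Oh(m/\sqrt\Delta)$, build the (possibly aborted) sparse suffix tree, and if it completed, run the phrase-by-phrase construction exactly as before. The correctness argument (that each produced phrase is a genuine $\Delta$-approximate phrase and is long enough, via the difference-cover synchronization property) does not refer to $\sigma$ at all, so it carries over without change. The total comparison count is then $\Oh(b\tilde\sigma\log b) + \Oh(n) = \Oh(m\tilde\sigma\log m/\sqrt\Delta) + \Oh(n)$ shared, matching the claim. The only remaining point is soundness of the abort: I must argue that if the routine declares $\sigma > \tilde\sigma$, then indeed $\sigma > \tilde\sigma$. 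This is immediate from the structure of a trie --- if a node of the (sparse) suffix tree has more than $\tilde\sigma$ children, then more than $\tilde\sigma$ distinct characters appear in $T$ at the positions following the occurrences of the string spelled by that node, hence $\sigma > \tilde\sigma$. (If one instead uses the ``running total of negative comparisons'' trigger, the same conclusion follows because exceeding the bound $\tilde\sigma(1+\log b)$ forces some single node to have had more than $\tilde\sigma$ examined children, by the same $1 + \log b$ bound on bad descent steps.)

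The main obstacle, and the one point that needs slightly more care than the above sketch suggests, is making sure the abort condition is checked cheaply and at the right granularity: we want the cost accounted \emph{before} the problematic comparisons push us over budget, so the check must happen per-child during the scan of $v_2, \ldots, v_d$, not only after finishing a node. Concretely, I would maintain a counter of children examined at the current node and of negative comparisons in the current insertion, test it in $\Oh(1)$ time before each character comparison on an edge to $v_j$ with $j \geq 2$, and abort the entire computation immediately upon exceeding the threshold. Since each such test is $\Oh(1)$ Word-RAM work attributable to a comparison we were about to make (or just made), it does not affect the asymptotic comparison count, and in the abort case we have performed only $\Oh(m\tilde\sigma\log m/\sqrt\Delta)$ comparisons so far, which is within the stated bound. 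With that bookkeeping in place the lemma follows.
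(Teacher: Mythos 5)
Your proposal is correct and takes essentially the same approach as the paper: run the procedure from Lemma~\ref{lem:compute} while tracking negative-answer comparisons, and abort (concluding $\sigma>\tilde\sigma$) as soon as the budget is exceeded. The paper uses a single global counter compared against the $\Oh(m\tilde\sigma\log m/\sqrt\Delta)$ bound, whereas you additionally propose the finer per-node trigger (abort once some node has $>\tilde\sigma$ children), which gives an arguably cleaner soundness argument but amounts to the same idea.
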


\begin{proof}
We run the procedure described in the proof of \cref{lem:compute} and keep track of the number of comparisons
with negative answer. As soon as it exceeds $\Oh(m\tilde{\sigma}\log m/\sqrt{\Delta})$ (where the constant follows
from the complexity analysis) we know that necessarily $\sigma>\tilde{\sigma}$, so we can terminate. Otherwise, the algorithm
obtains a $\Delta$-approximate LZ factorisation with $\Oh(m\tilde{\sigma}\log m/\sqrt{\Delta})$ comparisons.
Comparisons with positive answer are paid for globally.
\end{proof}

Now we describe how to find any square using $\Oh(n \lg \sigma)$ comparisons.
We define the sequence $\sigma_t=2^{2^{\lceil\log\log n\rceil-t}}$, for $t=0,1,\ldots,\lceil\log\log n\rceil$.
We observe that $\sigma_{t-1}=(\sigma_{t})^{2}$, and proceed in phases corresponding to the values of $t$.
In the $t^{\text{th}}$ phase we are guaranteed that any square of length at least $(\sigma_{t})^{2}$ has been
already detected, and we aim to detect square of length less than $(\sigma_{t})^{2}$, and at least $\sigma_t$. We partition
the whole $T[1..n]$ into blocks of length $(\sigma_{t})^{2}$, and denote the $k^{\text{th}}$ block by $B_{k}$.
A square of length less than $(\sigma_{t})^{2}$ is fully contained within some two consecutive blocks $B_{i}B_{i+1}$,
hence we consider each such pair $B_{1}B_{2}$, $B_{2}B_{3}$, and so on.
We first apply \cref{lem:compute2} with $\Delta=\sigma_{t}/8$ and $\tilde{\sigma}=(\sigma_{t})^{1/4}/\log(\sigma_{t})$ to find
an $(\sigma_{t}/8)$-approximate LZ factorisation of the corresponding fragment of $T[1..n]$, and then
use \cref{lem:long} to detect squares of length at least $\sigma_t$.
We cannot always afford to apply \cref{lem:long} to all block pairs. Thus, we have to deactivate some of the blocks, which we explain when analysing the number of comparisons performed by the algorithm.
If any of the calls to \cref{lem:compute2} in the current phase detects that $\sigma>\tilde{\sigma}$, we switch to
applying \cref{lem:classical} on every pair of blocks $B_{i}B_{i+1}$ of the current phase and then terminate the whole algorithm.

We now analyse the total number of comparisons, ignoring the $\Oh(n)$ comparisons shared by all invocations of
\cref{lem:compute2}.
Throughout the $t^{\text{th}}$ phase, we use $\Oh(n \cdot \tilde{\sigma} \log \sigma_t / \sqrt{\Delta}) = \Oh(n\cdot (\sigma_{t})^{1/4}/\log(\sigma_{t})\cdot \log(\sigma_{t})/\sqrt{\sigma_{t}})=\Oh(n/(\sigma_{t})^{1/4})$ comparisons
to construct the $\Delta$-approximate factorisations (using \cref{lem:compute2})
until we either process all pairs of blocks or detect that $\sigma>(\sigma_{t})^{1/4}/\log(\sigma_{t})$. 
In the latter case, we finish off the whole computation with $\Oh(n\log(\sigma_{t}))$ comparisons (using \cref{lem:classical}), and by assumption on
$\sigma$ this is $\Oh(n\log \sigma)$ as required.
Until this happens (or until we reach phase $t=\ceil{\log \log n}-3$ where $\sigma_t \leq 256$), we use $\Oh(\sum_{t=0}^{t'} n/(\sigma_{t})^{1/4})$ comparisons
to construct the $\Delta$-approximate factorisations,
for some $0\leq t'\leq \lceil\log\log n\rceil$.
To analyse the sum, we need the following bound (made more general for the purpose of the next section).

\begin{lemma}
\label{lem:sum}
For any $0\leq x\leq y$ and $c \geq 0$ we have $\sum_{i=x}^{y} 2^{ic}/2^{2^{i}} = \Oh(2^{xc}/2^{2^{x}})$.
\end{lemma}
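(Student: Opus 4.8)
The plan is to prove the geometric-type bound $\sum_{i=x}^{y} 2^{ic}/2^{2^i} = \Oh(2^{xc}/2^{2^i})$ by factoring out the first term and showing the remaining sum is bounded by an absolute constant, independently of $x$, $y$, and $c$. Concretely, write each summand as
\[
\frac{2^{ic}}{2^{2^i}} = \frac{2^{xc}}{2^{2^x}} \cdot \frac{2^{(i-x)c}}{2^{2^i - 2^x}},
\]
so that it suffices to show $\sum_{i=x}^{y} 2^{(i-x)c} / 2^{2^i - 2^x} \leq C$ for some universal constant $C$. Reindexing with $j = i - x \geq 0$ and using $2^i - 2^x = 2^x(2^j - 1)$, the task reduces to bounding $\sum_{j \geq 0} 2^{jc} / 2^{2^x(2^j-1)}$.

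The key point is that the doubly-exponential denominator dominates the exponential numerator regardless of how large $c$ is, because $2^j - 1$ grows like $2^j$ while $jc$ grows only linearly in $j$. First I would handle $j=0$ separately (it contributes exactly $1$). For $j \geq 1$, I would use $2^x \geq 1$ (since $x \geq 0$) to get $2^{2^x(2^j-1)} \geq 2^{2^j - 1}$, reducing to $\sum_{j\geq 1} 2^{jc}/2^{2^j-1} = 2\sum_{j\geq 1} 2^{jc - 2^j}$. Then I would argue that for each fixed $c$ this sum converges, and moreover is uniformly bounded in $c$: the function $j \mapsto jc - 2^j$ is maximized near $j \approx \log_2(c/\ln 2)$, where its value is $\Oh(c \log c)$, which unfortunately does grow with $c$ — so a naive term-by-term bound is not uniform. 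The fix is to keep the $2^x$ factor in play and not discard it: bound the summand by comparing consecutive terms, showing the ratio $\frac{2^{(j+1)c}/2^{2^x(2^{j+1}-1)}}{2^{jc}/2^{2^x(2^j-1)}} = 2^c / 2^{2^x 2^j} = 2^{c - 2^x 2^j}$ is at most $\tfrac12$ once $2^x 2^j \geq c+1$, i.e. once $j \geq \log_2(c+1)$. For the initial range $0 \leq j < \log_2(c+1)$ there are only $\Oh(\log c)$ terms, each at most... hmm, this still threatens a $\log c$ factor.

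Actually the cleanest route, which I expect to be the intended one, is to prove the statement only needs to be uniform because $c$ in the application is a fixed constant; but to be safe I would instead prove the sharper claim directly: split the index range at the threshold $i^* = \lceil \log_2(c+1) \rceil$ (interpreting appropriately when $x$ already exceeds this). For $i \geq \max(x, i^*)$, consecutive ratios are $\leq 1/2$, so that tail is a geometric series summing to at most twice its first term, which is $\Oh(2^{xc}/2^{2^x})$ after checking the first term of the tail is comparable to the $i=x$ term — this needs the observation that between $x$ and $i^*$ the terms are actually \emph{increasing} by at most a bounded total factor, OR that $i^* \leq x$ in the regime that matters. The honest resolution: when $x \geq i^*$ the whole sum is a decaying geometric series from the start and we are done immediately with constant $4$; when $x < i^*$, one shows $2^{xc}/2^{2^x} \geq 2^{i^* c}/2^{2^{i^*}}$ is false in general, so instead I bound $\sum_{i=x}^{i^*} 2^{ic}/2^{2^i}$ crudely by $(i^* - x + 1) \cdot \max_i 2^{ic}/2^{2^i}$ and absorb — the main obstacle is precisely making this middle-range bound clean. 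I would therefore present the proof assuming we exploit that the maximum of $2^{ic}/2^{2^i}$ over all integers $i\ge 0$ is itself $\Oh(1)$ when... no.

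\begin{proof}
It suffices to show that $\sum_{i=x}^{y} 2^{ic}/2^{2^i} \le 4 \cdot 2^{xc}/2^{2^x}$. Dividing through, we must bound $\sum_{i=x}^{y} 2^{(i-x)c} \cdot 2^{2^x - 2^i}$. Consider the ratio of consecutive summands (for $i \ge x$):
\[
\frac{2^{(i+1-x)c} \cdot 2^{2^x - 2^{i+1}}}{2^{(i-x)c}\cdot 2^{2^x-2^i}} = 2^{c} \cdot 2^{2^i - 2^{i+1}} = 2^{c - 2^i} \le 2^{c - 2^x} \cdot 2^{2^x - 2^i}.
\]
We split at $i_0 = x + \lceil \log_2(c+1)\rceil$. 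For $i \ge i_0$ we have $2^i \ge 2^{i_0} \ge 2^x(c+1) \ge c+1$, hence the ratio above is at most $2^{c - 2^i} \le 2^{-1}$; thus $\sum_{i \ge i_0} (\text{summand}) \le 2 \cdot (\text{summand at } i_0)$, and since the summand at $i_0$ is at most $1$ (as $2^i \ge 2^x$ gives $2^{2^x - 2^i} \le 1$, while $2^{(i_0-x)c} \le 2^{2^x - 2^{i_0}} \cdot (\text{something} \le 1)$ — more directly, the summand at $i_0$ equals the summand at $x$ times $\prod_{i=x}^{i_0-1} 2^{c-2^i}$, and each factor $2^{c - 2^i}$ with $i < i_0$ satisfies $2^i < 2^{i_0}$, yet the product telescopes to $2^{(i_0-x)c - (2^{i_0}-2^x)}$ which is $\le 1$ because $2^{i_0} - 2^x = 2^x(2^{\lceil\log_2(c+1)\rceil} - 1) \ge 2^x(c) \ge (i_0-x)c$ would require $2^x \cdot c \ge (i_0 - x) c$, i.e. $2^x \ge i_0 - x = \lceil\log_2(c+1)\rceil$, which can fail).
\end{proof}

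I should stop — the honest statement is that the proof goes through cleanly when one is allowed to absorb a $\log c$ factor into the $\Oh$-constant treating $c$ as a constant (as the paper does, applying it with small fixed $c$), and the genuinely careful argument handles the middle range $x \le i \le i_0$ by noting it contains $\Oh(\log c)$ terms each of size $\Oh(2^{xc}/2^{2^x})$ after verifying the summands do not grow: indeed summand at $i$ over summand at $x$ is $\prod_{j=x}^{i-1} 2^{c - 2^j} \le 2^{(i-x)(c-2^x)} \le 1$ whenever $2^x \ge c$, and when $2^x < c$ one instead bounds the at most $\log_2 c + x$ relevant small indices directly. The main obstacle, as flagged, is exactly this middle range; everything else is the routine geometric tail estimate.
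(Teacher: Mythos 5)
The lemma's implicit convention---made explicit in its sole application, \cref{lem:polylog}, where $c$ is a fixed constant---is that the $\Oh$-constant may depend on $c$ but not on $x$ or $y$. You spent most of the proposal chasing a bound uniform in $c$, which does not exist: taking $x=0$, the ratio $\bigl(\sum_{i\ge 0} 2^{ic-2^i}\bigr) \big/ \bigl(2^{0\cdot c}/2^{2^0}\bigr)$ is unbounded in $c$, since the term at $i\approx\log_2(c/\ln 2)$ is of order $2^{\Theta(c\log c)}$. You noticed this obstruction yourself; the missing inference was that the lemma simply does not ask for uniformity in $c$, so there is no ``$\log c$ factor to absorb''---the constant is just a function of $c$, and that is permitted.

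Once $c$ is treated as fixed, your own first reduction already finishes the job cleanly and there was no need for the case analysis on $i_0 = x + \lceil\log_2(c+1)\rceil$ at all: factor out the $i=x$ term, reindex by $j=i-x$, and use $2^x\ge 1$ to get
\[
\sum_{j\ge 0}\frac{2^{jc}}{2^{2^x(2^j-1)}}\;\le\;\sum_{j\ge 0}\frac{2^{jc}}{2^{2^j-1}}\;=\;2\sum_{j\ge 0}2^{jc-2^j},
\]
and this last series converges by the ratio test (the ratio of consecutive terms is $2^{c-2^j}\to 0$) to a constant depending only on $c$. That is a complete and correct proof.

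Comparing with the paper: the paper instead observes that as $i$ ranges over $\{x,\dots,y\}$ the exponents $2^i$ are distinct integers, so $\sum_{i=x}^{y}2^{ic}/2^{2^i}\le\sum_{j\ge 2^x}j^c/2^j$; it then reindexes by $j=2^x+i$, bounds $(2^x+i)^c\le 2^{xc}(i+1)^c$ using $2^x\ge 1$, and factors out $2^{xc}/2^{2^x}$, finishing with the ratio test on $\sum_{i\ge 0}(i+1)^c/2^i$. Both routes reduce to a ratio-test argument with a constant depending on $c$; the paper ``linearizes'' the double exponential to a single exponential first, whereas your factor-out approach keeps the double exponential and relies on the even faster decay. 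Either is fine.

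Concretely, what must be fixed in your write-up: the inline \texttt{proof} environment is incomplete and contains an incorrect step (the attempted bound on the summand at $i_0$ is acknowledged mid-sentence to ``fail''), and it trails off without a conclusion. Replace it with the two-line argument above and delete the uniformity discussion.
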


\begin{proof}
We observe that the sequence of exponents $2^{i}$ is strictly increasing from $i=0$, hence
\[ \sum_{i=x}^{y} \frac{2^{ic}}{2^{2^{i}}} \leq \sum_{i=2^{x}}^{2^{y}} \frac{i^c}{2^{i}} \leq \sum_{i=2^{x}}^{\infty} \frac{i^c}{2^{i}} = \sum_{i=0}^{\infty} \frac{(2^x+i)^c}{2^{(2^x+i)}} \leq \sum_{i=0}^{\infty} \frac{2^{xc} \cdot (i+1)^c}{2^{(2^x+i)}} =
\frac{2^{xc}}{2^{2^x}} \cdot \sum_{i=0}^{\infty} \frac{(i+1)^c}{2^{i}} .\]
$\sum_{i=0}^{\infty} \frac{(i+1)^c}{2^{i}}$ is a series of positive terms, we thus use Alembert's ratio test $ \frac{(i+2)^c}{2^{i+1}} \cdot \frac{2^{i}}{(i+1)^c} = \frac{1}{2}\frac{(i+2)^c}{(i+1)^c}$ which tends to $\frac{1}{2}$ when $i$ goes to the infinity, thus the series converges to a constant.
\end{proof}

\begin{corollary}
\label{lem:polylog}
For any $0 \leq t' \leq \ceil{\log \log n}$, it holds that $\sum_{t=0}^{t'} n \cdot \textnormal{polylog}(\sigma_t)/(\sigma_{t})^{1/4} = \Oh(n)$.
\begin{proof}\belowdisplayskip=-10pt
We have to show that $\sum_{t=0}^{t'} n \log^c(\sigma_t)/(\sigma_{t})^{1/4}) = \Oh(n)$ for any constant $c \geq 0$. We achieve this by splitting the sum and applying \cref{lem:sum}.
\begin{eqnarray*}
\sum_{t=0}^{t'} \frac{n \log^c(\sigma_t)}{(\sigma_{t})^{1/4}} &\leq& \sum_{t=0}^{\lceil \log\log n\rceil} \frac{n  \cdot (2^{\lceil \log\log n\rceil - t})^c}{(2^{2^{\lceil \log\log n\rceil - t}})^{1/4}} \enskip = \enskip \sum_{t=0}^{\lceil \log\log n\rceil} \frac{n  \cdot (2^{t})^c}{(2^{2^{t}})^{1/4}}\\
&=& n \cdot \sum_{t=0}^{\lceil \log\log n\rceil} \frac{2^{tc}}{2^{2^{t-2}}} \enskip=\enskip
n \cdot \left(\frac{1}{2^{2^{-2}}}+\frac{2^c}{2^{2^{-1}}} +  4 \cdot \sum_{t=0}^{\lceil \log\log n\rceil-2} \frac{2^{tc}}{2^{2^{t}}}\right) = \Oh(n)
\end{eqnarray*}
\null\hfill\null
\end{proof}
\end{corollary}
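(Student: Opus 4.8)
The plan is to reduce the statement to a single invocation of \cref{lem:sum}. Since $\polylog(\sigma_t)$ abbreviates $\log^{c}(\sigma_t)$ for some constant $c \geq 0$ (and a fixed linear combination of such terms is again of this shape), it suffices to bound $\sum_{t=0}^{t'} n\log^{c}(\sigma_t)/(\sigma_t)^{1/4}$ for an arbitrary fixed $c \geq 0$. Writing $m = \ceil{\log\log n}$, I would first unfold the definition $\sigma_t = 2^{2^{m-t}}$ to get $\log\sigma_t = 2^{m-t}$ and $(\sigma_t)^{1/4} = 2^{2^{m-t}/4} = 2^{2^{m-t-2}}$, so that the general term becomes $n\, 2^{(m-t)c}/2^{2^{m-t-2}}$.

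Next I would discard the upper limit $t'$ in favour of $m$ (all summands are nonnegative, so this only enlarges the sum) and substitute $i = m-t$, turning the sum into $n\sum_{i=0}^{m} 2^{ic}/2^{2^{i-2}}$. The denominator $2^{2^{i-2}}$ is not yet in the form $2^{2^{i}}$ required by \cref{lem:sum}, so I would peel off the two smallest terms $i\in\{0,1\}$ — each of which is a constant times $n$ — and reindex the rest by $j = i-2$, obtaining $\sum_{i=2}^{m} 2^{ic}/2^{2^{i-2}} = 4^{c}\sum_{j=0}^{m-2} 2^{jc}/2^{2^{j}}$. Now \cref{lem:sum} applied with $x=0$ and $y=m-2$ gives $\sum_{j=0}^{m-2} 2^{jc}/2^{2^{j}} = \Oh(2^{0}/2^{2^{0}}) = \Oh(1)$, and since $c$ is a constant so is $4^{c}$; adding back the two peeled terms leaves the whole sum at $\Oh(1)$, and multiplying by $n$ yields $\Oh(n)$.

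The only place where care is needed is the bookkeeping with the doubly-exponential exponents: one must remember that taking the $1/4$-th power of $\sigma_t$ pushes the inner exponent from $2^{m-t}$ down to $2^{m-t-2}$ (a factor $4$ inside the exponent of $2$ becomes a $-2$ in the exponent of the exponent), and that realigning this with the index range of \cref{lem:sum} costs only the harmless constant $4^{c}$. I do not expect any genuine obstacle here — \cref{lem:sum} was tailored precisely to absorb this kind of sum, and its estimate is in fact an estimate of the corresponding convergent infinite series, so the resulting bound is automatically uniform in both $n$ and $t'$.
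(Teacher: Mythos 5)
Your proof is correct and follows essentially the same route as the paper: bound $t'$ by $\ceil{\log\log n}$, unfold $\sigma_t$, reindex, peel off the first two terms, and feed the remainder into \cref{lem:sum} with $x=0$. One small note: you correctly obtain the constant $4^c$ from the shift $t \mapsto t-2$, whereas the paper's displayed computation writes $4$ in that spot (a harmless typo since both are $\Oh(1)$ for constant $c$).
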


Thus, all invocations of \cref{lem:compute2} cause $\Oh(\sum_{t=0}^{t'} n/(\sigma_{t})^{1/4}) = \Oh(n)$ comparisons.

\subsubsection*{Deactivating Block Pairs}

It remains to analyse the number of comparisons used by \cref{lem:long} throughout all phases. As mentioned earlier, we cannot actually afford to apply \cref{lem:long} to all block pairs. Thus, we introduce a mechanism that deactivates some of the pairs.

First, note that there are $\Oh(\sum_{t=0}^{t'} n / (\sigma_t)^2) \subseteq \Oh(\sum_{t=0}^{t'} n / (\sigma_t)^{1/4}) = \Oh(n)$ block pairs in all phases. For each pair, we store whether it has been deactivated or not, where being deactivated broadly means that we do not have to investigate the pair because it does not contain a leftmost distinct square.
For each block pair $B_iB_{i + 1}$ in the current phase $t$, we first check if it has been marked as deactivated. If not, we also check if it has been \emph{implicitly} deactivated, i.e., if any of the two pairs from the previous phase that contain $B_iB_{i + 1}$ are marked as deactivated. If $B_iB_{i + 1}$ has been implicitly deactivated, then we mark it as deactivated and do not apply \cref{lem:compute2} and \cref{lem:long} (the implicit deactivation serves the purpose of propagating the deactivation to all later phases). Note that if some position of the string is not contained in any active block pair in some phase, then it is also not contained in any active block pair in all later phases. This is because it always holds that $\sigma_{t-1}=(\sigma_{t})^{2}$ (with no rounding required), which guarantees that block boundaries of earlier phases do not intersect blocks of later phases.

We only apply \cref{lem:compute2} and then \cref{lem:long} to $B_iB_{i + 1}$ if the pair has neither explicitly nor implicitly been deactivated. When applying \cref{lem:long}, a tail $T[a..a + \ell)$ contributes $\Oh(\ell)$ comparisons if $\ell \geq \Delta = \sigma_t / 8$ (and otherwise it contributes no comparisons).
As the whole $T[a..a + \ell)$ occurs earlier,  it cannot contain the leftmost occurrence of a square in the whole $T$. Thus,
any block pair (of any phase) contained in $T[a..a + \ell)$ also cannot contain such an occurrence, and thus such block pairs
can be deactivated.

The mechanism used for deactivation works as follows. Let $T[a..a + \ell)$ be a tail contributing $\Oh(\ell)$ comparisons with $\ell \geq \Delta = \sigma_t / 8$ in phase $t$. We mark all block pairs of phase $t + 2$ that are entirely contained in $T[a..a + \ell)$ as deactivated. Note that blocks in phase $t + 2$ are of length $\sqrt{\sigma_t}$, and consider the fragment $T[a + 2\sqrt{\sigma_t}..a + \ell - 2\sqrt{\sigma_t})$. In phase $t + 2$, and by implicit deactivation in all later phases, this fragment overlaps (either partially or fully) only block pairs that have been deactivated. Thus, after phase $t + 1$, we will never inspect any of the symbols in $T[a + 2\sqrt{\sigma_t}..a + \ell - 2\sqrt{\sigma_t})$ again. We say that tail $T[a..a + \ell)$ deactivated the fragment of length $\ell - 4\sqrt{\sigma_t} = \Omega(\ell)$, which is positive until phase $t=\ceil{\log \log n}-3$ because $\sigma_t > 256$. Since the number of deactivated positions is linear in the number of comparisons that the tail contributes to \cref{lem:long}, it suffices to show that each position gets deactivated at most a constant number of times. 
In a single phase, any position gets deactivated at most twice. This is because the tails of each factorisation do not overlap by definition, but the tails of the two factorisations of adjacent block pairs $B_iB_{i + 1}$ and $B_{i + 1}B_{i + 2}$ can overlap. If a position gets deactivated for the first time in phase $t$, then (as explained earlier) we will not consider it in any of the phases $t' \geq t + 2$. Thus, it can only be that we deactivate the position again in phase $t + 1$, but not in any later phases. In total, each position gets deactivated at most four times. Hence \cref{lem:long} contributes $\Oh(n)$ comparisons in total.

We have shown:

\upperbound*

\section{Algorithm}
\label{sec:alg}

In this section we show how to implement the approach described in the previous section
to work in $\Oh(n\log \sigma)$ time. The main
difficulty is in efficiently implementing the sparse suffix tree construction algorithm, and
then computing a $\Delta$-approximate factorisation. We first how to obtain an $\Oh(n\log \sigma+n\log^{*}n)$ time
algorithm that still uses only $\Oh(n\log \sigma)$ comparisons, and then further improve its running time to $\Oh(n\log \sigma)$.

\subsection[Constructing Suffix Tree and Delta-Approximate Factorisation]{Constructing Suffix Tree and \boldmath$\Delta$\unboldmath-Approximate Factorisation}
\label{sec:suffixtree}

To give an efficient algorithmic construction of the sparse suffix tree from \cref{lem:sparse}, we will use a restricted version of LCEs, where a query $\textnormal{ShortLCE}_x(i, j)$ (for any positive integer $x$) returns $\min(x, \lce(i, j))$. The following result was given by Gawrychowski et al.~\cite{Gawrychowski2016}:

\begin{lemma}[Lemma 14 in~\cite{Gawrychowski2016}]
\label{lem:LCE_undordered_original}
For a length-$n$ string over a general unordered alphabet\footnote{Lemma 14 in~\cite{Gawrychowski2016} does not explicitly mention that it works for general unordered alphabet. However, the proof of the lemma relies solely on equality tests.}, a sequence of $q$ queries $\textnormal{ShortLCE}_{4^{k_i}}$ for $i \in \{1, \dots, q\}$ can be answered online in total time $\Oh(n \log^* n + s)$ and $\Oh(n + q)$ comparisons\footnote{Lemma 14 in~\cite{Gawrychowski2016} does not explicitly mention that it requires $\Oh(n + q)$ comparisons. However, they use a union-find approach where there can be at most $\Oh(n)$ comparisons with outcome "equal", and each LCE query performs only one comparison with outcome "not-equal", similarly to what we describe in the proof of \cref{lem:sparse}.}, where $s = \sum_{i = 1}^q(k_i + 1)$.
\end{lemma}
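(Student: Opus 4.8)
The plan is to build, once, a hierarchy of $\Oh(\log n)$ levels of ``meta-character'' names obtained by a locally-consistent parsing of $T$ (in the style of recompression or ESP), and to route every genuine alphabet comparison through a union-find structure on the positions of $T$, exactly as in the proof of \cref{lem:sparse}. Level $0$ is $T$ itself; level $\ell+1$ is produced from level $\ell$ by collapsing maximal runs of equal meta-characters and then cutting the resulting string into blocks of $\Oh(1)$ length according to a deterministic rule that depends only on an $\Oh(1)$-size neighbourhood of each cut, assigning a fresh integer name to each distinct block content through an integer dictionary. The two invariants we want are: (i) a level-$\ell$ block spells a substring of length $\Theta(c^{\ell})$ for a fixed constant $c>1$, so there are $\Oh(\log n)$ levels and the level-$\ell$ string has length $\Oh(n/c^{\ell})$; and (ii) the parse is \emph{locally consistent}, i.e.\ for any two positions the level-$\ell$ parses of $T[i..n]$ and $T[j..n]$ agree block for block on their common prefix except for $\Oh(1)$ blocks near the start. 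The deterministic local rule is realised by Cole--Vishkin deterministic coin tossing, which needs $\Oh(\log^{*} n)$ passes over each level; since the level lengths shrink geometrically, building the whole hierarchy costs $\Oh(n\log^{*} n)$ time.

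To answer $\textnormal{ShortLCE}_{4^{k_i}}(i,j)$ I would, using the synchronising property~(ii), shift $i$ and $j$ to nearby block boundaries, so that deciding whether the length-$4^{\ell}$ prefixes of $T[i..n]$ and $T[j..n]$ coincide reduces --- at the hierarchy level whose blocks have length $\Theta(4^{\ell})$ --- to comparing $\Oh(1)$ meta-character names, taking $\Oh(1)$ time and no alphabet comparison. Since that predicate is monotone in $\ell$, an ascending scan over $\ell=0,1,\dots,k_i$ locates the first scale at which the two prefixes differ in $\Oh(k_i)$ name comparisons, and a symmetric descent then pinpoints the mismatch down to level $0$, where a single comparison of two characters of $T$ confirms it (or we exhaust scale $k_i$ and return $4^{k_i}$). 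Summed over all queries this is $\Oh(s)$ time with $s=\sum_i(k_i+1)$, and the structure is online because nothing in it depends on future queries.

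The comparison count uses the same bookkeeping as in the proof of \cref{lem:sparse}: a character comparison $T[x]\stackrel{?}{=}T[y]$ is issued only when $x$ and $y$ lie in different union-find components; a comparison answered \emph{equal} merges two components, so at most $n-1$ of these are ever performed, while a comparison answered \emph{not equal} is necessarily the terminal mismatch of some query, so at most $q$ of these are performed. Hence $\Oh(n+q)$ comparisons overall.

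The step I expect to be the real obstacle is reconciling the content-faithfulness the hierarchy needs at level $0$ with the $\Oh(n+q)$ comparison budget: honestly bucketing the characters so that equal characters receive equal names would already cost $\Omega(n\sigma)$ comparisons. The way around this is to work with level-$0$ names that may over-distinguish characters, to \emph{repair} the hierarchy lazily whenever a query's descent exposes a new equality (a union-find merge), and to prove that each merge propagates through only $\Oh(1)$ blocks per level and that a query never re-scans an already-confirmed equal stretch position by position, so that the repair work and the skipping both stay within the $\Oh(n\log^{*} n)$ budget while every past answer remains correct under later merges. This amortisation, together with the precise synchronising guarantees of the parse, is the technical core of \cite{Gawrychowski2016}, which I would invoke as a black box rather than re-derive.
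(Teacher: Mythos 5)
The paper does not prove this lemma at all: it cites it as Lemma~14 of \cite{Gawrychowski2016} and supplies only two footnotes, one observing that the cited proof relies solely on equality tests (hence applies to general unordered alphabets), and one observing that a union-find layer over positions bounds the number of comparisons by $\Oh(n+q)$ --- at most $n-1$ with outcome ``equal'' (each merges two components) and at most one ``not equal'' per query. Your comparison-counting paragraph reproduces exactly that footnote's argument, so on the only part of the statement the paper actually argues for, you match it.

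Where your write-up diverges is that you attempt to reconstruct the \emph{internal} mechanism of \cite{Gawrychowski2016} (a locally-consistent parsing hierarchy, deterministic coin tossing giving the $\log^* n$ factor, a shift-to-boundary routine for $\textnormal{ShortLCE}_{4^k}$ costing $\Oh(k)$ per query). That reconstruction is directionally right and, importantly, you flag the genuine obstruction: assigning content-faithful level-$0$ names over an unordered alphabet would itself cost $\Omega(n\sigma)$ comparisons, which the present paper's \cref{thm:inapproxalph} would forbid; resolving this by over-distinguishing names and lazily repairing via the union-find is indeed the delicate part of the cited construction. You then, quite reasonably, invoke \cite{Gawrychowski2016} as a black box for that amortisation rather than re-deriving it --- which is precisely what the paper itself does, only with less prose. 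So: no gap, same ultimate reliance on the citation, plus an extra (unverified but plausible) sketch of what the citation contains. If you intend your sketch to stand as a proof, the lazy-repair amortisation and the synchronisation guarantee would need to be made rigorous; as a reading of why the cited lemma transfers to this setting, what you wrote is complete.
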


In the lemma, apart from the $\Oh(n \log^* n)$ time, each $\textnormal{ShortLCE}_{4^{k_i}}$ query accounts for $\Oh(k_i + 1)$ time. Note that we can answer the queries online, without prior knowledge of the number and length of the queries. Also, computing an LCE in a fragment $T[x..y]$ of length $m$ trivially reduces to a $\textnormal{ShortLCE}_{4^{\ceil{\log_4 m}}}$ query on $T$. Thus, we have:

\begin{corollary}
\label{lem:LCE_undordered}
A sequence of $q$ longest common extension queries on a fragment $T[x..y]$ of length $m$ over a general unordered alphabet can be answered in $\Oh(q \log m)$ time plus $\Oh(n \log^* n)$ time shared by all invocations of the lemma. The number of comparisons is $\Oh(q)$, plus $\Oh(n)$ comparisons shared by all invocations of the lemma.
\end{corollary}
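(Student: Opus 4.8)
The plan is to reduce every longest common extension query inside a fragment to a single $\textnormal{ShortLCE}$ query of power-of-four radius on the whole string $T$, and then invoke \cref{lem:LCE_undordered_original}. Suppose we are asked for the LCE of two positions $i, j \in [x, y]$ within the fragment $T[x..y]$ of length $m$, i.e., the length of the longest common prefix of the suffixes $T[i..y]$ and $T[j..y]$ of the fragment. This value is at most $m$, and $m \le 4^{k}$ for $k = \ceil{\log_4 m}$, so it suffices to issue $\textnormal{ShortLCE}_{4^{k}}(i,j)$ on $T$ — obtaining $\min(4^{k}, \lce(i,j))$ — and return $\min\bigl(\lce(i,j),\, 4^{k},\, y - \max(i,j) + 1\bigr)$. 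The cap by $y - \max(i,j) + 1$ restricts the answer to the fragment, and since $4^{k} \ge m \ge y - \max(i,j) + 1$ the radius never truncates the true answer, so this is exactly the fragment-restricted LCE.

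For the cost, I would keep a single global instance of the data structure of \cref{lem:LCE_undordered_original} over $T$, shared across all invocations of the corollary; building and maintaining it accounts for the $\Oh(n\log^* n)$ shared time and the $\Oh(n)$ shared comparisons. A batch of $q$ queries on a length-$m$ fragment turns into $q$ queries $\textnormal{ShortLCE}_{4^{k}}$ with the common $k = \ceil{\log_4 m}$; by the lemma each contributes $\Oh(k + 1) = \Oh(\log m)$ time, hence $\Oh(q\log m)$ time in total, and $\Oh(1)$ toward the $\Oh(n + q)$ comparison budget, i.e., $\Oh(q)$ comparisons beyond the shared $\Oh(n)$. Because \cref{lem:LCE_undordered_original} answers its queries online, we need no advance knowledge of $q$, of the fragments, or of how many times the corollary is invoked.

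The statement has essentially no combinatorial content — everything is inherited from \cref{lem:LCE_undordered_original} — so the only point requiring care is the accounting of what is \emph{shared}: the $\Oh(n\log^* n)$ term and the $\Oh(n)$ comparisons stem from one structure over $T$ and must be charged once globally rather than once per invocation, while the remaining cost is the per-query $\Oh(\log m)$ time and the $\Oh(q)$ comparisons. This packaging — arbitrary-radius, fragment-restricted LCE at $\Oh(\log m)$ amortized time per query — is exactly the interface that the suffix-tree construction in \cref{sec:suffixtree} will call.
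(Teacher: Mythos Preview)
Your proposal is correct and matches the paper's own argument essentially verbatim: the paper simply observes that an LCE query on a length-$m$ fragment ``trivially reduces to a $\textnormal{ShortLCE}_{4^{\ceil{\log_4 m}}}$ query on $T$'' and then reads off the bounds from \cref{lem:LCE_undordered_original}. You have added the explicit cap $y - \max(i,j) + 1$ to handle the fragment boundary, a detail the paper leaves implicit, but otherwise the reduction and the accounting of shared versus per-query cost are identical.
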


While constructing the sparse suffix tree, we will maintain a heavy-light decomposition using a rebuilding scheme introduced by Gabow~\cite{Gabow1990}. Let $L(u)$ denote the number of leaves in the subtree of a node $u$.
We use the following recursive construction of a heavy-light decomposition: Starting from a node $r$ (initially the root of the tree), we find the deepest descendant node $e$ such that $L(e)\geq \frac{5}{6}L(r)$ (possibly $e = r$). The path $p$ from the root $r(p)=r$ to $e(p)=e$ is a heavy path. Any edge $(u,v)$ on this path satisfies $L(v)\geq \frac{5}{6}L(u)$, and we call those edges \emph{heavy}. As a consequence, a node $u$ can have at most one child $v$ such that $(u,v)$ is heavy. For each edge $(u,v)$ where $u$ is on the heavy path and $v$ is not, we recursively build a new heavy path construction starting from~$v$.

When inserting a new suffix in our tree, we keep track of the insertion in the following way: for every root of a heavy path, we maintain $I(u)$ the number of insertions made in the subtree of $u$ since we built the heavy-light decomposition of this subtree. When $I(u) \geq \frac{1}{6} L(u)$ we recalculate the values of $L(v)$ for all nodes $v$ in the subtree of $u$ and rebuild the heavy-light decomposition for the subtree of $u$.

This insures that, despite insertion, for any heavy path starting at node $r$ and 
a node $u$ on that heavy path, $L(e) \geq \frac{2}{3} L(r)$. When crossing a non-heavy edge the number of nodes in the subtree reduces by a factor at least $\frac{5}{6}$ which leads to the following property:

\begin{observation}
\label{obs:heavy}
The path from any node to the root crosses at most $\Oh(\log m)$ heavy paths.
\end{observation}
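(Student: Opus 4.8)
The plan is to walk along the path $P$ from a given node $w$ up to the root $\rho$ of the tree, cut $P$ at the points where it leaves one heavy path and enters another, observe that each such cut shrinks the number of descendant leaves by a constant factor, and then telescope. Write $P$ as a concatenation of maximal sub-paths $P_1, P_2, \dots, P_k$ (ordered from $w$ towards $\rho$), where each $P_i$ lies inside a single heavy path $p_i$; here $p_k$ is the heavy path containing $\rho$, and since the recursive construction starts at $\rho$ we have $r(p_k) = \rho$. Note that the $p_i$ are pairwise distinct: both $P$ and every heavy path are ancestor chains, so once $P$ rises above a heavy path it never returns to it. The goal is to show $k = \Oh(\log m)$.

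The key steps: for $i < k$, the topmost node of $P_i$ is exactly $r(p_i)$ (every other node of $p_i$ has its parent on $p_i$ as well), and its parent $u_i := \mathrm{parent}(r(p_i))$ lies on $p_{i+1}$; moreover the edge $(u_i, r(p_i))$ is non-heavy, because $p_i$ was produced by recursing on the non-heavy child $r(p_i)$ of $u_i$. First I would invoke the invariant maintained by the rebuilding scheme (the same invariant that yields $L(e) \ge \tfrac{2}{3}L(r)$ along a heavy path with root $r$ and bottom $e$): crossing the non-heavy edge $(u_i, r(p_i))$ drops the leaf count by a factor at least $\tfrac{5}{6}$, i.e.\ $L(r(p_i)) \le \tfrac{5}{6}\,L(u_i)$. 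Since $u_i$ lies on $p_{i+1}$ and hence is a descendant of (or equal to) $r(p_{i+1})$, monotonicity of $L$ towards the root gives $L(u_i) \le L(r(p_{i+1}))$, and therefore
\[
L\bigl(r(p_i)\bigr) \;\le\; \tfrac{5}{6}\, L\bigl(r(p_{i+1})\bigr) \qquad (1 \le i \le k-1).
\]
Chaining these and using $L(r(p_1)) \ge 1$ together with $L(r(p_k)) = L(\rho) = \Oh(m)$ (the tree is the sparse suffix tree of $\Oh(m/\sqrt{\Delta}) = \Oh(m)$ suffixes, cf.\ \cref{lem:sparse} and \cref{lem:compute}, so it has $\Oh(m)$ leaves), one obtains $1 \le (\tfrac{5}{6})^{k-1}\,\Oh(m)$, hence $(\tfrac{6}{5})^{k-1} = \Oh(m)$ and $k = \Oh(\log m)$, as claimed.

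The telescoping itself is routine; the substantive point — already isolated in the discussion preceding the statement — is that the constant-factor drop across a non-heavy edge genuinely survives the on-line insertions. This is exactly what the threshold $I(u) \ge \tfrac{1}{6}L(u)$ for rebuilding the subtree of a heavy-path root $u$ is designed to buy: between two rebuilds of a subtree its internal leaf counts change only by a $\tfrac{1}{6}$ fraction, which is small enough that an edge certified non-heavy at the last rebuild still separates leaf counts by a factor bounded away from $1$ (the companion bound $L(e)\ge\tfrac{2}{3}L(r)$ for heavy paths is obtained by the same accounting). Consequently I expect that ingredient — not the counting above — to be the only place demanding care, and given the invariant stated earlier the observation follows immediately.
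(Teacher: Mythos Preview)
Your argument is correct and mirrors the paper's own justification: the paper states the observation as an immediate consequence of the sentence preceding it (``When crossing a non-heavy edge the number of nodes in the subtree reduces by a factor at least $\frac{5}{6}$''), and you simply write out the standard telescoping that this sentence implies. Your final paragraph correctly identifies that the only nontrivial ingredient---the constant-factor drop surviving insertions via the $I(u)\ge\tfrac{1}{6}L(u)$ rebuild threshold---is exactly what the paper asserts in the discussion before the observation.
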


Additionally, rebuilding a subtree of size $s$ takes $\Oh(s)$ time and adding a suffix $T[i_j .. y]$ to the tree increases $I(r)$ for each path $p$ from the root $r$ to the new leaf. Those are at most $\Oh(\log m)$ nodes, and thus maintaining the heavy path decomposition takes amortized time $\Oh(\log n)$ time per insertion.

With these building blocks now clearly defined, we are ready to describe the construction of the sparse suffix tree.

\begin{lemma}
\label{lem:sparse-algo}
The sparse suffix tree containing any $b$ suffixes $T[i_{1}..y], \ldots, T[i_{b}..y]$
of $T[x..y]$ with $m=|T[x..y]|$ can be constructed using $\Oh(b\sigma\log b\log m)$ time plus $\Oh(n \log^* n)$ time shared by all invocations of the lemma.
\end{lemma}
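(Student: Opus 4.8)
The plan is to take the comparison-efficient construction of \cref{lem:sparse} and equip it with the data-structural machinery introduced in this subsection, so that each of the $\Oh(b\sigma\log b)$ comparisons (plus the descent overhead) is supported in amortized $\Oh(\log m)$ time. Recall the skeleton of \cref{lem:sparse}: suffixes $T[i_1..y],\dots,T[i_b..y]$ are inserted one by one; to insert $T[i_j..y]$ we descend from the root, at each node $v$ first trying the heavy child (the child with the most leaves in its subtree) via a single character comparison, and only if that fails do we scan the remaining $\leq\sigma-1$ children. The union–find structure over text positions caches all "equal" answers so that globally there are $\Oh(n)$ positive comparisons. The new ingredient is that we must (i) realize each "compare the next character of the current suffix against the character labelling this edge" step efficiently, and (ii) navigate the compacted trie (where edges carry long labels) quickly, and (iii) keep the heavy-child information up to date under insertions.

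For (i) and (ii), I would use LCE queries via \cref{lem:LCE_undordered}: instead of walking down an edge character by character, when we are at an explicit node $u$ and want to follow the edge to its heavy child $v_1$ whose label is $T[p..q]$, we compute $\lce$ of the current suffix against the suffix $T[p..y]$ (or more precisely a $\mathrm{ShortLCE}$ bounded by the edge length), which tells us in $\Oh(\log m)$ time and $\Oh(1)$ comparisons either that the whole edge matches (continue at $v_1$), or exactly where the mismatch is (so we split the edge, create the new explicit node, and attach a leaf). If the heavy edge mismatches at its very first character we fall back to scanning the other $\leq\sigma-1$ children, each again with one comparison, exactly as in \cref{lem:sparse}. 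Since the descent crosses $\Oh(\log m)$ heavy paths by \cref{obs:heavy}, and within a heavy path we can jump along the heavy edge in one LCE query per explicit node we pass, the number of LCE queries during one insertion is $\Oh(\log b\cdot\log m)$ in the worst case? — here I must be a bit careful: the comparison bound of \cref{lem:sparse} says negative comparisons during one insertion total $\Oh(\sigma\log b)$, and crossing a non-heavy/heavy edge boundary costs $\Oh(1)$ LCE queries, so the total number of LCE queries over all $b$ insertions is $\Oh(b\sigma\log b)$, each costing $\Oh(\log m)$ time, giving the $\Oh(b\sigma\log b\log m)$ term; the $\Oh(n\log^* n)$ and $\Oh(n)$ shared terms come directly from \cref{lem:LCE_undordered}.

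For (iii), maintaining which child is "heavy" (largest-subtree) at every node, I would invoke exactly the Gabow-style rebuilding scheme described just above: maintain the heavy-light decomposition with the $\frac56$/$\frac16$ thresholds, so that each insertion touches $\Oh(\log m)$ heavy-path roots to bump their insertion counters $I(u)$, and periodic rebuilds of a size-$s$ subtree cost $\Oh(s)$ amortized, for $\Oh(\log n)$ amortized time per insertion. The invariant $L(e)\geq\frac23 L(r)$ guarantees that the "heavy child" we try first still carries a constant fraction of the leaves, so the factor-of-2 decrease argument from \cref{lem:sparse} still applies and the $\Oh(\sigma\log b)$ per-insertion comparison bound is preserved. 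The main obstacle I anticipate is this bookkeeping: reconciling the amortized rebuilding of the heavy-light decomposition with the requirement that the "try the heavy child first" rule, which is what makes the $\log b$ factor appear rather than $\sigma$ per level, remains valid throughout — i.e.\ checking that using the slightly-stale heavy child (heavy up to a $\frac23$ rather than exact majority) is still enough to charge negative comparisons to halving of subtree size, and that the $\Oh(\log m)$ bound on heavy paths crossed is robust to insertions. Everything else (edge splitting, leaf attachment, reading off LCEs from the sparse suffix tree afterwards) is routine once these pieces are in place.
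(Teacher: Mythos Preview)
Your proposal has the right high-level ingredients (LCE queries via \cref{lem:LCE_undordered}, Gabow-style dynamic heavy--light decomposition, heavy-child-first descent), but there is a genuine gap in how you traverse a heavy path. You write that ``within a heavy path we can jump along the heavy edge in one LCE query per explicit node we pass,'' and then claim the total number of LCE queries is $\Oh(b\sigma\log b)$. This does not follow: the $\Oh(\sigma\log b)$ bound of \cref{lem:sparse} counts only \emph{negative} comparisons, while your scheme issues an LCE query at every explicit node on the descent path regardless of outcome. A single heavy path can contain $\Theta(b)$ explicit nodes, so one insertion may trigger $\Theta(b)$ LCE queries, and over all $b$ insertions you get $\Theta(b^2)$ queries, not $\Oh(b\sigma\log b)$.

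The paper closes this gap with an additional data structure you are missing: on each heavy path it maintains a \emph{predecessor structure} (a balanced BST keyed by string-depth) over the explicit nodes of that path. To descend along a heavy path $p$, the paper performs a \emph{single} LCE query between the inserted suffix and a fixed leaf in the subtree of $e(p)$, the deepest node of the path; this yields the length $\ell$ matched along the entire path at once. A predecessor query for $\ell$ then locates the correct (possibly implicit) node on the path in $\Oh(\log b)$ time. Thus each heavy path costs one LCE query plus one predecessor query, and since the descent crosses $\Oh(\log b)$ heavy paths (\cref{obs:heavy}), the per-insertion cost is $\Oh(\log b)$ heavy-path LCE queries plus $\Oh(\sigma\log b)$ LCE queries at the branching points where we leave a heavy path, giving $\Oh(b\sigma\log b)$ LCE queries overall. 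Maintaining the predecessor structures under insertions and heavy--light rebuilds raises the amortized rebuild cost from $\Oh(s)$ to $\Oh(s\log b)$ per size-$s$ subtree, which is absorbed in the final bound.
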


\begin{proof}
As in the proof of \cref{lem:sparse}, we consider the insertion of a suffix $T[i_j .. y]$ into the sparse suffix tree with suffixes $T[i_{1} .. y], T[i_{2} .. y] $ $\cdots T[i_{j-1} .. y]$.
At all times, we maintain the heavy path decomposition. Additionally, we maintain for each heavy path a predecessor data structure, where given some length $\ell$, we can quickly identify the deepest explicit node on the heavy path that spells a string of length at least $\ell$. The data structure can, e.g., be a balanced binary search tree with insertion and search operations in $\Oh(\log b)$ time (the final sparse suffix tree and thus each heavy path contains $\Oh(b)$ nodes). 
When rebuilding a subtree of the heavy path decomposition, we also have to rebuild the predecessor data structure for each of its heavy paths. 
Thus, rebuilding a size-$q$ subtree takes $\Oh(q \log b)$ time (each node is on exactly one heavy path and has to be inserted into one predecessor data structure), and the amortized insertion time increases from $\Oh(\log m)$ to $\Oh(\log m \cdot \log b)$.
Whenever we insert a suffix, we make at most one node explicit, and thus have to perform at most one insertion into a predecessor data structure. The time for this is $\Oh(\log b)$, which is dominated be the previous term.

When inserting $T[i_j .. y]$, we look for the node $u$ corresponding to the longest common prefix between $T[i_j .. y]$ and the inserted suffixes, make $u$ explicit if necessary and add a new leaf corresponding to $T[i_j .. y]$ attached to $u$.
Let $v$ be the current node (initialized by the root, and always an explicit node) and $v_1, \cdots, v_d$ be its (explicit) children. If there is a heavy edge $(v, v_a)$ for $1 \leq a \leq d$, let $p$ be the corresponding heavy path.
For each heavy path $p$, we store the label of one leaf (i.e., the starting position of one suffix) that is contained in the subtree of $e(p)$. 
Thus, we can use \cref{lem:LCE_undordered} to
compute the longest common extension between the string spelled by $e(p)$ and $T[i_j .. y]$.
Now we use the predecessor data structure on the heavy path to find the deepest (either explicit or implicit) node $v'$ on the path that spells a prefix of $T[i_j .. y]$. If $v'$ is implicit, we make it explicit and add the leaf. If $v'$ is explicit and $v' \neq v$, we use $v'$ as the new current node and continue. Otherwise, we have $v' = v$, i.e., the suffix does not belong to the subtree rooted in $v_a$. In this case, we issue $d$ LCE queries between $T[i_j .. y]$ and each of the strings spelled by the nodes $v_1, \dots, v_d$. This either reveals that we can continue using one of the $v_a$ as the new current node, or that we can create a new explicit node on some $(v, v_a)$ edge and attach the leaf to it, or that we can simply attach a new leaf to $v$.

Now we analyse the time spent while inserting one suffix. We spent $\Oh(b \cdot \log m \cdot \log b)$ total time for inserting $\Oh(b)$ nodes into the dynamic heavy path decomposition and the predecessor data structures. In each step of the insertion process, we either (i) move as far as possible along some heavy path or (ii) move along some non-heavy edge. For (i), we issue one LCE query and one predecessor query. For (ii) we issue $\Oh(\sigma)$ LCE queries.
Due to \cref{obs:heavy}, both (i) and (ii) happen at most $\Oh(\log b)$ times per suffix. Thus, for all suffixes, we perform $\Oh(b \log b)$ predecessor queries and $\Oh(b\sigma \log b)$ LCE queries. The total time is $\Oh(b \log^2 b)$ for predecessor queries, and $\Oh(b\sigma \log b \log m)$ for LCE queries (apart from the $n \log^* n$ time shared by all invocations of \cref{lem:LCE_undordered}).
\end{proof}

\newcommand{\maxleft}{\textnormal{\textsf{src}}}
\newcommand{\maxleftlce}{\textnormal{\textsf{len}}}
\newcommand{\phraseend}{\textnormal{\textsf{phraseEnd}}}

\begin{lemma}
\label{lem:lz-log-star}
For any parameter $\Delta \in [1,m]$, a $\Delta$-approximate LZ factorisation of any fragment $T[x..y]$ of length $m$ can be
computed in $\Oh(m\sigma\log^2 m / \sqrt{\Delta})$ time plus $\Oh(n \log^* n)$ time shared by all invocations of the lemma.
\end{lemma}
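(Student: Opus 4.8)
The plan is to reuse the skeleton of the proof of \cref{lem:compute} and replace its two components that are not constant-time per operation. Recall that that proof uses the $\Delta$-cover $\D(\Delta)$ of \cref{lem:cover}, sets $S=\D(\Delta)\cap\{x,\dots,y\}$ with $b:=|S|=\Oh(m/\sqrt{\Delta})$, builds the sparse suffix tree of $\{T[i..y]:i\in S\}$, and then emits the factorisation phrase by phrase: for the current phrase start $x'$ it ranges over all candidate tail starts $x''\in S\cap[x',x'+\Delta)$ and all earlier sample positions $a'$, evaluating $\lce(x'',a')$. I would change two things. First, I would construct the sparse suffix tree using \cref{lem:sparse-algo} in $\Oh(b\sigma\log b\log m)$ time plus the $\Oh(n\log^* n)$ term shared across invocations, and augment it in $\Oh(b)$ extra time with parent pointers and node string-depths. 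Second --- the crucial point --- I would not evaluate $\lce(x'',a')$ for all pairs, since that costs $\Oh(b)$ queries per phrase and hence $\Oh(bm/\Delta)=\Oh(m^2/\Delta)$ in total, exceeding the target; instead I would precompute, once and for all, the best earlier match of every sample position.

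For each $i\in S$ I would define $\maxleftlce(i)=\max\{\lce(i,a'):a'\in S,\ a'<i\}$ and let $\maxleft(i)$ be some $a'$ attaining this maximum (both undefined for the smallest element of $S$). Note this only requires $a'<i$, which is weaker than the condition $a'<x'$ used in \cref{lem:compute}; the relaxation is harmless, because an occurrence of a tail $T[x''..e]$ starting at a position $a'<x''$ still lies inside $T[x..e]$ and is distinct from $T[x''..e]$, so the emitted phrase still satisfies the definition, and weakening the constraint on $a'$ can only lengthen the phrase. Now $\maxleftlce(i)$ equals the string-depth of the lowest ancestor $u$ of the leaf of $T[i..y]$ whose subtree already contains the leaf of some $T[a'..y]$ with $a'<i$, and $\maxleft(i)$ is any such $a'$. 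I would therefore process the leaves in increasing order of their label: to process $T[i..y]$, let $u$ be the lowest already-flagged ancestor of its leaf (undefined if $i=\min S$), read $\maxleftlce(i)$ as the string-depth of $u$ and $\maxleft(i)$ as the witness stored at $u$, and then flag every not-yet-flagged ancestor of the leaf --- a contiguous bottom segment of its root-path --- recording $i$ as the witness of each newly flagged node. Since the tree has $\Oh(b)$ nodes, each node gets flagged at most once, and locating $u$ visits exactly one already-flagged node, this precomputation runs in $\Oh(b)$ time.

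It remains to emit the factorisation. Since $|\D(\Delta)\cap\{1,\dots,n\}|=\Oh(n/\sqrt{\Delta})$, any window of $\Delta$ consecutive positions contains $\Oh(\sqrt{\Delta})$ sample positions, and by \cref{lem:cover} the sorted elements of $\D(\Delta)$ are enumerable in time proportional to their number, so a single forward pointer into the sorted $S$ lets us enumerate $S\cap[x',x'+\Delta)$ in $\Oh(\sqrt{\Delta})$ time per phrase. For the current start $x'$ I would pick the $x''\in S\cap[x',x'+\Delta)$ maximising $x''+\maxleftlce(x'')-1$ and call this maximum $\phraseend$; if $\phraseend-x'\geq\Delta-2$ I emit the phrase $T[x'..\phraseend]$ with head $T[x'..x''-1]$ and tail $T[x''..\phraseend]$ (which also occurs at $\maxleft(x'')$), and otherwise I emit the fall-back phrase $T[x'..\min\{x'+\Delta-1,y\}]$ with empty tail. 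This yields $\Oh(m/\Delta+1)$ phrases at $\Oh(\sqrt{\Delta})$ time each, i.e.\ $\Oh(m/\sqrt{\Delta})=\Oh(b)$ in total. Correctness follows exactly as in the proof of \cref{lem:compute}: the only change is that the scan value $\phraseend$ realises $\max_{x''\in S\cap[x',x'+\Delta)}(x''+\maxleftlce(x'')-1)$, and in the decisive case $e'\geq x'+\Delta-1$ (where $e'$ is the end of the standard LZ phrase at $x'$), writing $T[a..b]$ for the previous occurrence of $T[x'..e'-1]$ and $h=h(a,x')$, \cref{lem:cover} gives $x'':=x'+h\in S\cap[x',x'+\Delta)$, $a':=a+h\in S$ with $a'<x''$, and $\lce(x'',a')\geq e'-x''$, so $\maxleftlce(x'')\geq e'-x''$ and $\phraseend\geq e'-1$, which is the required bound $e'-1\leq e$. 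Altogether the running time is $\Oh(b\sigma\log b\log m)=\Oh(m\sigma\log^2 m/\sqrt{\Delta})$ plus the $\Oh(n\log^* n)$ term shared across all invocations (inherited from \cref{lem:sparse-algo}), and the number of comparisons is $\Oh(b\sigma\log b)=\Oh(m\sigma\log m/\sqrt{\Delta})$ plus $\Oh(n)$ shared (from the $\Oh(b\sigma\log b)$ LCE queries of \cref{lem:sparse-algo} via \cref{lem:LCE_undordered}), matching \cref{lem:compute}. I expect the main obstacle to be the $\maxleft$/$\maxleftlce$ precomputation: pushing it down to $\Oh(b)$ time with the flagging sweep, and verifying that the relaxed ``$a'<x''$'' condition still delivers the length guarantee of \cref{lem:compute}; everything else is routine bookkeeping to keep each step linear, not quadratic, in $b$.
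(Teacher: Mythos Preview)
Your proposal is correct and follows essentially the same approach as the paper: build the sparse suffix tree via \cref{lem:sparse-algo}, precompute $\maxleft$/$\maxleftlce$ by the ancestor-flagging sweep in increasing leaf order (this is exactly the paper's node-labelling procedure), and then emit phrases greedily left to right. You even make explicit the relaxation from $a'<x'$ to $a'<x''$ and justify it, a point the paper's proof uses silently.
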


\begin{proof}
Let $T' = T[x..y]$, and let $\{i_1, i_2, \dots,i_b\}$ be a $\Delta$-cover of $\{1, \dots, m\}$, which implies $b = \Theta(m / \sqrt{\Delta})$. We obtain a sparse suffix tree of the suffixes $T'[i_{1}..m],\ldots,T'[i_{b}..m]$, which takes $\Oh(b\sigma\log b\log m) \subseteq \Oh(m\sigma\log^2 m / \sqrt{\Delta})$ time according to \cref{lem:sparse-algo}, plus $\Oh(n \log^* n)$ time shared by all invocations of the lemma. Now we compute a $\Delta$-approximate LZ factorisation of $T'$ from the spare suffix tree in $\Oh(b)$ time.  

In the following proof, we use $i_1, i_2, \dots,i_b$ interchangeably to denote both the difference cover positions, as well as their corresponding leaves in the sparse suffix tree. 
Assume that the order of difference cover positions is $i_1 < i_2 < \cdots < i_b$. 
First, we determine for each $i_k > i_1$, the position $\maxleft(i_k) = i_h$ and the length $\maxleftlce(i_k) = \lce(i_h, i_k)$, where $i_h \in \{i_1, \dots, i_{k - 1}\}$ is a position that maximizes $\lce(i_h, i_k)$.
This is similar to what was done in \cite{Fischer2018} for the LZ77 factorisation.
We start by assigning labels from $\{1, \dots, b\}$ to the nodes of the sparse suffix tree. 
A node has label $k$ if and only of $i_k$ is its smallest descendant leaf.
We assign the labels as follows. Initially, all nodes are unlabelled. We assign label 1 to each node on the path from $i_1$ to the root.
Then, we process the remaining leaves $i_2, \dots,i_b$ in increasing order.
For each $i_k$, we follow the path from $i_k$ to the root. 
We assign label $k$ to each unlabelled node that we encounter. As soon as we reach a node that has already been labelled, say, with label $h$ and string-depth $\ell$%
, we are done processing leaf $i_k$. It should be easy to see that $i_h$ is also exactly the desired index that maximizes $\lce(i_h, i_k)$, and we have $\lce(i_h, i_k) = \ell$. Thus, we have found $\maxleft(i_k) = i_h$ and $\maxleftlce(i_k) = \ell$. 
The total time needed is linear in the number of sparse suffix tree nodes, which is $\Oh(b)$.

Finally, we obtain a $\Delta$-approximate LZ factorisation using $\maxleft$ and $\maxleftlce$. The previously computed values can be interpreted as follows: $i_k$ could become the starting position of a length-$\maxleftlce(i_k)$ tail (with previous occurrence at position $\maxleft(i_k)$). 
For the $\Delta$-approximate LZ factorisation, we will create the factors greedily in a left-to-right manner. 
Assume that we already factorised $T'[1..s - 1]$, then the next phrase starts at position $s$, and thus the next tail starts within $T'[s..s + \Delta)$ (as a reminder, the head is by definition shorter than $\Delta$). 
Let $S = \{i_1, i_2, \dots,i_b\} \cap \{s, \dots, s + \Delta - 1\}$. 
If there is no $i_k \in S$ with $i_k + \maxleftlce(i_k) > s + \Delta - 1$, then the next phrase is simply $T'[s..\min(|T'|,s + \Delta - 1))$ with empty tail. 
Otherwise, the next phrase has (possibly empty) head $T'[s..i_k)$ and tail $T'[i_k..i_k + \maxleftlce(i_k))$ (with previous occurrence $\maxleft(i_k)$), where $i_k$ is chosen from $S$ such that it maximizes $i_k + \maxleftlce(i_k)$.
Creating the phrase in this way clearly takes $\Oh(\absolute{S})$ time.
Since the next phrase starts at least at position $s + \Delta - 1$, none of the positions from $S \setminus \{s + \Delta - 1\}$ will ever be considered as starting positions of other tails. Thus, every $i_k$ is considered during the creation of at most two phrases, and the total time needed to create all phrases is $\Oh(b)$.

\newcommand{\iright}{i_k}
\newcommand{\ileft}{i_{k'}}

It remains to be shown that the computed factorisation is indeed a $\Delta$-approximate LZ factorisation, i.e., if we output a phrase $T'[s..e]$, then the unique (non-approximate) LZ phrase $T'[s..e']$ starting at position $s$ satisfies $e' - 1 \leq e$. 
First, note that for the created approximate phrases (except possibly the last phrase of $T$) we have $s + \Delta - 2 \leq e$.
Assume $e' < s + \Delta$, then clearly $e' - 1 \leq e$.
Thus, we only have to consider $e' > s + \Delta - 1$.
Since $T'[s..e']$ is an LZ phrase, there is some $s' < s$ such that $\lce(s', s) = e' - s$.
Let $h$ be the constant-time computable function that defines the $\Delta$-cover, and let $\ileft = s' + h(s', s)$ and $\iright = s + h(s', s)$. 
Note that $\ileft \in \{i_1, i_2, \dots,i_{k - 1}\}$ and $\iright \in \{i_1, i_2, \dots,i_b\} \cap \{s, \dots, s + \Delta - 1\}$.
Therefore, we have $\maxleftlce(\iright) \geq \lce(\ileft, \iright) = \lce(s', s) - h(s', s) = (e' - s) - (\iright - s) = e' - \iright$. While computing the $\Delta$-approximate phrase $T'[s..e]$, we considered $\iright$ as the starting positions of the tail, which implies $e \geq i_k + \maxleftlce(i_k) - 1 \geq e' - 1$.
\end{proof}

\begin{lemma}
\label{lem:compute3}
There is an algorithm that, given any parameter $\Delta \in [1,m]$, estimate $\tilde{\sigma}$ and fragment $T[x..y]$ of length $m$, takes
$\Oh(m\tilde{\sigma}\log^2 m/\sqrt{\Delta})$ time plus $\Oh(n \log^* n)$ time shared by all invocations of the lemma, and either computes a $\Delta$-approximate LZ factorisation of $T[x..y]$ or
determines $\sigma>\tilde{\sigma}$.
\begin{proof}
We simply use \cref{lem:lz-log-star} to compute the factorisation. In the first step, we have to construct the sparse suffix tree using the algorithm from \cref{lem:sparse-algo}. While this algorithm takes $\Oh(m\sigma\log^2 m / \sqrt{\Delta})$ time, it is easy to see that a more accurate time bound is $\Oh(md\log^2 m / \sqrt{\Delta})$, where $d$ is the maximum degree of any node in the sparse suffix tree. If during construction the maximum degree of a node becomes $\tilde{\sigma} + 1$, we immediately stop and return that $\sigma>\tilde{\sigma}$. Otherwise, we finish the construction in the desired time.
\end{proof}
\end{lemma}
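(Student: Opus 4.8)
The plan is to run the construction behind \cref{lem:lz-log-star} essentially unchanged and bolt on an early-exit test for a large alphabet. The key preliminary observation is that $\sigma$ enters the running time of \cref{lem:sparse-algo} (and hence of \cref{lem:lz-log-star}) at a single point: when the suffix-insertion descent crosses a \emph{non-heavy} edge it fires one $\textnormal{ShortLCE}$ query per child of the current node, and there the number of children was bounded crudely by $\sigma$. All remaining costs -- Gabow's heavy-path rebuilding, the per-heavy-path predecessor structures, and the $\Oh(\log b)$ heavy paths crossed per insertion -- amount to $\Oh(b\log b\log m + b\log^{2}b)$, are independent of $\sigma$, and (since $b=\Oh(m/\sqrt{\Delta})\leq m$) are dominated by the LCE-query cost. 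So the first step is to re-audit that analysis and replace the bound $\sigma$ by $d$, the largest number of children of any node actually created in the sparse suffix tree, which yields running time $\Oh(bd\log b\log m)\subseteq\Oh(md\log^{2}m/\sqrt{\Delta})$ for the construction of \cref{lem:sparse-algo}, hence the same for \cref{lem:lz-log-star} (whose remaining work is the $\Oh(b)$-time extraction of the factorisation from the tree), plus the $\Oh(n\log^{*}n)$ that \cref{lem:LCE_undordered} shares across all invocations.

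The second step uses the trivial structural fact that in a compacted trie the first characters of the edges from a node to its children are pairwise distinct elements of $\Sigma$; thus a node of degree $g$ witnesses $g$ distinct symbols and forces $\sigma\geq g$. While building the sparse suffix tree we therefore watch the degrees of the nodes being created -- the insertion procedure already enumerates the children of each visited node, so this is free -- and abort the moment some node would reach degree $\tilde{\sigma}+1$: at that instant we have seen $\tilde{\sigma}+1$ distinct characters, so we may safely report $\sigma>\tilde{\sigma}$. If the abort never fires then $d\leq\tilde{\sigma}$ throughout, so by the refined analysis the whole computation -- building the tree and then extracting a $\Delta$-approximate LZ factorisation of $T[x..y]$ -- runs in $\Oh(m\tilde{\sigma}\log^{2}m/\sqrt{\Delta})$ time plus the shared $\Oh(n\log^{*}n)$, as claimed.

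There is no genuine obstacle here; the two points that need care are (i) actually verifying, rather than asserting, that the $\sigma$ in the analysis of \cref{lem:sparse-algo} can uniformly be weakened to the realised maximum degree $d$, and (ii) triggering the abort as soon as a node \emph{reaches} degree $\tilde{\sigma}+1$, not merely after the offending insertion completes, so that the ``$\Oh(d)$ per descent step'' bound holds with $d\leq\tilde{\sigma}$ right up to the abort.
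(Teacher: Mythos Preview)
Your proposal is correct and follows essentially the same approach as the paper: refine the analysis of \cref{lem:sparse-algo} so that the $\sigma$ factor is replaced by the realised maximum node degree $d$, then monitor degrees during construction and abort as soon as some node reaches degree $\tilde{\sigma}+1$. Your write-up is in fact more careful than the paper's, explicitly isolating where $\sigma$ enters the cost and noting that the abort must fire before the offending insertion step completes.
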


Now we can describe the algorithm that detects squares in $\Oh(n \lg \sigma + n \log^* n)$ time and $\Oh(n \lg \sigma)$ comparisons. 
We simply use the algorithm from \cref{sec:upper}, but use \cref{lem:compute3} instead of \cref{lem:compute2}. Next, we analyse the time needed apart from the $\Oh(n \log^* n)$ time shared by all invocations of \cref{lem:compute3}. Throughout the $t^{\text{th}}$ phase, we use $\Oh(n \cdot \tilde{\sigma} \cdot \log^2(\sigma_t) / \sqrt{\Delta}) = \Oh(n\cdot (\sigma_{t})^{1/4}/\log(\sigma_{t})\cdot \log^2(\sigma_{t})/\sqrt{\sigma_{t}})=\Oh(n \log (\sigma_t)/(\sigma_{t})^{1/4})$ comparisons
to construct all the $\Delta$-approximate factorisations. As before, if at any time we discover that $\sigmaapprox > (\sigma_t)^{1/4} / \log(\sigma_t)$, then we use \cref{lem:classical} to finish the computation in $\Oh(n \lg \sigma_t) = \Oh(n \log \sigma)$ time. Until then (or until we finished all $\ceil{\log \log n}$ phases), we use $\Oh(\sum_{t=0}^{t'} n \log(\sigma_t)/(\sigma_{t})^{1/4})$ time, and by \cref{lem:polylog} this is $\Oh(n)$. 
For detecting squares, we still use \cref{lem:long}, which as explained in \cref{sec:upper} takes $\Oh(n)$ time and comparisons in total, plus additional $\Oh(Z)$ time, where $Z$ is the number of approximate LZ factors considered during all invocations of the lemma. 
We apply the lemma to each approximate LZ factorisation exactly once, and by construction each factor in phase $t$ has size at least $\Delta = \Omega(\sigma_t)$. Also, each text position is covered by at most two tails per phase. Hence $Z = \Oh(\sum_{t=0}^{t'} n/\sigma_t)$, which is $\Oh(n)$ by \cref{lem:polylog}.

The last thing that remains to be shown is how to implement the bookkeeping of blocks, i.e., in each phase we have to efficiently deactivate block pairs as described at the end of \cref{sec:upper}. 
We maintain the block pairs in $\ceil{\log \log n}$ bitvectors of total length $\Oh(n)$, where a set bit means that a block pair has been deactivated (recall that there are $\Oh(n)$ pairs in total). 
Bitvector $t$ contains at position $j$ the bit corresponding to block pair $B_{j}B_{j + 1} = T[i..i+2(\sigma_t)^2)$ with $i = (\sigma_t)^2 \cdot (j - 1)$. Note that translating between $i$ and $j$ takes constant time.
For each sufficiently long tail in phase $t$, we simply iterate over the relevant block pairs in phase $t + 2$ and deactivate them, i.e., we set the corresponding bit. This takes time linear in the number of deactivated blocks.
Since there are $\Oh(n)$ block pairs, and each block pair gets deactivated at most a constant number of times, the total cost for this bookkeeping is $\Oh(n)$.

The number of comparisons is dominated by the $\Oh(n \log \sigma)$ comparisons used when finishing the computation with \cref{lem:classical}. The only other comparisons are performed by \cref{lem:long}, which we already bounded by $\Oh(n)$, and by LCE queries via \cref{lem:LCE_undordered}. Since we ask $\Oh(n)$ such queries in total, the number of comparisons is also $\Oh(n)$.
We have shown:

\begin{lemma}
The square detection algorithm from \cref{sec:upper} can be implemented in $\Oh(n \lg \sigma + n \log^* n)$ time and $\Oh(n \log \sigma)$ comparisons.
\end{lemma}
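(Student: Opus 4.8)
The plan is to keep the comparison-optimal algorithm of \cref{sec:improved} unchanged at the combinatorial level and replace only its oracle subroutine: every call to \cref{lem:compute2} that builds a $\Delta$-approximate LZ factorisation of a block pair becomes a call to the algorithmic version \cref{lem:compute3}. All structural ingredients --- the phases $t = 0,1,\dots,\ceil{\log\log n}$ with $\sigma_t = 2^{2^{\ceil{\log\log n}-t}}$, the parameters $\Delta = \sigma_t/8$ and $\tilde\sigma = (\sigma_t)^{1/4}/\log(\sigma_t)$, the block length $(\sigma_t)^2$, the application of \cref{lem:long} to each factorisation to detect squares of length at least $\sigma_t$, the fallback to \cref{lem:classical} on all block pairs of the current phase as soon as some call reports $\sigma > \tilde\sigma$, and the deactivation of every block pair contained in a long tail --- are copied verbatim, so correctness is inherited from the proof of \cref{thm:upperbound}. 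It remains only to verify the time bound and re-confirm the comparison bound, both of which amount to bookkeeping on top of the earlier analysis.

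For the running time I would separate four contributions. (1) Building the factorisations: by \cref{lem:compute3}, phase $t$ costs $\Oh(n\,\tilde\sigma\,\log^2(\sigma_t)/\sqrt{\Delta}) = \Oh(n\log(\sigma_t)/(\sigma_t)^{1/4})$ summed over the block pairs of that phase (the blocks partition $T[1..n]$ and $\Delta = \Theta(\sigma_t)$), plus the $\Oh(n\log^* n)$ preprocessing of \cref{lem:LCE_undordered}, which is a one-time global cost since that lemma works online on the whole text; summing over $t \le t'$ and invoking \cref{lem:polylog} yields $\Oh(n) + \Oh(n\log^* n)$. What makes this honest is the degree-sensitive refinement inside \cref{lem:compute3}: the true cost of a call is $\Oh(m\,d\,\log^2 m/\sqrt\Delta)$ with $d$ the largest node degree encountered, so as soon as $d = \tilde\sigma+1$ we abort, report $\sigma > \tilde\sigma$, and (2) finish that phase with \cref{lem:classical} on its $\Oh(n/(\sigma_t)^2)$ block pairs in $\Oh(n\log\sigma_t)$ time, which is $\Oh(n\log\sigma)$ because $\sigma > \tilde\sigma$ forces $\log\sigma_t = \Oh(\log\sigma)$ (and trivially so once $\sigma_t = \Oh(1)$). (3) Running \cref{lem:long} on each factorisation: as in \cref{sec:improved} this performs $\Oh(n)$ work and comparisons charged against deactivated text positions, plus $\Oh(Z)$ overhead where $Z$ is the total number of approximate LZ factors processed; since each factor of phase $t$ has length $\Omega(\sigma_t)$ and each position lies in at most two tails per phase, $Z = \Oh(\sum_t n/\sigma_t) = \Oh(n)$ by \cref{lem:polylog}. (4) The deactivation bookkeeping: maintain $\ceil{\log\log n}$ bitvectors of total length $\Oh(n)$, bitvector $t$ storing at index $j$ the status of the pair of phase-$t$ blocks starting at text position $(\sigma_t)^2(j-1)$, with $\Oh(1)$ index translation; deactivating the phase-$(t+2)$ pairs inside a long tail just sets the relevant bits, and since there are $\Oh(n)$ pairs across all phases and each is deactivated $\Oh(1)$ times this is $\Oh(n)$ total. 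Adding up gives $\Oh(n\log\sigma + n\log^* n)$ time.

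For the comparison count I would note that the only symbol comparisons now arise from \cref{lem:long} (already $\Oh(n)$); from the LCE queries issued inside \cref{lem:compute3} via \cref{lem:LCE_undordered} --- essentially the same quantity that \cref{lem:compute2} charges as comparisons, each resolving in $\Oh(1)$ comparisons by the union-find-on-positions argument, hence $\Oh(n)$ after telescoping through \cref{lem:polylog}; and from the fallback calls to \cref{lem:classical}, contributing $\Oh(n\log\sigma)$. Therefore the algorithm uses $\Oh(n\log\sigma)$ comparisons overall, matching \cref{thm:upperbound}.

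I expect the main obstacle to be verifying that the per-phase costs telescope without any double charging, rather than anything conceptual: concretely, (a) that the degree-sensitive cost bound of \cref{lem:compute3} is actually exploited, so an aborted call is never charged more than the $\Oh(n\log\sigma)$ spent afterwards by \cref{lem:classical}; (b) that the $\Oh(n\log^* n)$ term of \cref{lem:LCE_undordered} is genuinely shared across all $\Oh(\log\log n)$ phases and all block pairs, which holds because that lemma preprocesses the single global text $T$; and (c) that the deactivation invariant from \cref{sec:improved} survives the switch to \cref{lem:compute3}, which it does since an aborting phase runs \cref{lem:classical} on all of its pairs while a non-aborting phase produces exactly the same type of factorisation as before. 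The genuinely harder task --- removing the residual $n\log^* n$ term to reach $\Oh(n\log\sigma)$ time --- is not needed for this lemma and is handled separately in the rest of \cref{sec:alg}.
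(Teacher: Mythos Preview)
Your proposal is correct and follows essentially the same approach as the paper: replace \cref{lem:compute2} by \cref{lem:compute3}, bound the per-phase factorisation time by $\Oh(n\log(\sigma_t)/(\sigma_t)^{1/4})$ and telescope via \cref{lem:polylog}, charge the fallback to \cref{lem:classical} as $\Oh(n\log\sigma)$, bound \cref{lem:long} by $\Oh(n)$ plus $\Oh(Z)=\Oh(n)$ for the factor count, and handle deactivation with per-phase bitvectors of total size $\Oh(n)$. Your comparison accounting (LCE queries $\Oh(n)$, \cref{lem:long} $\Oh(n)$, fallback $\Oh(n\log\sigma)$) also matches the paper exactly.
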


\subsection{Final Improvement}
\label{sec:finalimprov}

\definecolor{new-red-fill}{HTML}{f1a340}
\colorlet{new-red-line}{new-red-fill!50!white}
\colorlet{new-red-line-end}{black}
\tikzset{redonlyhatch/.style={
	pattern=north west lines, 
	pattern color=new-red-line},
}
\tikzset{redhatch/.style={
	preaction={fill, new-red-fill}, 
	redonlyhatch},
}

\definecolor{new-blue-fill}{HTML}{998ec3}
\colorlet{new-blue-line}{new-blue-fill!75!black}
\colorlet{new-blue-line-end}{black}
\tikzset{blueonlyhatch/.style={
	pattern=north east lines, 
	pattern color=new-blue-line,
}}

\tikzset{bluehatch/.style={
	preaction={fill, new-blue-fill}, 
	blueonlyhatch,
}}

\colorlet{new-dense-fill}{black!10!white}
\colorlet{new-dense-line}{black!40!white}
\colorlet{new-dense-line-end}{black}
\tikzset{densehatch/.style={
	preaction={fill, new-dense-fill}, 
	pattern color=new-dense-line},
}

\colorlet{new-match-fill}{white}
\colorlet{new-match-line}{white}
\colorlet{new-match-line-end}{black}
\tikzset{matchhatch/.style={
	preaction={fill, new-match-fill}, 
	pattern=crosshatch, 
	pattern color=new-match-line},
}

\begin{figure}

\centering

\subcaptionbox{\label{fig:improv:1}Sampling dense fragments and cutting the text into chunks. Dotted lines indicate chunk boundaries, and $h_x = (j + x)\cdot \tau$ for some integer $j$ and $x \in [0, 13]$ are positions of chunk boundaries. 
The dense fragments are $D_1 = T[h_2..h_3)$, $D_2 = T[h_7..h_8)$, and $D_3 = T[h_{12}..h_{13})$. 
The primary occurrences of dense fragments are grey, while the secondary occurrences (the ones that we aim to find) are white.
A purple box in the text, and the matching purple line underneath the text, correspond to some substring $T[j\cdot\tau-r_{j - 1}..j\cdot\tau)$. Similarly, the orange boxes and lines correspond to substrings $T[j\cdot\tau..j\cdot\tau + \ell_j)$.
}{
\begin{tikzpicture}[x=.0098\textwidth, y=1.3em]

\tikzset{slimfit/.style={inner sep=0, outer sep=0, minimum width=0, minimum height=0}}
\tikzset{every node/.style={slimfit}}

\foreach[count=\xplus from 5] \x in {4,...,99} {
	\node (\x-tl) at (\x,0) {};
	\node (\x-br) at (\xplus,1) {};
	\node[fit=(\x-tl)(\x-br)] (\x) {};
	\node (\x-tl2) at (\x,0) {};
	\node (\x-br2) at (\xplus,1) {};
	\node[fit=(\x-tl2)(\x-br2)] (\x-2) {};
}

\node[fit=(4)(9)] (prefix) {};
\node[fit=(94)(99)] (suffix) {};

\foreach \x/\y in {%
suffix.north east/suffix.north west,%
prefix.north east/prefix.north west,%
suffix.south east/suffix.south west,%
prefix.south east/prefix.south west,%
suffix.north east/suffix.south east,%
prefix.north west/prefix.south west%
}
{
\draw[densely dotted, thick] (\x) to (\y);
}

\foreach[count=\i from 1,
count=\xprime from 2, %
evaluate=\xprime as \x using int(2+5*\xprime),%
evaluate=\x as \y using int(\x-\z+1)] \z in {1,3,1,5,3,3,0,1,5,3,0,2,4,5,2,4} {
\node[bluehatch, fit=(\x.north east)(\y.south west)] (pattern) {};
\draw[new-blue-line-end] (pattern.north west) to (pattern.south west);

\path (pattern.south west) ++(0.15,-0.3) node (\i-ptl) {};
\path (pattern.east |- \i-ptl) node (\i-ptc) {};
}

\foreach[count=\i from 1,
count=\xprime from 2, %
evaluate=\xprime as \x using int(3+5*\xprime),%
evaluate=\x as \y using int(\x+\z-1)] \z in {1,3,5,1,2,4,2,5,0,3,0,0,5,1,3,4} {
\node[redhatch, fit=(\x.north west)(\y.south east)] (pattern) {};
\draw[new-red-line-end] (pattern.north east) to (pattern.south east);

\path (pattern.east |- \i-ptl) ++(-0.15,-0.2) node (\i-pbr) {};
\ifnum\z=0
\path (\i-pbr |- \i-ptc) node (\i-ptc) {};
\fi
}

\foreach[count=\i from 1, evaluate=\multi as \offset using -\multi*0.35] \multi in {
0,
0,
0,
1,
0,
0,
0,
0,
1,
0,
42,
0,
0,
1,
0,
1
} {
\ifnum\multi<42
\path (\i-ptl) ++(0, \offset) node (\i-ptl) {};
\path (\i-ptc) ++(0, \offset) node (\i-ptc) {};
\path (\i-pbr) ++(0, \offset) node (\i-pbr) {};
\node[bluehatch, fit=(\i-ptl)(\i-ptc |- \i-pbr)] (lbox) {};
\node[redhatch, fit=(\i-ptc)(\i-pbr)] (rbox) {};
\draw[new-blue-fill] (lbox.north west) to (lbox.south west);
\draw[new-blue-fill] (lbox.north west) to (lbox.north east);
\draw[new-red-fill] (rbox.north east) to (rbox.south east);
\draw[new-red-fill] (rbox.north west) to (rbox.north east);
\draw[thick] (lbox.south west) to (rbox.south east);
\fi
}

\node[fit=(83)(84), shading=axis, left color=new-red-fill, right color=new-red-fill!50!new-blue-fill] {};
\node[fit=(85)(87), shading=axis, left color=new-red-fill!50!new-blue-fill, right color=new-blue-fill, pattern=vertical lines] {};
\filldraw[blueonlyhatch, draw=none] (87.north east) -- (86.north west) -- 
(84.south west) -- (87.south east);
\filldraw[redonlyhatch, draw=none] (83.north west) -- (86.north west) -- 
(84.south west) -- (83.south west);


\foreach[count=\i from 1, evaluate=\x as \y using int(\x+4)] \x in {23,48,73} {
\draw[densely dashed] (\x.north west) ++(0,0) to node[pos=1.0, below] (l\i) {} ++(0,-2.5);
\node[fit=(\x)(\y), densehatch, draw=new-dense-line-end] (d\i) {};
}

\node at (d1.center) {\contourlength{0pt}\contour{new-dense-fill}{$D_1$}};
\node at (d2.center) {\contourlength{0pt}\contour{new-dense-fill}{$D_2$}};
\node at (d3.center) {\contourlength{0pt}\contour{new-dense-fill}{$\null\enskip\, D_3$}};

\foreach[count=\i from 1] \x in {$i\cdot \tau^{2}$,$(i+1)\cdot \tau^{2}$,$(i+2)\cdot \tau^{2}$} {
\node[below=0.2em of l\i] {$\strut$\x};
}

\draw[thick] (prefix.north east) to (suffix.north west);
\draw[thick] (prefix.south east) to (suffix.south west);

\node[left=0 of prefix] {$\phantom{T'}\mathllap{T} =\enskip$};

\foreach[count=\i from 1] \x in {13,18,...,90} {
\draw[densely dotted, thick] (\x.north west) ++(0,0.5) to node[pos=0, outer sep=.1em] (l\i) {} ++(0,-1.5);
}

\foreach[count=\i from 0] \j in {1,2,3,4,5,6,7,8,9,10,11,12,13,14} {
\node[above=.25em of l\j] {$\scriptstyle h_{\i}$};
}

\node[fit=(36-2)(40-2), matchhatch, draw=new-match-line-end] (d) {};
\node at (d.center) {$D_3$};
\node[fit=(55-2)(59-2), matchhatch, draw=new-match-line-end] (d) {};
\node at (d.center) {$D_3$};
\node[fit=(70-2)(74-2), matchhatch, draw=new-match-line-end] (d) {};
\node at (d.center) {$D_1$};

\draw[densely dashed, gray] (11.south) ++(0, -1) to ++(0,3) to[out=90,in=270] ++(-5em,3em) to node[pos=1] (left) {} ++(0,3em);

\draw[densely dashed, gray] (21.south) ++(0, -1) to ++(0,3) to[out=90,in=270,looseness=.6] ++(10em,3em) to node[pos=1] (right) {} ++(0,3em);

\draw (right) ++(.5em,-1) node (rright) {};
\draw (rright) ++(0,1) node (rrightup) {};

\draw (left) ++(-.5em,-1) node (lleft) {};
\draw (lleft) ++(0,1) node (lleftup) {};

\node[below=.5em of lleft] (down) {};

\path (left) to node[pos=.05] (b1) {} (right);
\path (left) to node[pos=.15] (d1) {} (right);
\path (left) to node[pos=.25] (r1) {} (right);
\path (left) to node[pos=.35] (b2) {} (right);
\path (left) to node[pos=.65] (d2) {} (right);
\path (left) to node[pos=.95] (r2) {} (right);

\node[bluehatch, fit=(b1 |- lleftup)(d1 |- lleft)] (pattern) {};
\draw[new-blue-line-end] (pattern.north west) to (pattern.south west);
\node[fit=(pattern), inner sep=-.1em] (pattern) {};
\draw[<->] (pattern.west |- down) to node[midway, below=.2em] {$\strut r_{j-1}\enskip\null$} (pattern.east |- down);

\node[bluehatch, fit=(b2 |- lleftup)(d2 |- lleft)] (pattern) {};
\draw[new-blue-line-end] (pattern.north west) to (pattern.south west);
\node[fit=(pattern), inner sep=-.1em] (pattern) {};
\draw[<->] (pattern.west |- down) to node[midway, below=.2em] {$\strut r_{j}$} (pattern.east |- down);

\node[redhatch, fit=(r1 |- lleftup)(d1 |- lleft)] (pattern) {};
\draw[new-red-line-end] (pattern.north east) to (pattern.south east);
\node[fit=(pattern), inner sep=-.1em] (pattern) {};
\draw[<->] (pattern.west |- down) to node[midway, below=.2em] {$\strut \ell_{j}$} (pattern.east |- down);

\node[redhatch, fit=(r2 |- lleftup)(d2 |- lleft)] (pattern) {};
\draw[new-red-line-end] (pattern.north east) to (pattern.south east);
\node[fit=(pattern), inner sep=-.1em] (pattern) {};
\draw[<->] (pattern.west |- down) to node[midway, below=.2em] {$\strut \ell_{j+1}$} (pattern.east |- down);

\draw[densely dotted, thick] (d1 |- lleft) to node[pos=.75, outer sep=.2em] (l1) {} ++(0,2);
\draw[densely dotted, thick] (d2 |- lleft) to node[pos=.75, outer sep=.2em] (l2) {} ++(0,2);

\draw[<->] (l1) to node[midway, above=.2em] {$\tau$} (l2);

\node[above=1em of l1] {$\strut h_0 = j\cdot\tau$};
\node[above=1em of l2] {$\strut h_1 = (j+1)\cdot\tau$};

\draw[thick] (lleft) to (rright);
\draw[thick] (lleftup) to (rrightup);

\end{tikzpicture}

\vspace{.5\baselineskip}
}

\vspace{2\baselineskip}

\subcaptionbox{\label{fig:improv:2}The string $T'$ used to find all the occurrences of dense fragments. Each position $\hat{h}_x$ maps to position $h_x$ in \cref{fig:improv:1}. The substring indicated by the purple box preceding $h_x = (j+x)\tau$ and the orange box succeding $h_x$ is exactly $T[h_x - r_{j + x - 1}..h_x + \ell_{j + x})$. Each $\underset{\smash{\scriptstyle x}}{\smash{\texttt{\textdollar}}}%
$ is a distinct separator symbol that is unique within $T'$.}{
\begin{tikzpicture}[x=.0098\textwidth, y=1.3em]

\tikzset{slimfit/.style={inner sep=0, outer sep=0, minimum width=0, minimum height=0}}
\tikzset{every node/.style={slimfit}}

\foreach[count=\xplus from 5] \x in {4,...,99} {
	\node (\x-tl) at (\x,0) {};
	\node (\x-br) at (\xplus,1) {};
	\node[fit=(\x-tl)(\x-br)] (\x) {};
	\node (\x-tl2) at (\x,-1) {};
	\node (\x-br2) at (\xplus,0) {};
	\node[fit=(\x-tl2)(\x-br2)] (\x-2) {};
	\node (\x-tl3) at (\x,-1) {};
	\node (\x-br3) at (\xplus,0) {};
	\node[fit=(\x-tl3)(\x-br3)] (\x-3) {};
}

\node[fit=(4)(6)] (prefix) {};
\node[fit=(97)(99)] (suffix) {};

\foreach \x/\y in {%
suffix.north east/suffix.north west,%
prefix.north east/prefix.north west,%
suffix.south east/suffix.south west,%
prefix.south east/prefix.south west,%
suffix.north east/suffix.south east,%
prefix.north west/prefix.south west%
}
{
\draw[densely dotted, thick] (\x) to (\y);
}

\foreach[
remember=\nextfirst as \first (initially 8),
evaluate=\first as \nextfirst using int(\first+\a+\b+2),
evaluate=\first as \last using int(\first+\a+\b-1),
evaluate=\first as \mid using int(\first+\a),
evaluate=\first as \matchA using int(\first+\x),
evaluate=\first as \matchAend using int(\first+\x+4),
evaluate=\first as \matchB using int(\first+\n),
evaluate=\first as \matchBend using int(\first+\n+4),
count=\i from 0
] \a/\b/\x/\y/\n/\m in {%
1/1/0/0/0/0,
3/3/0/0/0/0,
1/5/1/1/0/0,
5/1/0/1/0/0,
3/2/0/0/0/0,
3/4/0/0/1/3,
0/2/0/0/0/0,
1/5/1/2/0/0,
5/0/0/2/0/0,
3/3/0/0/0/3,
0/0/0/0/0/0,
2/0/0/0/0/0,
4/5/4/3/1/1
}{
\node[bluehatch, fit=(\first.north west)(\mid.south west)] (pattern) {};
\draw[new-blue-line-end] (pattern.north west) to (pattern.south west);
\node[redhatch, fit=(\mid.north west)(\last.south east)] (pattern) {};
\draw[new-red-line-end] (pattern.north east) to (pattern.south east);
\draw[densely dotted, thick] (\mid.north west) ++(0,0.5) to node[pos=0, outer sep=.1em] (l\i) {} ++(0,-1.5);
\node[above=.25em of l\i] {$\scriptstyle \hat{h}_{\i}$};

\node[fit=(\last.north east)(\nextfirst.south west)] (dollar) {};
\node at (dollar.center) {%
\raisebox{-5.5pt}{%
$\underset{\smash{\scriptscriptstyle\i}}{\smash{\scriptstyle\texttt{\textdollar}}}%
$%
}};

\ifnum\y>0
\node[fit=(\matchA-2)(\matchAend-2), densehatch, draw=new-dense-line-end] (dense) {};
\ifnum\i=12
\node at (dense.center) {\contourlength{0pt}\contour{new-dense-fill}{$\enskip\ D_\y$}};
\else
\node at (dense.center) {\contourlength{0pt}\contour{new-dense-fill}{$D_\y$}};
\fi
\fi

\ifnum\m>0
\node[fit=(\matchB-3)(\matchBend-3), matchhatch, draw=new-match-line-end] (dense) {};
\node at (dense.center) {$D_\m$};
\fi
}

\draw[thick] (prefix.north east) to (suffix.north west);
\draw[thick] (prefix.south east) to (suffix.south west);

\node[left=0 of prefix] {$\phantom{T'}\mathllap{T'} =\enskip$};

\end{tikzpicture}

\vspace{.5\baselineskip}
}

\caption{Supplementary drawings for \cref{sec:finalimprov}.}
\label{fig:improv}
\end{figure}

For our final improvement we need to replace the LCE queries implemented by \cref{lem:LCE_undordered} with our own
mechanism.
The goal will remain the same, that is, given a parameter $\Delta$ and estimate $\tilde{\sigma}$ of the alphabet size,
find a $\Delta$-approximate LZ factorisation of any fragment $T[x..y]$ in $\Oh(m\tilde{\sigma}\log m/\sqrt{\Delta})$ time, where $m=|T[x..y]|$ (with $m=\Theta(\Delta^2)$, as otherwise we are not required to detect anything).

As in the previous section, the algorithm might detect that the size of the alphabet is larger
than $\tilde{\sigma}$, and in such case we revert to the divide-and-conquer algorithm. Let $\tau=\lfloor\sqrt{\Delta}\rfloor$.

Initially, we only consider some fragments of $T[x..y]$. We say that $T[i\cdot \tau^{2}..i\cdot \tau^{2} + \tau)$ is a dense fragment.
We start by remapping the characters in all dense fragments that intersect $T[x..y]$ to a linearly-sortable alphabet. This can be done
in $\Oh(\tilde{\sigma})$ time for each position by maintaining a list of the already seen distinct characters. For each position in a dense fragment,
we iterate over the characters in the list, and possibly append a new character to the list if it is not present. As soon as the size of
the list exceeds $\tilde{\sigma}$, we terminate the procedure and revert to the divide-and-conquer algorithm. Otherwise, we replace each character by its position in the list.
Overall, there are $\Oh(m/\sqrt{\Delta})$ positions in the dense fragments of $T[x..y]$, and the remapping takes $\Oh(m\tilde{\sigma}/\sqrt{\Delta})$ time.

Next, we construct two generalised suffix trees~\cite{Gusfield1997}, the first one of all dense fragments, and the second one of their reversals. 
(The generalised suffix tree of a collection of strings is the compacted trie that contains all suffixes of all strings in the collection.)
Again, because we
now work with a linearly-sortable alphabet this takes only $\Oh(m/\sqrt{\Delta})$ time~\cite{Farach1997}. We consider fragments
of the form $T[i\cdot \tau .. (i+1)\cdot \tau)$ having non-empty intersection with $T[x..y]$. We call such fragments chunks.
We note that there are $\Oh(m/\sqrt{\Delta})$ chunks, and their total length is $\Oh(m)$.
For each chunk, we find its longest prefix $T[i\cdot \tau.. i\cdot \tau+\ell_{i})$ and
longest suffix $T[(i+1)\cdot \tau-r_{i} .. (i+1)\cdot \tau)$ that occur in one of the dense fragments.
\cref{fig:improv:1} visualizes the dense fragments, chunks, and longest prefixes and suffixes.
This can be done efficiently by following the heavy path decomposition of the generalised suffix tree of all dense fragments
and their reversals, respectively. On each current heavy path, we just naively match the characters as long as possible.
In case
of a mismatch, we spend $\Oh(\tilde{\sigma})$ time to descend to the appropriate subtree, which happens at most $\Oh(\log m)$ times due to the heavy path decomposition.
After having found $\ell_{i}$ and $r_{i}$, we test square-freenes of $T[i\cdot \tau.. i\cdot \tau+\ell_{i})$
and $T[(i+1)\cdot \tau-r_{i} .. (i+1)\cdot \tau)$. Because they both occur in dense fragments,
and we have remapped the alphabet of all dense fragments, we can use \cref{lem:fasterclassical} to implement
this in $\Oh(\ell_{i}+r_{i})$ time.
Thus, the total time per chunk is thus $\Oh(\tilde{\sigma}\log m)$ plus $\Oh(\ell_{i}+r_{i})$. The former sums up to
$\Oh(m\tilde{\sigma}\log m/\sqrt{\Delta})$, and we will later show that the latter can be amortised by deactivating blocks on the lower levels.

The situation so far is that we have remapped the alphabet of all dense fragments to linearly-sortable, and for every chunk we know its longest
prefix and suffix that occur in one of the dense fragments. We concatenate 
all fragments of the form $T[i\cdot \tau-r_{i-1} .. i\cdot \tau+\ell_{i})$ (intersected with $T[x..y]$) while adding distinct separators
in between to form a new string $T'$. 
We stress that, because we have remapped the alphabet of all dense fragments, and the found
longest prefix and suffix of each chunk also occur in some dense fragment,  $T'$ is over linearly-sortable alphabet. 
Thus, we can build the suffix tree $ST$ of $T'$ in $\Oh(|T'|)$ time~\cite{Farach1997}. 
A visualization of $T'$ is provided in \cref{fig:improv:2}

Let $\mathcal{D}=\{D_{1},D_{2},\ldots\}$ be the set of distinct dense fragments.
We would like to construct the set of all occurrences of the strings in $\mathcal{D}$ in $T[x..y]$.
Using the suffix tree of $T'$ we can retrieve all occurrences
of every $D_{j}$ in $T'$. We observe that, because of how we have defined $T[i\cdot \tau.. i\cdot \tau+\ell_{i})$ and
$T[(i+1)\cdot \tau-r_{i} .. (i+1)\cdot \tau)$, this will in fact give us all occurrences of every $D_{j}$ in the original $T[x..y]$.
To implement this efficiently, we proceed as follows. First, for every $i$ we traverse $ST$ starting from its root to find the (explicit or implicit)
node corresponding to the dense fragment $T[i\cdot \tau^{2}.. i\cdot \tau^{2} + \tau)$.
This takes only $\Oh(m\tilde{\sigma}/\sqrt{\Delta})$ time.
Then, all leaves in every subtree rooted at such a node correspond to occurrences of some $D_{j}$, and can be
reported by traversing the subtree in time proportional to its size, so at most $\Oh(|T'|)$ in total.
Finally, remapping the occurrences back to $T[x..y]$ can be done in constant time per occurrence by precomputing,
for every position in $T'$, its corresponding position in $T[x..y$], which can be done in $\Oh(|T'|)$ time when
constructing $T'$. Thus, in $\Oh(|T'|)$ time, we obtain the set $S$ of starting positions of all occurrences
of the strings in $\mathcal{D}$. We summarize the properties of $S$ below.

\begin{proposition}
$S$ admits the following properties:
\begin{enumerate}
\item For every $i\in [x,y]$ such that $i = 0 \pmod {\tau^{2}}$, $i\in S$.
\item For every $i\in [x,y-\tau]$, $i\in S$ if and only if $T[i..i+\tau) \in \mathcal{D}$.
\item $|S| \leq |T'|$.
\end{enumerate}
\end{proposition}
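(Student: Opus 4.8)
The plan is to verify that the suffix-tree traversal over $ST$ reports exactly the set of starting positions (inside $T[x..y]$) of occurrences of strings from $\mathcal{D}$, and then read off the three claims. Property 3 is the easiest and I would dispatch it first: every member of $\mathcal{D}$ has length $\tau$, so at most one of them can begin at any fixed position of $T'$ (two would coincide); hence $T'$ contains at most $|T'|$ occurrences of strings of $\mathcal{D}$. Each leaf reported during the traversal is one such occurrence and is mapped to a single position of $T[x..y]$ (distinct leaves may map to the same position, which is harmless), so $|S|\le|T'|$.

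For the completeness half of Property 2 — and hence for Property 1 — I would argue that whenever $T[i..i+\tau)\in\mathcal{D}$ (with $i$ such that this fragment fits inside $T[x..y]$), the string $T'$ contains an occurrence of that dense fragment at the $T'$-position that maps back to $i$. Put $c=\lfloor i/\tau\rfloor$, so $T[i..i+\tau)$ is covered by chunks $c$ and $c+1$, both of which meet $[x,y]$ (at $i$ and at $i+\tau-1$) and were therefore processed. Since $T[i..i+\tau)$ equals some $D_j$, the suffix $T[i..(c+1)\tau)$ of chunk $c$ and the prefix $T[(c+1)\tau..i+\tau)$ of chunk $c+1$ are both substrings of a dense fragment; by the definitions of $r_c$ and $\ell_{c+1}$ this forces $r_c\ge (c+1)\tau-i$ and $\ell_{c+1}\ge i+\tau-(c+1)\tau$, so the glued piece $T[(c+1)\tau-r_c..(c+1)\tau+\ell_{c+1})$ — one of the fragments concatenated into $T'$ — contains $T[i..i+\tau)$. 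In particular the node of $ST$ spelling $D_j$ exists, and traversing to it reports the leaf for this occurrence, giving $i\in S$. Property 1 is now immediate: for a multiple $i$ of $\tau^2$, the fragment $T[i..i+\tau)$ is itself a dense fragment, hence a member of $\mathcal{D}$, and completeness applies (at the right boundary one uses instead the glued piece built around the chunk equal to this dense fragment, whose matched prefix has full length $\tau$).

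For the soundness half, suppose $i\in S$. Then some $D_j\in\mathcal{D}$ occurs in $T'$ at a $T'$-position whose associated text position is $i$. The separators inserted between consecutive glued pieces are symbols that are unique in $T'$, hence absent from every $D_j$, so this occurrence cannot cross a separator and lies entirely inside a single glued piece, which is a position-preserving copy of a fragment of $T$ (up to the consistent remapping of the dense-fragment alphabet); therefore $T[i..i+\tau)=D_j\in\mathcal{D}$, and the whole match being inside $T[x..y]$ puts $i$ in the stated range. The main obstacle in the write-up is that $T'$ is a \emph{selective} concatenation: most text positions are not represented at all, and consecutive glued pieces may overlap inside $T$, so the argument must pin down that (i) every occurrence-start of a dense fragment is nevertheless represented with at least $\tau$ trailing characters (this is exactly the $r_c,\ell_{c+1}$ estimate above), and (ii) the overlaps of glued pieces can only create duplicate reports — absorbed since $S$ is a set, and already covered by the bound $|S|\le|T'|$ — but never a spurious occurrence, which is what the uniqueness of the separators rules out.
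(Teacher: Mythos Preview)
Your proposal is correct and makes explicit exactly the argument the paper leaves implicit: the paper states the proposition without proof, merely observing that ``because of how we have defined $T[i\cdot \tau.. i\cdot \tau+\ell_{i})$ and $T[(i+1)\cdot \tau-r_{i} .. (i+1)\cdot \tau)$, this will in fact give us all occurrences of every $D_{j}$ in the original $T[x..y]$.'' Your completeness argument via the maximality of $r_c$ and $\ell_{c+1}$ at the straddled chunk boundary, your soundness argument via the separators, and your counting bound for Property~3 are precisely the justification this sentence gestures at.
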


We now define a parsing of $T[x..y]\$ $ based on $S$.
Let $i_{1}<i_{2}<\ldots i_{k}$ be all the positions in $S$, that is, $(i_{j},i_{j+1}) \cap S = \emptyset $ for every $j=1,2,\ldots,k-1$.
For every $j=1,2,\ldots,k-1$, we create the phrase $T[i_{j}.. i_{j+1}+\tau)$. We add the last phrase $T[i_{k}.. y]\$$. We stress that
consecutive phrases overlap by $\tau$ characters, and each phrase begins with a length-$\tau$ fragment starting at a position in $S$.
This, together with property 2 of $S$, implies the following property.

\begin{observation}
\label{obs:prefixfree}
The set of distinct phrases is prefix-free.
\end{observation}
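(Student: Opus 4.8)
The plan is to argue by contradiction: assume that some phrase $P$ is a proper prefix of a distinct phrase $Q$, and derive a contradiction from the fact that the starting positions $i_1<i_2<\dots<i_k$ of the phrases are exactly the \emph{consecutive} elements of $S$. The first step is to get rid of the last phrase $T[i_k..y]\$$, using that $\$$ is a unique terminator: every other phrase $T[i_j..i_{j+1}+\tau)$ (with $j<k$) is a substring of $T[x..y]$, which does not contain $\$$, so $T[i_k..y]\$$ is the only phrase containing $\$$. Hence if $P$ contained $\$$ then $Q$ would too, forcing $Q=T[i_k..y]\$$; but then $P$, being a proper prefix of $Q$, stops before the final $\$$ and therefore does not contain $\$$ — a contradiction. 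So $P$ is a non-last phrase, $P=T[i_a..i_{a+1}+\tau)$ with $1\le a\le k-1$, and in particular $\absolute{P}=i_{a+1}-i_a+\tau\ge\tau+1>\tau$.

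The second step extracts the key structural fact: the length-$\tau$ suffix of a non-last phrase is a member of $\mathcal D$. Since consecutive phrases overlap in exactly $\tau$ characters, the last $\tau$ characters of $P=T[i_a..i_{a+1}+\tau)$ are $T[i_{a+1}..i_{a+1}+\tau)$, which starts at the position $i_{a+1}\in S$. By property~2 of $S$ (together with the definition of $S$ as the set of starting positions of all occurrences of the strings in $\mathcal D$, which also accounts for the few positions near $y$ just outside the literal range of property~2), this fragment equals some $D_j\in\mathcal D$.

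The third step is the contradiction itself. Write $Q=T[i_b..\cdot)$ with $i_b\in S$. Since $P$ occurs as a prefix of $Q$, the string $P$ — and hence its length-$\tau$ suffix, the dense fragment $D_j$ — occurs inside $Q$ starting at text position $i':=i_b+(\absolute{P}-\tau)$. As $P$ is a proper prefix of $Q$ it does not reach the terminator, so $i'+\tau-1=i_b+\absolute{P}-1\le y$, making $T[i'..i'+\tau)$ a genuine length-$\tau$ fragment of $T[x..y]$ equal to $D_j\in\mathcal D$; by property~2 we get $i'\in S$. Now $\absolute{P}>\tau$ gives $i'>i_b$. If $Q$ is the last phrase then $b=k$ and $i'>i_k=\max S$, contradicting $i'\in S$. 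If $Q=T[i_b..i_{b+1}+\tau)$ is non-last, then $\absolute{Q}=i_{b+1}-i_b+\tau$, and $\absolute{P}<\absolute{Q}$ is exactly equivalent to $i'<i_{b+1}$; so $i'\in S$ with $i_b<i'<i_{b+1}$, contradicting that $i_b$ and $i_{b+1}$ are consecutive in $S$. In all cases we reach a contradiction, so the set of distinct phrases is prefix-free.

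I expect the only real obstacle to be the bookkeeping at the right end of $T[x..y]$: one has to confirm that $\max S\le y-\tau+1$ (so that every non-last phrase is genuinely a substring of $T[x..y]$ and avoids $\$$) and that the equivalence ``position in $S$ $\iff$ a string of $\mathcal D$ starts there'' is available for the handful of positions within distance $\tau$ of $y$ that fall just outside the interval stated in property~2. Both follow directly from the way $S$ is constructed as the set of occurrence-start positions of the strings in $\mathcal D$, but this is the one place where the clean argument above needs a short look back at the construction rather than a black-box appeal to the proposition.
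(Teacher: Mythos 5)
Your proof is correct and takes the same route the paper intends: the paper's one-sentence justification (the $\tau$-character overlap between consecutive phrases together with property~2 of $S$) is precisely the argument you unpack — the length-$\tau$ suffix of any non-last phrase is a dense fragment, so if $P$ were a proper prefix of $Q$ this dense fragment would occur inside $Q$ at a position strictly between two consecutive elements of $S$ (or past $\max S$), contradicting the definition of $S$. Your handling of the last phrase via the unique terminator $\$$ and your flagging of the boundary cases near $y$ (where one must appeal to the construction of $S$ rather than the literal interval in property~2) are both sound observations that the paper elides.
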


We would like to construct the compacted trie $\Tphrase$ of all such phrases, so that (in particular) we identify identical
phrases. 
We first notice that each phrase begins with a fragment $T[i_{j}..i_{j}+\tau)$ that has its corresponding
occurrence in $T'$. We note that, given a set of positions $P$ in $T$, we can find their corresponding
positions in $T'$ (if they exist) by sorting and scanning in $\Oh(|P|+|T'|)$ time.

Thus, we can
assume that for each $i_{j}$ we know its corresponding position $i'_{j}$ in $T'$.
Next, for each node of $ST$ we precompute its unique ancestor at string depth $\tau$ in $\Oh(|T'|)$ time.
Then, for every fragment $T[i_{j}..i_{j}+\tau)$ we can access its corresponding (implicit or explicit)
node of $ST$. This allows us to partition all phrases according to their prefixes of length $\tau$.
In fact, this gives us the top part of $\Tphrase$ containing all such prefixes in $\Oh(m/\sqrt{\Delta})$ time,
and for each phrase we can assume that we know the node of $\Tphrase$ corresponding to its length-$\tau$
prefix. 

To build the remaining part of $\Tphrase$, we partition the phrases into short and long.
$T[i_{j}.. i_{j+1}+\tau)$ is short when $i_{j+1} \leq i_{j}+\tau$ (meaning that its length is at most $2\tau$), and long otherwise.

We begin with constructing the compacted trie $\Tphrase'$ of all short phrases.
This can be done similarly to constructing the top part of $\Tphrase$, except that now
the fragments have possibly different lengths. However, every short phrase $T[i_{j}..i_{j+1}+\tau)$
occurs in $T'$ as $T'[i'_{j}..i'_{j+1}+\tau)$. We claim that the nodes of $ST$ corresponding
to every $T'[i'_{j}..i'_{j+1}+\tau)$ can be found in $\Oh(|T'|)$ time. This can be done
by traversing $ST$ in the depth-first order while maintaining a stack of all explicit nodes with string depth at least $\tau$
on the current path. Then, when visiting the leaf
corresponding to the suffix of $T'$ starting at position $i'_{j}$, we iterate over the current
stack to find the sought node. This takes at most $\Oh(|T'[i_{j}+\tau..i_{j+1}+\tau]|)$ time,
which sums up to $\Oh(|T'|)$. Having found the node of $ST$ corresponding to $T[i_{j}..i_{j+1}+\tau)$,
we extract $\Tphrase'$ from $ST$ in $\Oh(|T'|)$ time.

With $\Tphrase'$ in hand, we construct the whole $\Tphrase$ as follows. We begin with taking the union of $\Tphrase'$ and
the already obtained top part of $\Tphrase$, this can be obtained in $\Oh(|T'|)$ time. For each long
phrase $T[i_{j}.. i_{j+1}+\tau)$, we know the node corresponding to $T[i_{j}..i_{j}+\tau)$
and would like to insert the whole string $T[i_{j}.. i_{j+1}+\tau)$ into $\Tphrase$.
We perform the insertions in increasing order of $i_j$ (this will be crucial for amortising the time later).
This is implemented with a dynamic heavy path decomposition similarly as in \cref{sec:suffixtree},
however with one important change. Namely, we fix a heavy path decomposition of the
part of $\Tphrase$ corresponding to the union of $\Tphrase'$ and the top part of $\Tphrase$, and maintain a dynamic heavy
path decomposition of every subtree hanging off from this part. Thanks to this change,
the time to maintain the dynamic trie and all heavy path decompositions is $\Oh(m\log m/\sqrt{\Delta})$,
as there are only $\Oh(m/\sqrt{\Delta})$ long phrases. Next, for each long phrase
$T[i_{j}..i_{j + 1}+\tau)$, we begin the insertion at the already known node corresponding to $T[i_{j}..i_j+\tau)$,
and continue the insertion by following the heavy paths, first in the static heavy path decomposition
in the part of $\Tphrase$ corresponding to $\Tphrase'$, second in the dynamic heavy path
decomposition in the appropriate subtree.
On each heavy path, we naively match the characters as long as possible.
The time to insert a single phrase $T[i_{j}.. i_{j+1}+\tau)$ is $\Oh(\log m)$ (twice)
plus the length of the longest prefix
of $T[i_{j}+\tau.. i_{j+1}+\tau)$ equal to a prefix of $T[i_{j'}+\tau.. i_{j'+1}+\tau)$, for some $j'<j$.
The former sums up to another $\Oh(m\log m/\sqrt{\Delta})$, and we will
later show that the latter can be amortised by deactivating blocks on the lower levels.

$\Tphrase$ allows us to form metacharacters corresponding to the phrases, and transform $T[x..y]$ into a string $\Tparse$
of length $\Oh(|T'|)$ consisting of these metacharacters.
We build a suffix tree $\Sparse$ over this
string over linearly-sortable metacharacters in $\Oh(|T'|)$ time. Next, we convert it into the sparse suffix tree
$\Sparse'$ of all suffixes $T[i_{j}..y]$ as follows. Consider an explicit node $u\in\Sparse$ with children $v_{1},v_{2},\ldots,v_{d}$,
$d\geq 2$. We first compute the subtree $\mathcal{T}_{u}$ of $\Tphrase$ induced by the leaves corresponding to the first metacharacters
on the edges $(u,v_{i})$, for $i=1,2,\ldots,d$, and connect every $v_{i}$ to the appropriate leaf of $\mathcal{T}_{u}$.
This can be implemented in $\Oh(d)$ time, assuming constant-time lowest common ancestor queries on $\Tphrase$~\cite{BenderF00}
and processing the leaves from left to right with a stack, similarly as in the Cartesian tree construction algorithm~\cite{Vuillemin80}.
We note that the order on the leaves is the same as the order on the metacharacters, and hence no extra sorting is necessary.
Overall, this sums up to $\Oh(|T'|)$.
Next, we observe that, unless $u$ is the root of $\Sparse$, all metacharacters on the edges $(u,v_{i})$ correspond
to strings starting with the same prefix of length $\tau$. We obtain the subtree $\mathcal{T}_{u}'$ by truncating
this prefix (or taking $\mathcal{T}_{u}$ if $u$ is the root). Finally, we identify the root of $\mathcal{T}_{u}'$
with $u$, and every child $v_{i}$ with its corresponding leaf of $\mathcal{T}_{u}'$.
Because we truncate the overlapping prefixes of length $\tau$, after this procedure is executed on every node
of $\Sparse$ we obtain a tree $\Sparse'$ with the property that each leaf corresponds to a suffix
$T[i_{j}..y]$. Also, by \cref{obs:prefixfree}, the edges outgoing from every node start with different characters
as required.

By following an argument from the proof of \cref{lem:lz-log-star},
$\Sparse'$ allows us to determine, for every suffix $T[i_{j}..y]$, its longest prefix equal to a prefix of some $T[i'..y]$
with $i' < i_{j}$, as long as its length is at least $\tau$. Indeed, in such case we must have $i'\in S$ by property 2,
so in fact $i'=i_{j'}$ and it is enough to maximise the length of the common prefix
with all earlier positions in $S$, which can be done using $\Sparse'$. Thus, we either know that the length of this longest
prefix is less than $\tau$, or know its exact value (and the corresponding position $i'\in S$).

\begin{lemma}
\label{lem:synchro-facto}
For any parameter $\Delta \in [1,m]$ and estimate $\tilde{\sigma}$ of the alphabet size, a $(\Delta+\tau)$-approximate LZ factorisation of any
fragment $T[x..y]$ can be computed in $\Oh(m/\sqrt{\Delta})$ time with $m=|T[x..y]|$ (assuming the preprocessing
described earlier in this section).
\end{lemma}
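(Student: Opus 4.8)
The plan is to reuse the greedy, left-to-right phrase-building scheme from the proof of \cref{lem:lz-log-star}, but — since $|S|$ may be as large as $\Theta(m)$ — to avoid scanning a whole window of $S$ at each step and instead consult a single \emph{canonical} sample position: the smallest multiple of $\tau^{2}$ at or after the current starting position. I would rely on the preprocessing having produced the tree $\Sparse'$ and, as argued just before the lemma, the values $\maxleftlce(i_{j})=\max\{\lce(i',i_{j}):i'\in S,\ i'<i_{j}\}$ together with a witness position $\maxleft(i_{j})\in S$, for every $i_{j}\in S$ for which this maximum is at least $\tau$ (otherwise one only learns that it is smaller than $\tau$).

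First I would describe one step. Assume $T[x..s-1]$ has already been factorised, so the next phrase must begin at $s$. Let $q$ be the smallest multiple of $\tau^{2}$ with $q\ge s$, so $q-s<\tau^{2}\le\Delta$; when $q\le y-\tau$ the fragment $T[q..q+\tau)$ is the content of a dense fragment, hence $q\in S$ by property~2 of $S$. If $q>y-\tau$, or $\maxleftlce(q)<\tau$, or $q+\maxleftlce(q)-1<s+\Delta+\tau-2$, I would output the empty-tail phrase $T[s..\min\{y,\,s+\Delta+\tau-2\}]$; otherwise the phrase $T[s..q+\maxleftlce(q)-1]$ with head $T[s..q)$ and tail $T[q..q+\maxleftlce(q))$, whose earlier occurrence starts at $\maxleft(q)$. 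In both cases the head has length at most $\Delta+\tau-1$, and a non-empty tail occurs at $q$ and at $\maxleft(q)<q$, both inside $T[1..e]$ (where $e$ is the right end of the output factor), so the syntactic requirements of a $(\Delta+\tau)$-approximate phrase are met; what remains is to show that the standard LZ phrase $T[s..e']$ of $T[x..y]$ starting at $s$ satisfies $e'-1\le e$.

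The heart of the argument — and the step I expect to be the main obstacle — is the synchronisation claim for the case $e'\ge s+\Delta+\tau$. Let $T[a..b]$ be an earlier occurrence of $T[s..e'-1]$ in $T[x..y]$, so $a<s$ and $b-a=e'-1-s$. Since $q-s<\tau^{2}\le\Delta$ we get $q+\tau<s+\Delta+\tau\le e'$, so $T[q..q+\tau)$ lies inside $T[s..e'-1]$ and therefore equals $T[q'..q'+\tau)$ for $q':=a+(q-s)<q$. As $T[q..q+\tau)$ is the content of a dense fragment, so is $T[q'..q'+\tau)$, and by property~2 of $S$ this forces $q'\in S$. Hence $\maxleftlce(q)\ge\lce(q',q)\ge e'-q$, and $e'-q>e'-s-\tau^{2}\ge\Delta+\tau-\tau^{2}\ge\tau$, so the algorithm indeed knows this value; thus $q+\maxleftlce(q)-1\ge e'-1\ge s+\Delta+\tau-1$, the tail branch is taken, and $e\ge e'-1$. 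In the complementary case $e'\le s+\Delta+\tau-1$ one has $e'-1\le\min\{y,\,s+\Delta+\tau-2\}$ (using $e'-1\le y$), so whichever branch is taken the output factor reaches at least $e'-1$. A few off-by-one situations near position $y$ — when $q>y-\tau$, so that $T[s..y]$ has length at most $\tau^{2}+\tau\le\Delta+\tau$ — are handled by emitting the remaining suffix as at most two empty-tail phrases of length $\le\Delta+\tau-1$, for which $e'-1\le e$ again follows from $e'\le y\le s+\Delta+\tau-1$. This is precisely where the extra $\tau$ of slack in the $(\Delta+\tau)$-approximate factorisation (rather than a $\Delta$-approximate one) is needed: a pure $\Delta$-approximate empty-tail fallback would reach only $s+\Delta-2$ and could miss an LZ phrase ending in $[s+\Delta-1,\,s+\Delta+\tau-2]$ that the canonical $q$ is too far to the right to capture.

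For the running time, every output factor other than possibly the last has length at least $\Delta+\tau-1\ge\Delta$: a tail-based factor ends at $q+\maxleftlce(q)-1\ge s+\Delta+\tau-2$, and a non-last empty-tail factor has length exactly $\Delta+\tau-1$. Hence there are $\Oh(m/\Delta)$ factors, each produced in $\Oh(1)$ time (one arithmetic computation of the multiple of $\tau^{2}$ just above $s$, one lookup of $\maxleftlce(q)$ and $\maxleft(q)$, and the comparisons above). The total is $\Oh(m/\Delta)\subseteq\Oh(m/\sqrt{\Delta})$, as required.
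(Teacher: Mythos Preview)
Your proposal is correct and follows essentially the same approach as the paper: both pick the next multiple of $\tau^{2}$ as the canonical sample position (the paper invokes property~1 of $S$ rather than property~2 for this, but either works), and both use property~2 to argue that the earlier occurrence of the LZ phrase places a position of $S$ to the left of $q$, so the precomputed longest-prefix information suffices. The only cosmetic difference is that your empty-tail fallback extends to $s+\Delta+\tau-2$ while the paper's goes only to $e'+\tau-1$, giving you $\Oh(m/\Delta)$ phrases versus the paper's $\Oh(m/\tau)=\Oh(m/\sqrt{\Delta})$; both are within the stated bound.
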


\begin{proof}
Let $e \in [x,y]$ and suppose we have already constructed the factorisation of $T[x..e-1]$ and are now trying to construct the next phrase.
Let $e'$ be the next multiple of $\tau^{2}$, we have that $e'-e < \tau^{2}\leq \Delta$ and $T[e'.. e'+\tau)$ is a dense fragment.
Thus, by property 1 we have $e'\in S$.

The first possibility is that the longest common prefix between $T[e'..y]$ and any suffix starting at an earlier position is shorter
than $\tau$. In this case, we can simply set the head of the new phrase to be $T[e..e'+\tau)$ and the tail to be empty.
Otherwise, we know the length $\ell$ of this longest prefix by the preprocessing described above.
We set the head of the new phrase to be $T[e..e')$ and the tail to be $T[e'..e''+\ell)$. This takes constant time per phrase, and each
phrase is of length at least $\tau$, giving the claimed overall time complexity. It remains to argue correctness of every step.

Let $T[e..s]$ be the longest LZ phrase starting at position $e$, to show that we obtain a valid $(\Delta+\tau)$-approximate 
phrase it suffices to show that $s \leq  e'+ \max(\tau,\ell)$.
Let the previous occurrence of $T[e..s)$ be at position $p<e$. If $s-e' < \tau$ then there is nothing to prove.
Otherwise, $T[e'..s)$ is a string of length at least $\tau$ that also occurs starting earlier at position $p+e'-e < e'$.
Thus, we will correctly determine that $\ell \geq \tau$, and find a previous occurrence of the string maximising
the value of $\ell$. In particular, we will have $\ell \geq s-e'$ as required.
\end{proof}

To achieve the bound of \cref{thm:upperbound2}, we now proceed as in \cref{sec:improved}, except that instead
of \cref{lem:compute3} we use \cref{lem:synchro-facto}. For every $T[x..y]$ with $m=|T[x..y]|$
this takes $\Oh(m\tilde{\sigma}\log m/\sqrt{\Delta})$ time plus the time used for computing the longest
prefix and suffix of each chunk (the latter also accounts for constructing the suffix tree $ST$ and other steps that have
been estimated as taking $\Oh(|T'|)$ in the above reasoning)
plus the time for inserting $T[i_{j}+\tau.. i_{j+1}+\tau)$ into $\Tphrase$ when $i_{j+1}\geq i_{j}+\tau$.

We observe that we can deactivate any block pair fully contained in $T[i\cdot \tau.. i\cdot \tau+\ell_{i})$ and $T[(i+1)\cdot \tau-r_{i} .. (i+1)\cdot \tau)$,
as we have already checked that these fragments are square-free.
Also, we can deactivate any block pair fully contained in 
the longest prefix of $T[i_{j}+\tau.. i_{j+1}+\tau)$ equal to $T[i_{j'}+\tau.. i_{j'+1}+\tau)$, for some $j'<j$,
because such fragment cannot contain the leftmost occurrence of a square. 

There are $\Oh(m / \sqrt{\Delta})$ chunks and long phrases. If a chunk or a long phrase
contributes $x = \Omega(\sqrt[4]{\Delta})$ to the total time, then we explicitly deactivate the block pairs in phase $t + 3$
that are entirely contained in the corresponding fragment. Block pairs in phase $t + 3$ are of length $\Oh(\sqrt[4]{\Delta})$,
and thus we deactivate $\Omega(x)$ positions.
Therefore, the time spent on such chunks and long phrases in all phases sums to $\Oh(n)$.
The remaining chunks and long phrases contribute $\Oh(\sqrt[4]{\Delta})$ to the total time,
and there are $\Oh(m / \sqrt{\Delta})$ of them, which adds up to $\Oh(m / \sqrt[4]{\Delta})$.
In every phase, this is $\Oh(n/\sqrt[4]{\Delta})$, so $\Oh(n)$ overall by \cref{lem:polylog}.

\section{Computing Runs}\label{sec:runs}

Now we adapt the algorithm such that it computes all runs. We start with the algorithm from \cref{sec:upper,sec:alg} without the final improvement from \cref{sec:finalimprov}. First, note that the key properties of the $\Delta$-approximate LZ factorisation, in particular \cref{lem:longhelper,lem:long}, also hold for the computation of runs. This is expressed by the lemmas below.

\begin{lemma}
\label{lem:longhelper:run}
Let $b_{1}b_{2}\ldots b_{z}$ be a $\Delta$-approximate LZ factorisation of a string $T$. For every run $\tuple{s,e,p}$ of length $e-s+1 \geq 8\Delta$, there is at least one phrase $b_i$ with $\absolute{\tail(b_i)} \geq \frac {e-s+1} 8 \geq \Delta$ such that $\tail(b_i)$ and the right-hand side $T[s + \ceil{\frac{e - s + 1} 2}..e]$ of the run intersect.
\end{lemma}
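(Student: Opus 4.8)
The plan is to follow the proof of \cref{lem:longhelper} almost verbatim, replacing the self-overlap of the square by the period of the run. Write $n' = e-s+1 \ge 8\Delta$ for the length of the run, so that its smallest period satisfies $p \le \lfloor n'/2\rfloor$, and set $R := T[s + \lceil n'/2\rceil .. e]$, the right half named in the statement, which has length $\lfloor n'/2\rfloor$. Assume for contradiction that every $\tail(b_i)$ that intersects $R$ has length strictly less than $\frac{n'}{8}$. Since every head has length less than $\Delta \le \frac{n'}{8}$, every phrase that is entirely contained in $R$ then has length roughly at most $\frac{n'}{4}$; a short case check shows that the same bound applies to the at most two phrases that straddle an endpoint of $R$ (in each such phrase the tail must in fact meet $R$, for otherwise the head alone would have to cover more than $\Delta$ positions of $R$). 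Because $\lvert R\rvert = \lfloor n'/2\rfloor$ is more than twice this maximum phrase length, $R$ must intersect at least three consecutive phrases $b_{a},\dots,b_{a'}$, and then $b_{a+1}$ is properly contained in $R$, say $b_{a+1}=T[x..y]$ with $s+\lceil n'/2\rceil < x \le y < e$.

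I would then derive a contradiction from the definition of the $\Delta$-approximate LZ factorisation, exactly as in \cref{lem:longhelper}. From $x > s+\lceil n'/2\rceil \ge s+p$ and $y+1 \le e$, the fact that $p$ is a period of $T[s..e]$ yields $T[x..y+1] = T[x-p..y+1-p]$; this is an occurrence of $T[x..y+1]$ starting at $x-p \ge s \ge 1$, i.e. strictly before $x$, and ending at $y+1-p \le y+1$. Hence $T[x..y+1]$ occurs at least twice in $T[1..y+1]$, so the unique standard LZ phrase $T[x..e']$ starting at position $x$ satisfies $e'-1 \ge y+1 > y$. This contradicts the requirement $e'-1 \le y$ that the $\Delta$-approximate phrase $b_{a+1}=T[x..y]$ must satisfy. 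The contradiction shows that some $\tail(b_i)$ intersecting $R$ has length at least $\frac{n'}{8} = \frac{e-s+1}{8} \ge \Delta$, which is precisely the phrase claimed by the lemma.

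The main obstacle is the counting step that turns ``all relevant tails are short'' into ``$R$ meets at least three phrases''. Here the constants in the statement — the factor $\frac18$ in the conclusion and the ceiling in the definition of the right half $T[s+\lceil (e-s+1)/2\rceil..e]$ — are tuned so that, under the contradiction hypothesis, the length of any phrase meeting $R$ stays strictly below $\frac12\lvert R\rvert$ (using $\Delta \le \frac{n'}{8}$ together with the integrality of phrase lengths). Making this precise requires distinguishing the (at most two) boundary phrases of $R$ from the interior ones and checking, in each of the one-phrase and two-phrase cases, that $R$ cannot then be covered; this is the only place where more than a routine translation of \cref{lem:longhelper} is needed. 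Once three phrases are in hand, the rest is the period-based argument above, which is if anything shorter and cleaner than the square's self-overlap argument.
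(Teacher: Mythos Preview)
Your approach is the same as the paper's: assume all tails meeting the right half $R$ are short, deduce that at least three phrases meet $R$ so one is properly contained in it, and then contradict the $\Delta$-approximate LZ condition using the run's period. Your period-based contradiction at the end is correct and essentially matches the paper's (the paper uses the stronger $T[x..e]=T[x-p..e-p]$, but your $T[x..y+1]=T[x-p..y+1-p]$ already suffices).

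The one wobble is your boundary-phrase discussion. The parenthetical claim---that in a phrase straddling an endpoint of $R$ the tail must meet $R$, ``for otherwise the head alone would have to cover more than $\Delta$ positions of $R$''---is false at the right endpoint: a phrase can begin with its head covering just a few ($<\Delta$) positions near $e$ while an arbitrarily long tail lies entirely to the right of $R$. The fix is simpler than your case split, and is what the paper does (implicitly): bound the \emph{intersection} of each phrase with $R$ rather than the phrase length. If the tail meets $R$, the whole phrase has length $<\tfrac{n'}{8}+\Delta\le\tfrac{n'}{4}$; if not, only the head can meet $R$, so the intersection is $<\Delta\le\tfrac{n'}{8}$. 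Either way each phrase covers at most $\tfrac{n'}{4}-1$ positions of $R$, and since $\lvert R\rvert=\lfloor n'/2\rfloor > 2(\tfrac{n'}{4}-1)$ for $n'\ge 8$, at least three phrases are needed---no separate one-phrase/two-phrase check required.
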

\begin{proof}
Let $\ell = \frac {e-s+1} 2$ and note that $\frac \ell 4 \geq \Delta$ and $e = s+2\ell - 1$. Assume that all tails that intersect $T[s + \ceil{\ell}..e]$ are of length less than $\frac \ell 4$, then the respective phrases of these tails are of length at most $\frac \ell 4 + \Delta - 1 \leq \frac \ell 2 - 1$ (because each head is of length less than $\Delta$). 
This means that $T[s + \ceil{\ell}..e]$ (of length $\floor{\ell}$) intersects at least $\ceil{\floor{\ell} / (\frac \ell 2 - 1)} \geq 3$ phrases (the inequality holds for $\ell \geq 4$, which is implied by $\Delta \geq 1$).
Thus there is some phrase $b_i = T[x..y]$ properly contained in $T[s+\ceil{\ell}..e]$, formally $s+\ceil{\ell} < x \leq y < e$.
However, this contradicts the definition of the $\Delta$-approximate LZ factorisation because $T[x..e+1]$ is the prefix of a standard LZ phrase (due to $T[x..e] = T[x - p..e-p]$). 
The contradiction implies that $T[s+\ceil{\ell}..e]$ intersects a tail of length at least $\frac \ell 4$.
\end{proof}

\noindent Before we show how to algorithmically apply \cref{lem:longhelper:run}, we need to explain how \cref{lem:conquer} extends
to computing runs, and then how this implies that the approach of Main and Lorentz~\cite{Main1984} easily extends to computing all runs.
We do not claim this to be a new result, but the original paper only talks about finding a representation of all squares, and
we need to find runs, and hence include a description for completeness.

\begin{lemma}
\label{lem:conquerruns}
Given two strings $x$ and $y$ over a general alphabet, we can compute all runs in $xy$ that include either the last character
of $x$ or the first character of $y$ using $\Oh(\absolute{x}+\absolute{y})$ time and comparisons.
\end{lemma}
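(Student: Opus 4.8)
The plan is to run the classical Main--Lorentz merge step (\cref{lem:conquer}) while keeping, at the seam between $x$ and $y$, enough information to reconstruct every run of $T=xy$ that touches it. Write $n=\absolute{x}$ and $N=\absolute{x}+\absolute{y}$. I would first observe that a run $\tuple{s,e,p}$ covering position $n$ or position $n+1$ satisfies the period relation $T[i]=T[i+p]$ for some index $i$ in the four-element set $\{n-p,\,n,\,n+1-p,\,n+1\}$: the set $[s,e-p]$ of left endpoints of period-$p$ matches inside the run has size $e-s+1-p\ge p+1$ (as $e-s+1\ge 2p$), and a short case analysis --- according to whether the run ends at $n$, starts at $n+1$, or covers both $n$ and $n+1$ --- shows it must meet this set. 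Conversely, from any such relation one recovers the unique maximal fragment $[s,e]$ of period $p$ through that position, by extending the matched pair $(i,i+p)$ as far as possible in both directions; the two extension lengths are one forward $\lce$ value and its right-to-left analogue between $i$ and $i+p$. So for each $p\in\{1,\dots,\floor{N/2}\}$ and each in-range candidate index I would, in $\Oh(1)$ time, build the candidate fragment, keep it only if its length is at least $2p$ and it covers $n$ or $n+1$, and record the pair $([s,e],p)$. Because the maximal period-$p$ fragments are pairwise disjoint for a fixed $p$, at most two of them can meet $\{n,n+1\}$, so only $\Oh(N)$ pairs are produced overall.

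All extension lengths needed are $\lce$ values (or their reverses) with one endpoint in $\{n-p,\ n,\ n+1-p,\ n+1,\ n+p,\ n+1+p\}$ and $p$ ranging over all values. Each such family is either the $Z$-array of a fixed suffix or prefix of $T$ --- for example $\{\lce(n,n+p)\}_p$ is the $Z$-array of $T[n..N]$ and $\{\lce(n+1,n+1+p)\}_p$ is the $Z$-array of $y$ --- or the matching-statistics array of $T$ against such a suffix or prefix used as a pattern, e.g. $\{\lce(n-p,n)\}_p$; the backward families are obtained identically after reversing the relevant strings. All of these are computed by the $Z$-algorithm / KMP failure-function construction, which uses only equality comparisons, in $\Oh(N)$ time and $\Oh(N)$ comparisons in total. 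The per-period scan above performs no further comparisons --- only table look-ups and index arithmetic --- so the whole procedure stays within $\Oh(\absolute{x}+\absolute{y})$ time and comparisons.

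Finally I would radix-sort the recorded pairs by their endpoints $(s,e)$ and output each distinct fragment together with the smallest period recorded for it. Correctness relies on the periodicity lemma~\cite{Fine1965}: if $T[s..e]$ has length at least $2p$ and a period $q\le p$ besides $p$, then $\gcd(p,q)$ is a period too, so the smallest period $p^\star$ of $T[s..e]$ divides every recorded period of $[s,e]$. In particular $p$ is a multiple of $p^\star$, hence any fragment that is maximal for period $p$ is also maximal for period $p^\star$ (otherwise the maximal period-$p^\star$ fragment, which then also has period $p$, would strictly contain it) --- so it is a genuine run $\tuple{s,e,p^\star}$. Conversely, every run $\tuple{s,e,p^\star}$ touching the seam is recovered while processing $p=p^\star$, because one of the candidate indices lies in $[s,e-p^\star]$ and extending from it yields exactly $[s,e]$. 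Thus the distinct recorded fragments are precisely the runs of $T$ touching the seam, each returned with its smallest period, and their number is $\Oh(\absolute{x}+\absolute{y})$.

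I expect the only real difficulty to be in the bookkeeping, not in any single idea: pinning down the exact candidate-index set and the length/boundary filters so that (i) runs that merely end at $n$ or start at $n+1$ without crossing are not missed, (ii) no fragment is emitted with a non-primitive period, and (iii) each extension length is read from the $Z$-/matching-statistics table of the correctly oriented string. None of this is new --- it is the run-aware refinement of Main and Lorentz --- which is why we include it only for completeness.
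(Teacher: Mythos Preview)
Your proposal is correct and follows essentially the same route as the paper: iterate over all candidate periods $p$, use precomputed $Z$-arrays (prefix tables) on suitable concatenations to obtain the forward and backward extensions at the seam in $\Oh(1)$ time per $p$, and output the resulting maximal $p$-periodic fragments of length at least $2p$. The paper's version is marginally leaner---it splits into two symmetric cases according to which side of the seam the run's midpoint lies, yielding one anchor per case rather than your four---while you are more explicit than the paper about extracting the smallest period via a final radix sort and the periodicity lemma.
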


\begin{proof}
Consider a run $\tuple{s,e,p}$ in $t=xy$ that includes either the last character of $x$ or the first character of $y$,
meaning that $s\leq \absolute{x}+1$ and $e\geq \absolute{x}$.
Let $\ell = \lfloor\frac {e-s+1} 2\rfloor \geq p$. We separately compute all runs with $s+\ell \leq \absolute{x}+1$ and $s+\ell > \absolute{x}+1$. Below we describe
the former, and the latter is symmetric.

Due to $s+\ell \leq \absolute{x}+1$, the length-$p$ substring $x[\absolute{x}-p+1..\absolute{x}]$ is fully
within the run. This suggests the following strategy to generate all runs with $s+\ell \leq \absolute{x}+1$. We iterate over the possible values of $p=1,2,\dots, \absolute{x}$. For a
given $p$, we calculate the length of the longest common prefix of $x[\absolute{x}-p+1..\absolute{x}]y$ and $y$, denoted $\textsf{pref}$,
and the length of the longest common suffix of $x[1..\absolute{x}-p]$ and $x$, denoted $\textsf{suf}$. 
It is easy to see that $t[\absolute{x}-p+1-\textsf{suf}..\absolute{x} + \textsf{pref}]$ is a lengthwise maximal $p$-periodic substring, and its length is $\ell' = p + \textsf{suf} + \textsf{pref}$.
If $\textsf{pref}+\textsf{suf} \geq p$ and $s+\floor{{\ell'}/2} \leq \absolute{x}+1$, then we report the substring as a run. (The latter condition ensures that each run gets reported by exactly one of the two symmetric cases.)

We use a prefix table to compute the longest common prefixes. For a given string, this table contains at position $i$ the length of the longest substring starting at position $i$ that is also a prefix of the string.
For computing the values $\textsf{pref}$, we use the prefix table of $y \$ x y$ (where $\$$ is a new character that does not match any character in $x$ nor $y$).
Similarly, for computing the values $\textsf{suf}$, we use the prefix table of the reversal of a new string $x \$ x$.
The tables can be computed in $\Oh(\absolute{x}+ \absolute{y})$ time and comparisons (see, e.g., computation of table \textit{lppattern} in \cite{Main1984}).
Then, each value of $p$ can be checked in constant time.
\end{proof}

\begin{lemma}
\label{lem:divideruns}
Computing all runs in a length-$n$ string over a general unordered alphabet can be implemented in $\Oh(n\log n)$ time and comparisons.
\end{lemma}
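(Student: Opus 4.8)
The plan is to run exactly the divide-and-conquer of Main and Lorentz that underlies \cref{lem:classical}, but with \cref{lem:conquerruns} playing the role of \cref{lem:conquer}. First I would reduce to the convenient situation in which the string is flanked by two fresh, pairwise distinct sentinels: replace $T$ by $T' = \#_1\, T\, \#_2$. Since $\#_1$ and $\#_2$ occur nowhere else, no run of $T'$ can contain either of them, and for an internal run the one-step left/right extension test in $T'$ involves only characters of $T$; hence the runs of $T$ are exactly the runs of $T'$, and moreover every run of $T'$ lies strictly inside $[1,|T'|]$. So it suffices to compute, for $T'$, all runs that do not touch the ends, and I will phrase the recursion in those terms.

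For a fragment $T'[i..j]$ I would compute all runs $\tuple{s,e,p}$ of $T'$ with $i < s \le e < j$ as follows. If $j-i$ is at most a small constant there is nothing to do (a run has length at least $2$). Otherwise let $mid = \lfloor (i+j)/2 \rfloor$, recursively compute the analogous sets for $T'[i..mid]$ and for $T'[mid+1..j]$, and additionally apply \cref{lem:conquerruns} to $x = T'[i..mid]$ and $y = T'[mid+1..j]$ to obtain all runs of the concatenation $xy = T'[i..j]$ that contain $T'[mid]$ or $T'[mid+1]$; from this last list, keep precisely those runs with $i < s$ and $e < j$. The output of the fragment is the union of the three parts. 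The correctness claim is a short case analysis: any run of $T'$ strictly inside $[i..j]$ that avoids positions $mid,mid+1$ lies strictly inside $[i..mid]$ or strictly inside $[mid+1..j]$ and is produced by the corresponding recursive call; any such run that contains $mid$ or $mid+1$ is in particular a run of the string $T'[i..j]$ (maximality in $T'$ trivially implies maximality in the fragment), so it is returned by \cref{lem:conquerruns} and then kept; and, conversely, a run of the string $T'[i..j]$ that contains $mid$ or $mid+1$ and satisfies $i < s$, $e < j$ is itself a run of $T'$, because the characters governing its one-step extensions sit at positions $s-1,\, s-1+p,\, e+1,\, e+1-p$, all of which lie in $(i,j)$ once one uses $p \le (e-s+1)/2$. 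The three parts are pairwise disjoint (by the ranges of $s$ and $e$), so no run is double-reported, and tracing a fixed run down the recursion shows it is emitted at exactly one node; hence the total output size is $\Oh(n)$ (indeed at most the number of runs, which is $\Oh(n)$).

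For the running time: on a fragment of length $m$ the work outside the two recursive calls is one invocation of \cref{lem:conquerruns} on the two halves, i.e.\ $\Oh(m)$ time and comparisons, plus $\Oh(m)$ time to filter and re-index the returned list. This gives the recurrence $T(m) = 2T(m/2) + \Oh(m)$, so $T(n) = \Oh(n \log n)$ time, and since \cref{lem:conquerruns} only ever performs equality comparisons, the comparison count obeys the same recurrence and is $\Oh(n \log n)$.

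The step I expect to be the main obstacle is precisely the boundary bookkeeping, which is the one genuine difference from the square-freeness argument: \emph{runs of a fragment} and \emph{runs of $T'$ contained in that fragment} are not the same notion, since a repetition that is maximal inside a fragment may still be extendable in $T'$. The device above handles this by padding with sentinels so that the top fragment has no boundary runs at all, and by letting the recursion return only strictly-interior runs, so that a repetition abutting a fragment boundary is never mistaken for a run and is instead rediscovered --- now with its correct extent, by \cref{lem:conquerruns} --- one level higher. I would also need to check a handful of degenerate overlaps (when $mid$ or $mid+1$ equals $s-1$, $s$, $e$ or $e+1$) and to confirm that \cref{lem:conquerruns}, stated for arbitrary $x,y$, indeed works when $x,y$ are fragments of $T'$ and uses only comparisons among the characters actually present.
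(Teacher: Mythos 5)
Your proof is correct and takes essentially the same approach as the paper: a divide-and-conquer recursion in which the merge step calls Lemma~\ref{lem:conquerruns} and discards boundary-touching repetitions, giving the recurrence $T(m)=2T(m/2)+\Oh(m)$ and hence $\Oh(n\log n)$ time and comparisons. The only cosmetic difference is the bookkeeping of the recursive invariant: where you pad with sentinels and have each call return runs of $T'$ that are strictly interior to the current fragment (filtering the output of Lemma~\ref{lem:conquerruns}), the paper instead lets each call return all runs of the subfragment viewed as a standalone string, and then filters the two recursive outputs at the merge step by discarding any reported run that touches the last character of the left half or the first character of the right half.
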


\begin{proof}
Let the input string be $T[1..n]$. We apply divide-and-conquer. Let $x=T[1..\lfloor n/2\rfloor]$ and $y=T[\lfloor n/2\rfloor + 1..n]$.
First, we recursively compute all runs in $x$ and $y$. Of the reported runs, we filter out all the ones that contain either the last character of $x$ or the first character of $y$, which takes $\Oh(\absolute{x} + \absolute{y})$ time.
In this way, if some reported run is a run with respect to $x$ (or $y$), but not with respect to $xy$, then it will be filtered out.
We have generated all runs except for the ones that contain the last character of $x$ or the first character of $y$ (or both). Thus we simply invoke
\cref{lem:conquerruns} on $xy$, which will output exactly the missing runs in $\Oh(\absolute{x} + \absolute{y})$ time and comparisons.
There are $\Oh(\log n)$ levels of recursion, and each level takes $\Oh(n)$ time and comparisons in total.
\end{proof}

\begin{lemma}
\label{lem:long:run}
Let $T = b_{1}b_{2}\ldots b_{z}$ be a $\Delta$-approximate LZ factorisation, and $\chi = \sum_{\absolute{\tail\left(b_{i}\right)} \geq \Delta}\absolute{\tail\left(b_{i}\right)}$.
We can compute in $\Oh\left(\chi + z\right)$ time and $\Oh\left(\chi\right)$ comparisons
a multiset $R$ of size $\Oh(\chi)$ of runs with the property that a run $T[s..e]$ is possibly not in $R$ only if $e-s+1 < 8\Delta$ or
there is some tail $\tail(b_i) = T[a_2..a_3]$ with $a_2 < s$ and $e < a_3$.
\end{lemma}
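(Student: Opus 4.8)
The plan is to adapt the proof of \cref{lem:long} in the obvious way: replace \cref{lem:conquer} by \cref{lem:conquerruns} and \cref{lem:longhelper} by \cref{lem:longhelper:run}. Concretely, I would scan all $z$ phrases (costing $\Oh(z)$ time) and, for every phrase $b_i = T[a_1..a_3]$ whose tail $\tail(b_i) = T[a_2..a_3]$ has length $k = \absolute{\tail(b_i)} \geq \Delta$, invoke \cref{lem:conquerruns} on the two pairs $x_1 = T[a_2 - 8k..a_2 - 1]$, $y_1 = T[a_2..a_3 + 4k - 1]$ and $x_2 = T[a_2 - 8k..a_3 - 1]$, $y_2 = T[a_3..a_3 + 4k - 1]$, each trimmed to $T[1..n]$. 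Both calls take time and comparisons linear in the involved fragment lengths, which are $\Oh(k)$, and each reports $\Oh(k)$ runs (the number of runs in a length-$\ell$ string is $\Oh(\ell)$); summing over the qualifying phrases yields $\Oh(\chi)$ time, $\Oh(\chi)$ comparisons, and $\Oh(\chi)$ reported triples in total.

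A triple reported by \cref{lem:conquerruns} is a run of the queried \emph{fragment}, not necessarily of $T$, so I would add a filtering pass: for each reported $\tuple{s',e',p'}$, spend $\Oh(1)$ additional comparisons to test whether $T[s'..e']$ is left-maximal ($s' = 1$ or $T[s'-1]\neq T[s'-1+p']$) and right-maximal ($e' = n$ or $T[e'+1]\neq T[e'+1-p']$) in $T$, and keep the triple in the output multiset $R$ only if both hold. This keeps $\absolute{R} = \Oh(\chi)$ and ensures every element of $R$ is a genuine run of $T$. It never discards a genuine run of $T$, since such a run is by definition left- and right-maximal in $T$; and since the same run can be reported by several phrases or by both pairs of one phrase, $R$ is naturally a multiset.

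It remains to show that every run $\tuple{s,e,p}$ with $L := e - s + 1 \geq 8\Delta$ that is not strictly contained in any tail lands in $R$. By \cref{lem:longhelper:run} there is a phrase $b_i$ with $k = \absolute{\tail(b_i)} \geq L/8 \geq \Delta$ whose tail $T[a_2..a_3]$ intersects the right half $T[s+\ceil{L/2}..e]$ of the run, so $a_2 \leq e$ and $a_3 \geq s+\ceil{L/2}$. As in \cref{lem:long}, the start $a_2 - 8k$ of $x_1$ and $x_2$ is at most $e - L = s - 1$ and the end $a_3 + 4k - 1$ of $y_1$ and $y_2$ is at least $s + \ceil{L/2} + L/2 - 1 \geq e$, so the positions $\{s,\dots,e\}$ lie inside both $x_1y_1$ and $x_2y_2$, which share the span $T[a_2-8k..a_3+4k-1]$. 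If $a_2 \geq s$, then $s\leq a_2\leq e$, so the first character of $y_1$ lies in the run; if $a_3 \leq e+1$, then position $a_3-1\in\{s,\dots,e\}$ is the last character of $x_2$ (and position $a_3$, if $\leq e$, the first character of $y_2$). In either situation $T[s..e]$ with period $p$ is a run of the corresponding fragment $x_1y_1$ (resp.\ $x_2y_2$) that includes the specified boundary character — its left- and right-maximality hold inside the fragment because either the relevant end coincides with the fragment's end, or it is inherited from maximality in $T$ — so the relevant call to \cref{lem:conquerruns} reports it, and the filtering pass keeps it because it is a genuine run of $T$. The only uncovered case, $a_2 < s$ and $a_3 > e+1$, forces $a_2 < s \leq e < a_3$, i.e.\ the run is strictly contained in the tail $T[a_2..a_3]$, which is excluded by hypothesis. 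Hence $\tuple{s,e,p}\in R$.

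The main obstacle is the interaction between \cref{lem:conquerruns}, which returns runs of a fragment rather than of $T$, and the requirement that $R$ consist exactly of genuine runs: one must verify that the $\Oh(1)$-comparison filter neither keeps a spurious triple nor rejects a required run, and that the two split points $a_2$ and $a_3$ together intercept every required run that is not hidden inside a tail. The remaining bookkeeping (the linear-in-$\chi$ totals and the $\Oh(z)$ scan) is routine.
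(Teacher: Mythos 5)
Your proposal is correct and follows essentially the same route as the paper: apply \cref{lem:conquerruns} to the same two $(x,y)$ pairs around each long tail, then filter, then show via \cref{lem:longhelper:run} that any sufficiently long run not properly contained in a tail is captured by one of the two splitting points $a_2$ or $a_3$. The one (minor) difference is in the filtering step: the paper ends $y_1,y_2$ one position later, at $a_3+4k$ rather than $a_3+4k-1$, which guarantees the target run does not touch the fragment boundary and lets it discard spurious fragment-runs by a purely positional test (drop any run containing $a_2-8k$ or $a_3+4k$, unless these coincide with $1$ or $n$), i.e.\ with zero extra symbol comparisons; your version keeps the square-lemma endpoints and instead spends $\Oh(1)$ extra comparisons per reported run to verify maximality in $T$, which still sums to $\Oh(\chi)$ and is equally valid.
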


\begin{proof}
\newcommand{\targettime}{\Sigma_{\textsf{target}}}
Let $n = \absolute{T}$. We consider each phrase $b_i = T[a_1..a_3]$ with $\head(b_i) = T[a_1..a_2 - 1]$ and $\tail(b_i) = T[a_2..a_3]$ separately. Let $k = \absolute{\tail(b_i)}$. 
If $k \geq \Delta$, we apply \cref{lem:conquerruns} to $x_1 = T[a_2 - 8k..a_2 - 1]$ and $y_1 = T[a_2..a_3 + 4k]$, as well as $x_2 = T[a_2 - 8k..a_3 - 1]$ and $y_2 = T[a_3..a_3 + 4k]$ trimmed to $T[1..n]$. This takes $\Oh(\absolute{\tail(b_i)})$ time and comparisons and reports $\Oh(\absolute{\tail(b_i)})$ runs with respect to $x_1y_1=x_2y_2=T[a_2 - 8k..a_3 + 4k]$ (trimmed to $T[1..n]$). 
Of these runs, we filter out the ones that contain any of the positions $a_2 - 8k$ (only if $a_2 - 8k > 1$) and $a_3 + 4k$ (only if $a_3 + 4k < n$), which takes $\Oh(\absolute{\tail(b_i)})$ time.
This way, each reported run is not only a run with respect to $x_1y_1$, but also a run with respect to~$T$.
In total, we report $\Oh(\chi)$ runs (including possible duplicates) and spend $\Oh\left(\chi\right)$ time and comparisons when applying \cref{lem:conquerruns}. Additional $\Oh(z)$ time is needed to check if $\absolute{\tail(b_i)} \geq \Delta$ for each phrase.

Now we show that the described strategy computes all runs of length at least $8\Delta$, except for the ones that are properly contained in a tail. Let $\tuple{s,e,p}$ be a run of length $2\ell$, where $\ell \geq 4\Delta$ is a multiple of $\frac 1 2$.
Due to \cref{lem:longhelper:run}, the right-hand side $T[s+\ceil{\ell}..e]$ of this run intersects some tail $\tail(b_i) = T[a_2..a_3]$ of length $k = \absolute{\tail(b_i)} \geq \frac \ell 4 \geq \Delta$.
Due to the intersection, we have $a_2 \leq e$ and $a_3 \geq s + \ceil{\ell}$.
Thus, when processing $b_i$ and applying \cref{lem:conquerruns}, the starting position of $x_1$ and $x_2$ satisfies $a_2 - 8k \leq e - 8\frac \ell 4 < s$, while the end position of $y_1$ and $y_2$ satisfies $a_3 + 4k \geq s + \ceil{\ell} + 4\frac \ell 4 > e$. 
Therefore, the run is contained in the fragment $T[a_2 - 8k..a_3 + 4k]$ (trimmed to $T[1..n]$) corresponding to $x_1y_1$ and $x_2y_2$, and the run does not contain positions $a_2 - 8k$ and $a_3 + 4k$.
If $s \leq a_2 \leq e$, we find the run when applying \cref{lem:conquerruns} to $x_1$ and $y_1$.
If $s \leq a_3 \leq e$, we find the run when applying \cref{lem:conquerruns} to $x_2$ and $y_2$.
Otherwise, $T[s..e]$ is entirely contained in $T[a_2 + 1..a_3 - 1]$ and we do not have to report the run.
\end{proof}

Now we describe how to compute all runs using $\Oh(n \log \sigma)$ comparisons and $\Oh(n \log \sigma + n \log^* n)$ time.
We again use the sequence $\sigma_t=2^{2^{\lceil\log\log n\rceil-t}}$, for $t=0,1,\ldots,\lceil\log\log n\rceil$.
We observe that $\sigma_{t-1}=(\sigma_{t})^{2}$, and proceed in phases corresponding to the values of $t$.
In the $t^{\text{th}}$ phase we aim to compute runs of length at least $\sigma_t$ and less than $(\sigma_t)^2$.
We stress that this condition depends on the length of the run and not on its period.
We partition
the whole $T[1..n]$ into blocks of length $(\sigma_{t})^{2}$, and denote the $k^{\text{th}}$ block by $B_{k}$.
A run of length less than $(\sigma_{t})^{2}$ is fully contained within some two consecutive blocks $B_{i}B_{i+1}$, and there is always a pair of consecutive blocks such that the run contains neither the first nor the last position of the pair (unless the first position is $T[1]$ or the last position is $T[n]$ respectively).
Hence we consider each pair $B_{1}B_{2}$, $B_{2}B_{3}$, and so on.
We first apply \cref{lem:compute3} with $\Delta=\sigma_{t}/8$ and $\tilde{\sigma}=(\sigma_{t})^{1/4}/\log(\sigma_{t})$ to find
an $(\sigma_{t}/8)$-approximate LZ factorisation of the corresponding fragment of $T[1..n]$, and then
use \cref{lem:long:run} to compute all runs of length at least $\sigma_t$, apart from possibly the ones that are properly contained in a tail. Of the computed runs, we discard the ones that contain the first or last position of the block pair (unless the first position is $T[1]$ or the last position is $T[n]$ respectively).
This way, each reported run is a run not only with respect to the block pair, but with respect to the entire $T[1..n]$. If we do not report some run of length at least $\sigma_t$ and less than $(\sigma_t)^2$ in this way, then it is properly contained in one of the tails.

We cannot always afford to apply \cref{lem:long:run,lem:compute3} to all block pairs. Thus, we have to deactivate some of the blocks. During the current phase $t$, for each tail $T[s..e]$ of length at least $\Delta$, we deactivate all block pairs in phase $t + 3$ that are contained in $T[s + 1..e - 1]$. By similar logic as in \cref{sec:upper}, if a tail contributes $e - s + 1$ comparisons and time to the application of \cref{lem:long:run}, then it permanently deactivates $\Omega(e - s + 1)$ positions of the string, and thus the total time and comparisons needed for all invocations of \cref{lem:long:run,lem:compute3} are bounded by $\Oh(n)$ (apart from the additional $\Oh(n \log^* n)$ total time for \cref{lem:compute3}).
Whenever we apply \cref{lem:compute3}, we add all the tails of length at least $\Delta$ to a list $\mathcal L$, where each tail is annotated with the position of its previous occurrence.
After the algorithm terminates, $\mathcal L$ contains all sufficiently long tails from all phases. We have already shown that the total time needed for \cref{lem:long:run} is bounded by $\Oh(n)$, and thus the total length of the tails in $\mathcal L$ is at most $\Oh(n)$.

If any of the calls to \cref{lem:compute3} in the current phase detects that $\sigma>\tilde{\sigma}$, or if $\tilde{\sigma} < 256$, we immediately switch to
applying \cref{lem:divideruns} on every pair of blocks $B_{i}B_{i+1}$ of the current phase, which takes $\Oh(n \log \sigma)$ time (because the length of a block pair is polynomial in $\tilde{\sigma}$). Again, after applying \cref{lem:compute3} to $B_{i}B_{i+1}$, we discard all runs that contain the first or last position of $B_{i}B_{i+1}$ (unless the first position is $T[1]$ or the last position is $T[n]$, respectively).
After this procedure terminates, we have computed all runs, except for possibly some of the runs that were properly contained in a tail in list $\mathcal L$.
We may have reported some duplicate runs, which we filter out as follows.
The number of runs reported so far is $r=\Oh(n\log\sigma)$\footnote{a more careful analysis would reveal that it is $\Oh(n)$, but this is not necessary for the proof}.
We sort them in additional $\Oh(n+r)=\Oh(n\log\sigma)$ time, e.g., by using radix sort, and remove duplicates.
The running time so far is $\Oh(n\log \sigma)$.

\subsection{Copying Runs From Previous Occurrences}
\label{sec:copyruns}

Lastly, we have to compute the runs that were properly contained in a tail in $\mathcal L$. 
Consider such a run $\tuple{r_s, r_e, p}$, and let $T[s..e]$ be a tail in $\mathcal L$ with $s < r_s$ and $r_e < e$. If multiple tails match this criterion, let $T[s..e]$ be the one that maximizes $e$.
In $\mathcal L$, we annotated $T[s..e]$ with its previous occurrence $T[s-d..e-d]$.
Note that $\tuple{r_s - d, r_e - d, p}$ is also a run.
Thus, if we compute the runs in an appropriate order, we can simply copy the missing runs from their respective previous occurrences. For this sake, we annotate each position $i \in [1, n]$ with:

\begin{itemize}
\item a list of all the runs $\tuple{i, e, p}$ that we already computed, arranged in increasing order of end position $e$. We already sorted the runs for duplicate elimination, and can annotate all position in $\Oh(n)$ time.
\item a pair $(e^*, d^*)$, where $e^* = d^* = 0$ if there is no tail $T[s..e]$ such that $s < i < e$. Otherwise, among all tails $T[s..e]$ with $s < i < e$, we choose the one that maximizes $e$. Let $T[s-d..e-d]$ be its previous occurrence, then we use $e^* = e$ and $d^* = d$. As explained earlier, the total length of all tails in $\mathcal L$ is $\Oh(n)$, and thus we can simply scan each tail and update the annotation pair of each contained position whenever necessary.
\end{itemize}

Observe that, if a position is annotated with $(0,0)$, then none of the runs starting at position $i$ is fully contained in a tail, and thus we have already annotated position $i$ with the complete list of the runs starting at~$i$. 
Now we process the positions $i \in [1, n]$ one at a time and in increasing order.
We inductively assume that, at the time at which we process $i$, we have already annotated each $j < i$ with the complete list of runs starting at $j$. Hence our goal is to complete the list of $i$ such that it contains all runs starting at $i$. If $i$ is annotated with $(0,0)$, then the list is already complete.
Otherwise, $i$ is annotated with $(e, d)$, every missing run $\tuple{i, e_r, p}$ satisfies $e_r < e$, and the annotation list of $i - d$ already contains the run $\tuple{i - d, e_r - d, p}$ (due to $T[i - 1..e_r + 1] = T[i - d - 1..e_r - d + 1]$ and the inductive assumption).
For each run $\tuple{i - d, r_e - d, p}$ in the annotation list of position $i - d$, we insert the run $\tuple{i, e_r, p}$ into the annotation list of $i$. We perform this step in a merging fashion, starting with the shortest runs of both lists and zipping them together. As soon as we are about to insert a run $\tuple{i, e_r, p}$ with $e_r \geq e$, we do not insert it and abort. Thus, the time needed for processing $i$ is linear in the number of runs starting at position~$i$. By the runs theorem~\cite{Bannai2017}, the total number of runs is less than $n$, making the total time for this step $\Oh(n)$.

Apart from the new steps in \cref{sec:copyruns}, the complexity analysis works exactly like in \cref{sec:upper}. Hence we have shown:

\begin{theorem}
\label{thm:runsstar}
Computing all runs in a length-$n$ string that contains $\sigma$ distinct symbols from a general unordered alphabet can be implemented in $\Oh(n\log \sigma)$ comparisons and $\Oh(n\log \sigma + n \log^* n)$ time.
\end{theorem}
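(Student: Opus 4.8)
The plan is to reuse, almost verbatim, the phase-based square-detection algorithm of \cref{sec:improved,sec:alg}, replacing every square-detection subroutine by its run-reporting analogue, and then to append a single post-processing pass that recovers exactly those runs the algorithm is permitted to miss. I would proceed in phases $t = 0,1,\dots,\ceil{\log\log n}$ with the threshold sequence $\sigma_t = 2^{2^{\ceil{\log\log n}-t}}$, so that $\sigma_{t-1} = (\sigma_t)^2$ with no rounding. In phase $t$ the goal is to report every run of length in $[\sigma_t,(\sigma_t)^2)$; since such a run fits inside two consecutive blocks of length $(\sigma_t)^2$, and (apart from the boundary cases involving $T[1]$ and $T[n]$) there is always a block pair $B_iB_{i+1}$ containing the run but neither of the pair's two extreme positions, it suffices to process every adjacent active pair. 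For each such pair I would call \cref{lem:compute3} with $\Delta=\sigma_t/8$ and $\tilde\sigma=(\sigma_t)^{1/4}/\log\sigma_t$ to obtain a $(\sigma_t/8)$-approximate LZ factorisation (or to learn that $\sigma>\tilde\sigma$), and then \cref{lem:long:run} to report all runs of length $\geq\sigma_t$ inside the pair that are not properly contained in a tail, discarding those that touch an extreme position of the pair so that everything reported is a run of the whole $T$. By \cref{lem:longhelper:run,lem:long:run}, the only runs of length in $[\sigma_t,(\sigma_t)^2)$ not reported this way are ones properly contained in a tail.

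As in the square case we cannot afford \cref{lem:compute3,lem:long:run} on every pair outright, so I would deactivate blocks: when a tail $T[s..e]$ of length $\geq\Delta$ contributes $\Omega(e-s+1)$ work, deactivate every block pair of phase $t+3$ contained in $T[s+1..e-1]$. Since blocks of phase $t+3$ have length $(\sigma_{t+3})^2 = (\sigma_t)^{1/4}$, which is far smaller than $\Delta=\sigma_t/8$ as long as $\sigma_t>256$, this permanently removes $\Omega(e-s+1)$ positions from all later phases, so the total cost of all invocations of \cref{lem:compute3,lem:long:run} is $\Oh(n)$ comparisons and $\Oh(n)$ time, plus the $\Oh(n\log^\ast n)$ shared time of \cref{lem:LCE_undordered}. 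If a call to \cref{lem:compute3} ever reports $\sigma>\tilde\sigma$, or once $\tilde\sigma<256$, I would switch immediately to \cref{lem:divideruns} on every block pair of the current phase; each such pair has length polynomial in $\tilde\sigma$, and by the assumption on $\sigma$ this contributes $\Oh(n\log\sigma)$ comparisons and time. Along the way, each tail of length $\geq\Delta$, annotated with the position of its previous occurrence, is appended to a list $\mathcal L$; because \cref{lem:long:run} does $\Oh(n)$ work overall, the tails in $\mathcal L$ have total length $\Oh(n)$. Finally I would sort all reported runs and delete duplicates in $\Oh(n+r)=\Oh(n\log\sigma)$ time, after which the only runs still missing are those properly contained in some tail of $\mathcal L$.

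The remaining step is the copying pass of \cref{sec:copyruns}. For each position $i\in[1,n]$ I would maintain (i) the list of already-known runs starting at $i$, sorted by end position, and (ii) a pair $(e^\ast,d^\ast)$ recording, among all tails $T[s..e]\in\mathcal L$ with $s<i<e$, the one maximising $e$, together with the shift $d$ to its previous occurrence (both set to $0$ if no such tail exists); since $\mathcal L$ has total length $\Oh(n)$, both annotations are computed in $\Oh(n)$ time. Then I process $i=1,\dots,n$ in increasing order, inductively assuming each $j<i$ already carries its complete list of runs. If $(e^\ast,d^\ast)=(0,0)$, nothing is missing at $i$. Otherwise every missing run $\langle i,r_e,p\rangle$ satisfies $r_e<e^\ast$, and since $T[i-1..r_e+1]=T[i-d^\ast-1..r_e-d^\ast+1]$, the run $\langle i-d^\ast,r_e-d^\ast,p\rangle$ already appears in the completed list of $i-d^\ast$; I merge the two lists shortest-first, inserting the shifted copies and aborting as soon as an end position $\geq e^\ast$ would be inserted. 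The per-position cost is linear in the number of runs starting there, so by the runs theorem~\cite{Bannai2017} the whole pass costs $\Oh(n)$. Summing all contributions gives $\Oh(n\log\sigma)$ comparisons and $\Oh(n\log\sigma+n\log^\ast n)$ time.

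The main obstacle I expect is the correctness accounting rather than the complexity, which follows the template of \cref{sec:improved} once the $t+3$ deactivation scheme is in place: one must verify that each run of each length is reported exactly once across the phases — in particular that the interaction of the length threshold $\sigma_t$, the ``properly contained in a tail'' exception of \cref{lem:long:run}, and the discarding of runs touching block-pair boundaries (with the $T[1]$/$T[n]$ edge cases) leaves no gaps — and that the copying pass always finds the needed shifted run in an already-completed earlier list and halts at precisely the right end position, so that nothing is copied twice and the linear bound via the runs theorem holds.
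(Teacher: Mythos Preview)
Your proposal is correct and follows essentially the same approach as the paper: the same phase structure with the same parameters, the same use of \cref{lem:compute3} and \cref{lem:long:run}, the same $t{+}3$ deactivation rule on $T[s+1..e-1]$, the same fallback to \cref{lem:divideruns}, the list $\mathcal L$ of long tails, and the identical left-to-right copying pass bounded via the runs theorem. The only difference is expository emphasis, not substance.
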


\subsection{Final Improvement for Computing Runs}

The goal is now to adapt the final algorithm to detect all runs. We can no longer stop as soon as we detect a square, and we cannot
simply deactivate pairs of blocks that occur earlier. However, \cref{lem:fasterclassical} is actually capable of reporting all runs in
$T[i\cdot \tau.. i\cdot \tau+\ell_{i})$ and $T[(i+1)\cdot \tau-r_{i} .. (i+1)\cdot \tau)$ in $\Oh(\ell_{i}+r_{i})$ time, and we
do not need to terminate the algorithm if these fragments are not square-free.
Thus, we can indeed deactivate any block pair fully contained in $T[i\cdot \tau.. i\cdot \tau+\ell_{i})$ and $T[(i+1)\cdot \tau-r_{i} .. (i+1)\cdot \tau)$.
Next, we also deactivate block pairs fully contained in the longest prefix of $T[i_{j}+\tau.. i_{j+1}+\tau)$ equal to
$T[i_{j'}+\tau.. i_{j'+1}+\tau)$, for some $j'<j$. Denoting the length of this prefix by $\ell$,
we treat $T[i_{j}+\tau.. i_{j}+\ell)$ as a tail and add it to the list $\mathcal{L}$ (annotated with $i_{j'}$).
The total length of all fragments added to $\mathcal{L}$ is still $\Oh(n)$.

\begin{theorem}
\label{thm:runs}
Computing all runs in a length-$n$ string that contains $\sigma$ distinct symbols from a general unordered alphabet can be implemented in $\Oh(n\log \sigma)$ comparisons and $\Oh(n\log \sigma)$ time.
\end{theorem}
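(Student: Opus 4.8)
The plan is to reuse the runs algorithm underlying \cref{thm:runsstar} essentially verbatim, but to replace every call to \cref{lem:compute3} — whose only source of superlinear time is the $\Oh(n\log^\ast n)$ overhead of the LCE machinery from \cref{lem:LCE_undordered} — by the synchronizing construction of \cref{sec:finalimprov} that culminates in \cref{lem:synchro-facto}. Concretely, in each phase $t$ we keep the parameters $\sigma_t = 2^{2^{\lceil\log\log n\rceil - t}}$, $\Delta = \sigma_t/8$, $\tau = \lfloor\sqrt{\Delta}\rfloor$, and $\tilde{\sigma} = (\sigma_t)^{1/4}/\log(\sigma_t)$, partition $T[1..n]$ into blocks of length $(\sigma_t)^2 = \Theta(\Delta^2)$, and for every active block pair $T[x..y]$ run the preprocessing of \cref{sec:finalimprov}: remap the dense fragments to a linearly-sortable alphabet, build the two generalised suffix trees, compute for each chunk its longest prefix $T[i\cdot\tau..i\cdot\tau+\ell_i)$ and suffix $T[(i+1)\cdot\tau-r_i..(i+1)\cdot\tau)$ occurring in a dense fragment, assemble $T'$, the phrase trie $\Tphrase$, the string $\Tparse$ and the sparse suffix tree $\Sparse'$, and finally obtain a $(\Delta+\tau)$-approximate LZ factorisation via \cref{lem:synchro-facto}. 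As before, if a node degree or a dense-fragment symbol list ever exceeds $\tilde{\sigma}$ we conclude $\sigma > \tilde{\sigma}$ and fall back to \cref{lem:divideruns} on every block pair of the current phase in $\Oh(n\log\sigma)$ time (and we also fall back once $\tilde{\sigma}<256$). On each approximate factorisation we invoke \cref{lem:long:run} to report all runs of length at least $\sigma_t$ that are not properly contained in a tail, discarding those touching a block-pair boundary so that everything reported is a run of $T$.

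Two modifications turn square \emph{detection} into run \emph{reporting}. First, whereas \cref{sec:finalimprov} merely checks square-freeness of the fragments $T[i\cdot\tau..i\cdot\tau+\ell_i)$ and $T[(i+1)\cdot\tau-r_i..(i+1)\cdot\tau)$ before deactivating the block pairs they contain, we instead apply \cref{lem:fasterclassical} to these fragments — legitimate because they occur inside the remapped, linearly-sortable dense fragments — to report all runs inside them in $\Oh(\ell_i + r_i)$ time, again discarding the ones that touch an endpoint; only then do we deactivate. Second, for each long phrase $T[i_j..i_{j+1}+\tau)$ whose prefix $T[i_j+\tau..i_{j+1}+\tau)$ matches an earlier phrase prefix $T[i_{j'}+\tau..i_{j'+1}+\tau)$ over some length $\ell$, we treat $T[i_j+\tau..i_j+\ell)$ as a tail, add it to the global list $\mathcal{L}$ annotated with the previous occurrence $i_{j'}$, and deactivate the block pairs it contains. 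The runs properly contained in such $\mathcal{L}$-entries (as well as those properly contained in the tails produced by \cref{lem:long:run}, which are also placed in $\mathcal{L}$) are recovered exactly by the procedure of \cref{sec:copyruns}: after one radix sort to remove duplicates, we annotate each position with its list of already-computed runs and with a pair $(e^\ast,d^\ast)$ pointing at the previous occurrence of the $\mathcal{L}$-entry straddling it with largest right endpoint, then process $1,\dots,n$ in increasing order, copying each still-missing run $\tuple{i,e_r,p}$ with $e_r<e^\ast$ from the already-completed list of position $i-d^\ast$; since the total length of the fragments in $\mathcal{L}$ is $\Oh(n)$ and, by the runs theorem~\cite{Bannai2017}, there are fewer than $n$ runs, this costs $\Oh(n)$ overall.

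Correctness follows the pattern of \cref{lem:long:run}: via \cref{lem:longhelper:run}, any run of length in $[\sigma_t,(\sigma_t)^2)$ either meets a tail of length $\geq\Delta$ on its right half and is not properly contained in that tail, in which case \cref{lem:long:run} reports it in phase $t$; or it touches a block-pair boundary and is reported as part of a neighbouring pair; or it lies inside one of the dense-fragment fragments and is reported by \cref{lem:fasterclassical}; or it is properly contained in a fragment placed in $\mathcal{L}$ and is recovered by the copying step. Every run has length at most $n \leq (\sigma_0)^2$, so the phases cover all lengths, and the last phase (or the fallback to \cref{lem:divideruns}) handles the shortest runs. For the running time, the per-chunk and per-long-phrase work splits as in \cref{sec:finalimprov} into an $\Oh(\tilde{\sigma}\log m)$ term, summing to $\Oh(m\tilde{\sigma}\log m/\sqrt{\Delta}) = \Oh(n/(\sigma_t)^{1/4})$ per phase and hence $\Oh(n)$ across all phases by \cref{lem:polylog}, plus the $\Oh(\ell_i+r_i)$ reporting cost and the prefix-matching cost, both amortised by deactivating $\Omega(\cdot)$ block pairs three phases below whenever a chunk or long phrase contributes $\Omega(\sqrt[4]{\Delta})$; the comparisons are dominated by the $\Oh(n\log\sigma)$ of the fallback, all other comparisons being $\Oh(n)$.

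The main obstacle is the bookkeeping that guarantees each run is emitted exactly once while the amortisation still goes through: a single run may simultaneously lie inside a \cref{lem:long:run}-tail, a dense-fragment fragment, and an $\mathcal{L}$-entry, possibly across several phases, so one must fix a single owner for it. I would let the copying step of \cref{sec:copyruns} own every run properly contained in some $\mathcal{L}$-entry (choosing, as there, the entry maximising its right endpoint) and let direct reporting own all others, and then verify that the repeated discarding of boundary-touching runs — at block-pair boundaries of every phase, at the endpoints of the dense-fragment fragments, and in \cref{lem:conquerruns} — never permanently loses a run, since whenever a run is discarded in one place it is guaranteed to be reported (or copyable) in another. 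Everything else is a faithful recombination of the constructions in \cref{sec:finalimprov}, \cref{sec:runs} (in particular \cref{lem:longhelper:run} and \cref{lem:long:run}), and \cref{sec:copyruns}.
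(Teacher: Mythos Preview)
Your proposal is correct and follows essentially the same approach as the paper: replace \cref{lem:compute3} by the synchronizing construction of \cref{sec:finalimprov}, use \cref{lem:fasterclassical} to \emph{report} (rather than merely detect) runs inside the chunk prefix/suffix fragments before deactivating, and treat each long-phrase prefix match $T[i_j+\tau..i_j+\ell)$ as an $\mathcal{L}$-entry annotated with its earlier occurrence so that \cref{sec:copyruns} can recover the runs inside it. The paper's own proof is a terse paragraph spelling out exactly these two modifications; your last paragraph worrying about assigning a unique owner to each run is more cautious than necessary, since duplicates are handled by the radix sort already present in the runs pipeline, and no run is permanently lost because any run discarded at a boundary is either picked up in a neighbouring block pair or is properly contained in some $\mathcal{L}$-entry.
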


\bibliographystyle{abbrv}
\bibliography{references}

\end{document}